\newcommand{\alg}[1]{
	\begin{figure*}[htb!]
		\fbox{
			\begin{minipage}{0.95\textwidth}{
					#1	
				}
			\end{minipage}
		}
	\end{figure*}
}
\newtheorem{thm}{Theorem}[section]
\newtheorem{corollary}[thm]{Corollary}
\newtheorem{definition}[thm]{Definition}
\newtheorem{lemma}[thm]{Lemma}
\newtheorem{notation}[thm]{Notation}
\newtheorem{theorem}[thm]{Theorem}
\newcommand{\mA}{\mathcal{A}}
\newcommand{\oh}{\cn H}
\newcommand{\ohest}{\lowercase{\tilde{\mathtt{h}}}}
\newcommand{\eps}{\epsilon}
\renewcommand{\th}{^{\textrm{th}}}
\newcommand{\mD}{{D^*}}
\newcommand{\mC}{\mathcal{C}}
\newcommand{\mG}{\mathcal{G}}
\newcommand{\dmax}{d_{max}}
\newcommand{\du}{d_{ub}}
\newcommand{\cnt}{\text{max-count}}
\newcommand{\Sp}{\cn s_p}
\newcommand{\hSp}{\chat s_p}
\newcommand{\dc}{\textproc{decomp-cost}}
\def\loglog{{\rm \;log\,log\;}}
\newcommand{\cn}[1]{{\lowercase{\bar{\mathtt{#1}}}}}
\newcommand{\chat}[1]{{\lowercase{\hat{\mathtt{#1}}}}}
\newcommand{\davg}{d_{\mathsf{avg}}}
\def\HH{\cn H}
\newcommand\numberthis{\addtocounter{equation}{1}\tag{\theequation}}
\newcommand{\SH}{\hyperref[ch]{\textproc{\color{black}{\sf Sample-$H$}}}}
\newcommand{\SoC}{\hyperref[soc]{\textproc{\color{black}{\sf Sample-Odd-Cycle}}}}
\newcommand{\lps}{\hyperref[lps]{\textproc{\color{black}{\sf Sublinear-$\ell_p$-Sampler}}}}
\newcommand{\SaS}{\hyperref[sac]{\textproc{\color{black}{\sf Sample-a-Star}}}}
\newcommand{\NO}{$\textsf{NO}$ }
\newcommand{\YES}{$\textsf{YES}$ }
\newcommand{\MM}{$\texttt{cycle-gadget}$}
\newcommand{\CC}{$\texttt{CC-gadget}$}
\newcommand{\SSS}{$\texttt{star-gadget}$}
\newcommand{\SCG}{$\texttt{few-cycles-gadget}$}
\newcommand{\rmax}{r_{\max}}
\newcommand\mydots{\hbox to 1em{.\hss.\hss.\hss}}
\newcommand{\xs}{x^{*}}
\newcommand{\decomp}{\{O_{k_1}, \mydots, O_{k_q}, S_{p_1}, \mydots, S_{p_\ell}\}}
\newcommand{\comp}{ \max_{i\in[r]}\left\{cost(C_i) \right\}  \cdot \frac{\prod \cn c_{i} }{\HH} }
\newcommand{\SD}{\textproc{Set-Disjointness}}
\newcommand{\tSD}{t\textproc{-Set-Disjointness}}
\newcommand\compldots{\hbox to 1em{.\hss.\hss.}}
\newcommand{\shankhaInlineTodo}[1]{}
\DeclareMathOperator{\poly}{poly}
\definecolor{mygreen}{RGB}{20,140,80}
\definecolor{linkcolor}{RGB}{0,0,230}
\definecolor{mylightgray}{RGB}{230,230,230}
\definecolor{verylightgray}{RGB}{245,245,245}
\title{Towards a Decomposition-Optimal Algorithm for Counting and Sampling Arbitrary Motifs in Sublinear Time\footnote{A proceedings version of this paper is to appear in RANDOM 2021.}} 
\author{%
	 Amartya Shankha Biswas \thanks{CSAIL at MIT, \textit{asbiswas@mit.edu}. Big George Ventures Fund, MIT-IBM Watson AI Lab and Research Collaboration Agreement No. W1771646, NSF awards CCF-173380, CCF-2006664 and IIS-1741137.}
	\and Talya Eden \thanks{CSAIL at MIT, \textit{talyaa01@gmail.com}. This work was supported by the National Science Foundation under Grant No.  CCF-1740751, the Eric and Wendy Schmidt Fund for Strategic Innovation, and Ben-Gurion University of the Negev.}
	\and Ronitt Rubinfeld \thanks{CSAIL at MIT, \textit{ronitt@csail.mit.edu}. This work was supported 
		the NSF TRIPODS program (awards CCF-1740751 and DMS 2022448),
		NSF award CCF-2006664 and  by the Fintech@CSAIL initiative.}
}
\def\comments{1}
\newcommand{\shankhaOld}[1]{\textcolor{black!15}{#1}}
\newcommand{\talyaNote}[1]{\textcolor{blue!80!green!25}{#1}}
\newcommand{\talyaOld}[1]{\textcolor{black!15!}{#1}}
\newcommand{\Ttodo}[1]{\todo[color=blue!20,inline]{Talya: #1}}
\newcommand{\shankhaOld}[1]{}
\newcommand{\talyaNote}[1]{}
\newcommand{\talyaOld}[1]{}
\newcommand{\Ttodo}[1]{}
\newtheorem*{theorem*}{Theorem}
\begin{document}
	
	\maketitle
	
	\begin{abstract}
We consider the problem of sampling and approximately counting an arbitrary given motif $H$ in a graph $G$, where access to $G$ is given via queries: degree, neighbor, and pair, as well as uniform edge sample queries.
Previous algorithms for these tasks were based on a decomposition of $H$ into a collection of odd cycles and stars, denoted $\mD(H)=\decomp$. These algorithms were shown to be optimal for the  case where $H$ is a clique or an odd-length  cycle, but no other  lower bounds were known.

We present a new  algorithm  for sampling and approximately counting arbitrary motifs which, up to $\poly(\log n)$ factors,  is always at least as good as previous results, and for most graphs $G$ is strictly better. 
The main ingredient leading to this improvement is an improved  uniform  algorithm for sampling stars, which might be of independent interest, as it allows to sample vertices according to the $p$-th moment of the degree distribution.

Finally, we prove that this algorithm is \emph{decomposition-optimal} for decompositions that contain at least one odd cycle. 
These are the first lower bounds for motifs $H$ with a nontrivial decomposition, i.e., motifs that have more than a single component in their decomposition.
	\end{abstract}
	
	\newcommand\fullversion{1}

	\section{Introduction}
The problems of counting and sampling small motifs in graphs are  fundamental algorithmic problems with many applications.
Small motifs statistics are used for the study and characterization of  graphs  in
multiple fields, including biology, chemistry, social networks and many others (see e.g., ~\cite{shen2002network,lee2002transcriptional,eichenberger2004program,odom2004control, nelson2004oscillations,zhao2010communication, juszczyszyn2008local,paranjape2017motifs, topirceanu2016uncovering,tyson2010functional,ma2009defining}). 
From a theoretical perspective, the complexity of the best known classical algorithms for {exactly} enumerating small motifs such as cliques and paths of length $k$, grows exponentially with $k$~\cite{V09, Bjorklund09}.
On the more applied side, there is an extensive study of practical algorithms for approximate motif counting (e.g.,~\cite{tsourakakis2008fast,avron2010counting,PaTs12,ahmed2015efficient, JS17,danisch2018listing,BeraPS20,FoxRSWW20}). 
We study the problems of approximate motif counting and uniform sampling in the \emph{sublinear-time} setting, where sublinear is with respect to the size of the graph. We consider the \emph{augmented query model}, introduced by \cite{counting_stars_edge_sampling}, where the allowed queries are degree, neighbor and pair queries as well as uniform edge sample queries.\footnote{Degree queries return the degree of the queried vertex,  neighbor queries with index $i\leq d(v)$ return the $i\th$ neighbor of the queried vertex, pair queries return whether there is an edge between the queried pair of vertices, and uniform edge queries return a uniformly distributed edge in the graph.} We note that the model which  only allows for the first three types of queries is referred to as the \emph{general graph query model}, introduced by~\cite{KKR04}.

The problems of approximate counting and uniformly sampling of \emph{arbitrary motifs} of constant size in sublinear-time have seen much progress recently,
through the results of Assadi, Kapralov and Khanna~\cite{AKK19}, and Fichtenberger, Gao and Peng~\cite{Peng}. The algorithms of \cite{AKK19, Peng} both start by computing an optimal (in a sense that will be clear shortly) decomposition  of the motif $H$ into vertex-disjoint odd cycles and stars, defined next.

\textbf{A decomposition into odd cycles and stars.} 
A decomposition $D$ of a motif (graph) $H$ into a collection  of vertex disjoint small cycles and stars $\decomp$ is valid if all vertices of $H$ belong to either a star or an odd cycle in the collection. 
Each decomposition can be associated with a weight function $f_D:E\rightarrow \{0,\frac{1}{2},1\}$ which assigns weight $1$ to edges of its star components, weight $1/2$ to edges of its odd cycle components and weight 0 to all other edges in $H$. See figure 1 for an illustration.
Hence, each decomposition $\decomp$ has value 
$\rho(D)=\sum_{e\in H}{f_D(e)}=\sum_{i=1}^q k_i/2+\sum_{j=1}^{\ell} p_j$,
where throughout the paper $k_i$ and $p_j$ denote the length and number of petals in the $i^{th}$ cycle and $j\th$ star, respectively, in $\mD(H)$. 
For every $H$, its optimal decomposition value is $\rho(H)=\min_{D}\{\rho(D)\}$, and a decomposition $D$ is said to be \emph{optimal} for $H$ if $\rho(D)=\rho(H)$. We fix (one of) the optimal decomposition of $H$, and denote it by $\mD(H)$.
In~\cite{AKK19}, it is shown 
that an optimal decomposition of a motif $H$ can be computed in polynomial time in $|H|$.\footnote{We note that $\rho(H)$ is equal to the fractional edge cover value of $H$: the fractional edge cover value of a motif (graph) $H$ is the solution to the following minimization problem. Minimize $\sum_{e\in E}f(e)$ under the constraint that for every $v\in H$, $\sum_{e\ni v}f(e)\geq 1$. In~\cite{AKK19}, the decomposition is  computed by first computing an optimal fractional cover. However, as there exists a mapping between fractional edge covers to decompositions  which preservers their value, we choose to define $\rho(H)$ according to the minimal valid decomposition value.}


\begin{figure}[t]
	\centering
	\begin{subfigure}{.7\textwidth}	\label{fig:decomp}
		\centering
		\includegraphics[width=.9\textwidth]{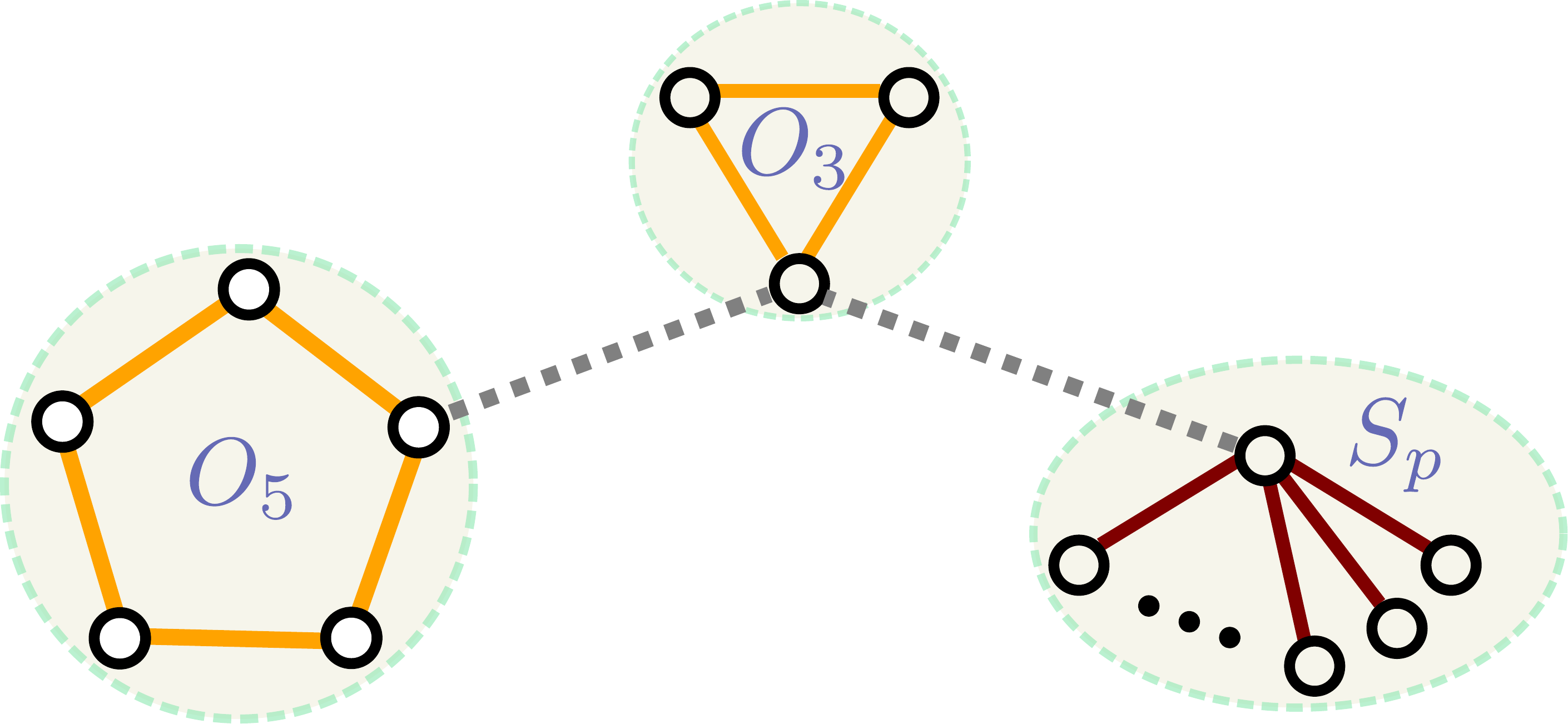}
	\end{subfigure}%
\caption{An example of an optimal decomposition of a motif $H$ into odd cycles and stars. The orange edges have weight $1/2$, the red edges have weight $1$, and the dotted edges have zero weight. }
\end{figure}

The algorithm in~\cite{Peng} has expected running time
\footnote{Throughout the paper, unless stated otherwise, the query complexity of  the mentioned  sublinear-time algorithms is the same as the minimum between their running time and $\min\{n+m, m\log n\}$. This is true since any algorithm  can simply query the entire graph and continue computation locally. 
	Querying the entire graph can either be performed by querying the neighbors of all vertices (which takes $O(n+m)$ queries), or by performing $m\log n$ uniform edge samples, which, with high probability, return all edges in the graph (note that we do not care about isolated vertices, as we assume the motif $H$ is connected). Hence, we focus our attention on the running time complexity.} 
 $O\left(\frac{m^{\rho(H)}}{\HH}\right)$ for the task of uniformly sampling a copy of $H$,
where $\HH$ is the number of copies of $H$ in $G$, and $m$ is the number of oriented edges\footnote{Throughout the paper we think of every edge $\{u,v\}$ as two oriented edges $(u,v)$ and $(v,u)$, and let $m$ denote the number of oriented edges.} in $G$. The algorithm in~\cite{AKK19} for the estimation task has the same complexity up to $\poly (\eps,|H|, \log n$) factors.

	\subsection{Our results}
We present improved upper and lower bounds for the tasks of estimating and sampling any arbitrary motif in a graph $G$ in sublinear time
(with respect to the size of $G$).
First, we give a new, essentially optimal, star-sampler for graphs. We also show that with few modifications, the star-sampler can be adapted to an optimal $\ell_p$ sampler, which  might be of independent interest.
Based on this sampler, as well as an improved sampling approach,
we present our main algorithm for sampling a uniformly distributed copy of any given motif $H$ in a graph $G$.
Our algorithm's complexity is parameterized by what we refer to as the \emph{decomposition-cost} of $H$ in $G$, denoted  $\dc(G,H, \mD(H))$. We further show that our motif sampling algorithm can be used to obtain a $(1\pm\eps)$-estimate of the motif at question (with an overhead of an  $O(1/\eps^2)$ factor).
 As we shall see, our result  is always at least as good as previous algorithms for these problems (up to a  $\log n\loglog n$ term), and greatly improves upon them for various interesting graph classes, such as random graphs and bounded arboricity graphs. 

We then continue to prove that for any motif whose optimal decomposition contains at least one odd cycle, this bound is  \emph{decomposition-optimal}:
we show that for every decomposition $D$ that contains at least one odd cycle, there exists a motif $H_{D}$ (with optimal decomposition $D$) and a family of graphs $\mG$ so that in order to sample a uniformly distributed copy of $H$ (or to approximate $\cn h$) in a uniformly chosen graph in $\mG$, the number of required queries is $\Omega(\min\{\dc(G, H, \mD(H)), m\})$ in expectation.

We start by describing the upper bound.

\subsubsection{Optimal star/$\ell_p$-sampler}
Our first contribution is an improved algorithm, \SaS,  for sampling a (single) star uniformly at random, and its variant for sampling vertices according to the $p\th$ moment.
For a vertex $v$, we let $\cn s_p(v)=\binom{d(v)}{p}$, if 
$d(v)\geq p$, and otherwise, $\cn s_p(v)=0$.
We let $\cn s_p=\sum_{v\in V} \cn s_p(v)$ denote the number of $p$-stars in the graph.
We will also be interested in the closely related value of the $p\th$ moment of the degree distribution,  $\cn \mu_p=\sum_{v \in V} d(v)^p$.

\begin{theorem}
	\label{lem:star-sampler}
	\sloppy
There exists a procedure, \SaS, that given query access to a graph $G$, and a constant factor estimates of $\cn s_p$, 
returns a uniformly distributed $p$-star in $G$.
The expected query complexity and running time of the procedure are  
$O\left(\min\left\{\frac{m\cdot n^{p-1}}{\Sp},\frac{m}{\Sp^{1/p}} \right\}\right)$ 
where $\cn s_p$ denotes the number of $p$-stars in $G$.
\end{theorem}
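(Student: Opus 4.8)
The plan is to design \SaS{} so that it first samples a uniform (oriented) edge $(u,v)$, treats $u$ as a candidate center of a $p$-star, and then tries to complete the star by sampling $p-1$ further neighbors of $u$ and running a rejection step so that the overall distribution of returned $p$-stars is uniform. The key point is that a single uniform edge query lands on a vertex $u$ with probability $d(u)/m$, which is proportional to $d(u)$ rather than to $\cn s_p(u)=\binom{d(u)}{p}$. To correct this bias we need to accept $u$ as a star-center with probability proportional to $\binom{d(u)}{p}/d(u)$, and among the $\binom{d(u)}{p}$ stars centered at $u$ we pick one uniformly at random (e.g.\ by sampling $p$ distinct neighbor-indices in $[d(u)]$, which we can do with degree and neighbor queries). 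Renormalizing the acceptance probability by a global factor involving the constant-factor estimate of $\cn s_p$ (and $n$, $m$) turns the success probability per attempt into $\Theta\!\bigl(\cn s_p / (m\cdot n^{p-1})\bigr)$ or, with a smarter coupling, $\Theta\!\bigl(\cn s_p^{1/p}/m \cdot (\text{something})\bigr)$; then the expected number of attempts is the reciprocal, giving the stated bound.

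In more detail, I would carry out the following steps. First, fix the sampling subroutine for a single attempt: draw a uniform oriented edge $(u,v)$; query $d(u)$; if $d(u)<p$ reject; otherwise, to get the $n^{p-1}$-type bound, accept $u$ with probability $\min\{1, \binom{d(u)}{p}/(c\cdot d(u)\cdot n^{p-1})\}$ for a suitable constant $c$ tied to the estimate $\widehat{\cn s_p}\approx\cn s_p$, and upon acceptance output a uniformly random $p$-subset of $N(u)$. Second, verify correctness: the probability that a fixed $p$-star $S$ centered at $u$ is output in one attempt is $\frac{d(u)}{m}\cdot\frac{1}{d(u)}\cdot\ldots$ — the $d(u)$ factors cancel, leaving a quantity independent of $S$, hence conditioned on \emph{some} star being output, it is uniform. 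Third, bound the per-attempt success probability from below by $\Omega\!\bigl(\cn s_p/(m n^{p-1})\bigr)$ using $\sum_u \binom{d(u)}{p} = \cn s_p$ and that the acceptance threshold is $\le 1$; the expected running time is (attempts) $\times$ (cost per attempt), and each attempt costs $O(p)=O(1)$ queries, giving $O(m n^{p-1}/\cn s_p)$. Fourth, to obtain the alternative $O(m/\cn s_p^{1/p})$ bound, replace the crude $n^{p-1}$ normalization with one that exploits the convexity/power-mean inequality: since $\cn s_p = \sum_u \binom{d(u)}{p}$ is maximized-per-edge-budget by concentrated degrees, one can show $\max_u \binom{d(u)}{p}/d(u) = O\!\bigl((\cn s_p)^{(p-1)/p}\bigr)$-type bounds, or alternatively bias the first edge sample toward high-degree endpoints, so that the acceptance normalization can be taken as $\Theta(\cn s_p^{1/p})^{\,p-1}\!/m$ rather than $n^{p-1}/m$; taking the minimum of the two strategies yields the claimed $\min\{\,\cdot\,,\,\cdot\,\}$.

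I would also handle the boosting/standard- boilerplate: since a single run of the attempt loop succeeds after $O(m n^{p-1}/\cn s_p)$ attempts \emph{in expectation}, Markov plus independent repetition (or a fixed cap on the number of attempts with a restart) gives the stated expected query complexity; the constant-factor estimate of $\cn s_p$ only affects constants in both the acceptance probability and the attempt budget, which is why it suffices as input. The main obstacle I anticipate is the second branch of the bound — getting from the naive $n^{p-1}$ factor down to $\cn s_p^{1/p}$ — because this requires a genuinely different sampling strategy (biasing toward high-degree vertices and carefully re-deriving both uniformity and the success probability), rather than a mere tightening of constants; ensuring that the rejection probabilities stay $\le 1$ and that uniformity is preserved under the biased edge distribution is the delicate part. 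A secondary subtlety is correctly treating small-degree vertices ($d(u)<p$) and the edge-cases where $\cn s_p$ is tiny, so that the $\min$ with $m$ (query the whole graph) kicks in gracefully; this is routine but must be stated.
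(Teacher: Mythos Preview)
Your approach for the $m\,n^{p-1}/\cn s_p$ branch is correct and essentially equivalent to what the paper does, though implemented differently: you perform explicit rejection on the center $u$ with acceptance probability $\propto \binom{d(u)}{p}/d(u)$ and then draw a uniform $p$-subset of $N(u)$, whereas the paper samples the $p-1$ additional neighbor-indices of $v_0$ uniformly from $[\du]$ (rather than from $[d(v_0)]$) and keeps the attempt only if all queries land on actual neighbors with increasing ids. Both schemes make every fixed $p$-star equally likely per attempt.

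Where you go off track is the second branch. You actually state the needed inequality --- $\max_u \binom{d(u)}{p}/d(u) = O\bigl(\cn s_p^{(p-1)/p}\bigr)$ --- but then retreat and conclude that a ``genuinely different sampling strategy (biasing toward high-degree vertices)'' is required. It is not. The single observation that drives the paper's proof is that $\dmax^p \le c_p\binom{\dmax}{p}\le c_p\,\cn s_p$, hence $\dmax\le (c_p\,\cn s_p)^{1/p}$. With this in hand one sets $\du=\min\{n,(c_p\,\hSp)^{1/p}\}$ once and runs \emph{one} algorithm: sample an edge $(v_0,v_1)$, then draw $p-1$ indices uniformly from $[\du]$ and query those neighbors of $v_0$. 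Since $\du\ge\dmax$, every star is hit with probability exactly $1/(m\,\du^{p-1})$, and the expected number of iterations is $m\,\du^{p-1}/\cn s_p$, which is simultaneously $O(m\,n^{p-1}/\cn s_p)$ and $O(m/\cn s_p^{1/p})$. The $\min$ in the theorem comes entirely from the $\min$ in the definition of $\du$; there is no separate algorithm, no biased edge sampling, and no second analysis. Your own rejection scheme would equally work once you replace $n^{p-1}$ by $\du^{p-1}$ in the acceptance denominator --- the ``obstacle'' you anticipate does not exist.
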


We note that  a constant factor estimate of  $\cn s_p$ can be obtained by invoking one of the algorithms in~\cite{ERS19-sidma, counting_stars_edge_sampling}, in expected query complexity  $\tilde{O}\left(\min\left\{\frac{m\cdot n^{p-1}}{\Sp},\frac{m}{\Sp^{1/p}} \right\}\right)$. Therefore, if such an estimate is not known in advance, then it could be computed, with probability at least $2/3$,   by only incurring a $\log n$  factor to the expected time complexity.

We will also show a variant of \SaS, denoted \lps, that  gives an optimal $\ell_p$-sampler for any integer $p\geq 2$ in sublinear time.
That is, \lps\ allows to sample according to the $p\th$ moment of the degree distribution, so that every vertex $v\in V$ is returned by it with probability $d(v)^p/\cn \mu_p$. 
The question of sampling according to the $p\th$ moment for various values of $p$ has been studied extensively in the streaming model where $\ell_p$ samplers have found numerous applications, see, e.g., the recent survey by Cormode and Hossein~\cite{cormode2019p} and the references therein. Therefore we hope it could find applications in the sublinear-time setting that go beyond subgraph sampling.

\begin{theorem}\label{thm:lpsamp}
There exists  an algorithm, \lps, that returns a vertex $v\in V$, so that each $v\in V$ is returned with probability $d(v)^p/\cn \mu_p$. The expected running time of the algorithm is  	$O\left(\min\left\{\frac{m\cdot n^{p-1}}{\cn \mu_p},\frac{m}{\cn \mu_p^{1/p}} \right\}\right)$ .
\end{theorem}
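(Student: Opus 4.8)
The plan is to realize \lps\ as the ``few modifications'' of \SaS\ promised above, and the cleanest way to see why it is correct is via the elementary identity $x^p=\sum_{k=1}^{p}c_k\binom{x}{k}$, valid for all integers $x\ge 0$, where $c_k:=k!\,S(p,k)$ with $S(p,k)$ the Stirling numbers of the second kind (so $c_1=1$, $c_p=p!$, and all $c_k=O(1)$ since $p=O(1)$); this is just the change of basis between $x^p$ and the falling factorials $(x)_k=k!\binom{x}{k}$. Setting $x=d(v)$ and summing over $v$ yields $\cn \mu_p=\sum_{k=1}^{p}c_k\,\cn s_k$ and, per vertex, $\frac{d(v)^p}{\cn \mu_p}=\sum_{k=1}^{p}\frac{c_k\,\cn s_k}{\cn \mu_p}\cdot\frac{\binom{d(v)}{k}}{\cn s_k}$, i.e.\ the $\ell_p$-distribution on vertices is exactly the mixture, with weights $c_k\,\cn s_k/\cn \mu_p$, of the $p$ distributions ``center of a uniformly random $k$-star'', $k=1,\dots,p$ (terms with $\cn s_k=0$ carry weight $0$ and are simply skipped).

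Given this, the algorithm is immediate: obtain constant-factor estimates of $\cn s_2,\dots,\cn s_p$ (via \cite{ERS19-sidma,counting_stars_edge_sampling}, exactly as in the remark following Theorem~\ref{lem:star-sampler}; note $\cn s_1=m$ is known), sample an index $k\in\{1,\dots,p\}$ with probability proportional to $c_k\,\cn s_k$, invoke \SaS\ with parameter $k$ to draw a uniformly random $k$-star (for $k=1$ this is just a uniform edge query), and return its center. By the mixture identity above the returned vertex equals $v$ with probability exactly $d(v)^p/\cn \mu_p$, which is the claimed guarantee.

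For the running time, branch $k$ is entered with probability $c_k\,\cn s_k/\cn \mu_p$ and a call to \SaS\ with parameter $k$ runs in expected time $O(\min\{m\,n^{k-1}/\cn s_k,\ m/\cn s_k^{1/k}\})$ by Theorem~\ref{lem:star-sampler}, so branch $k$ contributes $O(\min\{c_k\,m\,n^{k-1}/\cn \mu_p,\ c_k\,m\,\cn s_k^{1-1/k}/\cn \mu_p\})$ in expectation. Since $k\le p$, since $\cn s_k\le c_k\cn s_k\le\cn \mu_p$, since $t\mapsto t^{1-1/k}$ is increasing, and since $1/k\ge 1/p$, this is $O(\min\{m\,n^{p-1}/\cn \mu_p,\ m/\cn \mu_p^{1/p}\})$; summing over the $p=O(1)$ branches (all $c_k=O(1)$) gives the stated bound.

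The one delicate point — and the step I expect to carry the real weight — is exactness: we only have constant-factor estimates of the $\cn s_k$, so the mixture weights $c_k\cn s_k/\cn \mu_p$ are not known exactly, which a priori biases the output. I would resolve this the same way \SaS\ resolves the analogous issue for its own approximate count: by inspecting the proof of Theorem~\ref{lem:star-sampler}, a single ``attempt'' of \SaS\ with parameter $k$ succeeds with probability $\cn s_k/Q_k$ for an explicitly known normalizer $Q_k$ (the approximate count is used only to budget the number of attempts), so instead of pre-selecting $k$ by its unknown weight one selects $k$ with probability proportional to the \emph{known} quantity $c_k Q_k$, runs one \SaS$(k)$-attempt, outputs the center on success and restarts otherwise. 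A one-line computation then shows the per-round output probability is proportional to $\sum_k c_k\binom{d(v)}{k}=d(v)^p$ (exactly), while the per-round success probability is $\cn \mu_p/\sum_k c_k Q_k$, which keeps the expected running time at the bound above. Checking that \SaS's internal ``attempt'' really has this known-normalizer structure uniformly over all $k\le p$ (equivalently, that running \SaS's mechanism with the polynomial $x^p$ in place of $\binom{x}{p}$ and a constant-factor estimate of $\cn \mu_p$ in place of one of $\cn s_p$ leaves the analysis intact) is where I would spend most of the effort.
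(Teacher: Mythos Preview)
Your approach is correct, but it is considerably more elaborate than what the paper does. The paper's proof is exactly the parenthetical you tuck into your last sentence: run \SaS's mechanism with the monomial $x^p$ in place of $\binom{x}{p}$ and a constant-factor estimate $\hat\mu_p$ in place of $\hat s_p$. Concretely, set $\du=\min\{n,\hat\mu_p^{1/p}\}$ (valid since $d_{\max}^p\le\mu_p$), sample a uniform edge $(v_0,v_1)$, draw $p-1$ indices uniformly in $[\du]$, and if all land in $[d(v_0)]$ return $v_0$; otherwise retry. One line gives $\Pr[v\text{ returned in a round}]=\frac{d(v)}{m}\cdot\big(\frac{d(v)}{\du}\big)^{p-1}=\frac{d(v)^p}{m\,\du^{p-1}}$, so the conditional output distribution is exactly $d(v)^p/\mu_p$ and the expected number of rounds is $m\,\du^{p-1}/\mu_p=O(\min\{m\,n^{p-1}/\mu_p,\ m/\mu_p^{1/p}\})$. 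No Stirling numbers, no mixture over $k$, no per-$k$ estimates, and no exactness issue to resolve.

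Your route does work: the identity $x^p=\sum_{k}c_k\binom{x}{k}$ gives the correct mixture, the running-time bound per branch is fine (using $c_k\cn s_k\le\mu_p$ and $\|d\|_p\le\|d\|_k$), and your rejection-with-known-normalizers fix restores exactness. The trade-off is that you now need estimates $\hat s_2,\ldots,\hat s_p$ rather than a single $\hat\mu_p$, and you must be a little careful when some $\cn s_k$ is tiny (the estimation cost $m/\cn s_k^{1/k}$ is not automatically dominated by $m/\mu_p^{1/p}$ unless you reuse a common $\du$ based on $\hat\mu_p$ across all $k$). None of this is wrong, but all of it is avoidable: the paper's two-line modification of \SaS\ (drop the distinctness/ordering requirement on the sampled neighbors, swap $\hat s_p$ for $\hat\mu_p$) gets there directly.
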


Observe that for every value of $p$, $\cn s_p < \cn \mu _p.$
 Furthermore, observe that  $m$ and $\cn \mu_p^{1/p}$  are simply the $\ell_1$ and $\ell_p$ norms of the degree distribution of $G$. Therefore,  it holds that $\cn \mu_p^{1/p}$ is  smaller than $m$ (and could be as small as $m/n^{1-1/p}$). 
 Therefore,  $ \cn \mu_p^{1/p}<m \Leftrightarrow \mu_p^{p-1/p}< m^{p-1} $.
 and it follows that 
 \begin{align} \label{eqn:sp_m}
 m\cdot \min\left\{n^{p-1},\cn s^{(p-1)/p}\right\} \leq
  m\cdot \cn s_p^{(p-1)/p} <
  m\cdot \cn \mu_p^{(p-1)/p} \leq 
   m \cdot m^{p-1} =
    m^p. 
 \end{align}
\sloppy
 Hence, not accounting for the $O(\log n \loglog n)$ term, 
 the expected complexity  $\tilde{O}(m\cdot \min\{n^{p-1}, \cn s_p^{(p-1)/p}\}/\cn s_p)$
of  \SaS\ strictly improves upon the    $O(m^p/\cn s_p)$ expected complexity of the star-sampling algorithm by~\cite{Peng}.
 Accounting for that term, our algorithm is preferable when either $\davg=\omega(\log n\loglog n)$ or $m/\cn s_p^{1/p}=\omega( \log n)$.
 
Furthermore, the complexity of \SaS\  matches the complexities of the star approximation algorithms by~\cite{GRS11, counting_stars_edge_sampling}, thus proving that uniformly sampling and approximately counting stars in the augmented model have essentially the same complexity.
Finally, the  construction of the lower bound for the
estimation variant by~\cite{GRS11} proves that \SaS\  and \lps\ are essentially optimal.

\subsubsection{An algorithm  for sampling and estimating  arbitrary motifs}

Given the above star sampler, we continue to describe our main contribution: an algorithm, \SH, that for any graph $G$ and given motif $H$,
outputs a uniformly distributed copy of $H$ in $G$.

To sample a copy of $H$ we first sample copies of all basic components in its decomposition $\mD(H)$, and then check if they can be extended to a copy of $H$ in $G$. Therefore, it will be useful to define the costs of these sampling operations.

\begin{notation}[Basic components, counts and costs]\label{def:costs}
Let $H$ be a motif, and let $\mD(H)=\decomp$ be an optimal decomposition of $H$.
We refer to the odd cycles and stars in $\mD(H)$ as the \emph{basic components} of the decomposition (or sometimes, abusing notation, of $H$). We use the notation $\{C_i\}_{i\in [r]}$, to denote the set of all components in $\mD(H)$, $\{C_i\}_{i\in [r]}=\mD(H)$, where $r=q+\ell$.

For every basic component $C_i$ in $\mD(H)=\{C_i\}_{i\in[r]}$, we denote  the number of copies of $C_i$ in $G$ as $\cn C_i$ and refer to it as the \emph{count} of $C_i$. Similarly, $\cn o_{k}$ and $\cn s_p$ denote the number of copies of length $k$ odd cycles and $p$-stars in $G$. respectively. 

We also define the \emph{sampling cost} (or just \emph{cost} in short) of $C_i$ to be:
\[
cost(C_i) = 
\begin{cases}
m^{k/2}/\cn o_{k} &  C_i=O_{k}\\
\min\left\{ \frac{m\cdot n^{p-1}}{\Sp}, \frac{m}{\Sp^{1/p}} \right\} & C_i=S_p
\end{cases}.
\]
\end{notation}

Observe that indeed, by Theorem~\ref{lem:star-sampler}, sampling a single $p$-star in $G$ takes $cost(S_p)=\min\left\{ \frac{m\cdot n^{p-1}}{\Sp}, \frac{m}{\Sp^{1/p}} \right\}$ queries in expectation, and by~\cite[Lemma 3.1]{Peng}, sampling a single $O_k$  odd cycle takes $cost(O_k)=m^{k/2}/\cn o_k$ queries in expectation.

\begin{notation}[Decomposition-cost]\label{def:decomp-cost}
	For a motif $H$, an optimal decomposition $\mD(H)$ of $H$, and a graph $G$, the \emph{decomposition cost} of $H$ in $G$, denoted $\dc(G,H, \mD(H))$ is 
	\[
	\dc(G,H,\mD(H))=\comp \;.
	\]
	Note that the motif $H$ determines the counts of $\cn h$ and its decomposition $\mD(H)$ determines what are the basic component counts in $G$ that are relevant to the sampling cost.
\end{notation}

\begin{restatable}{theorem}{ubMain}
	\label{thm:intro_ub}
	Let $G$ be a graph over $n$ vertices and $m$ edges, and let  $H$ be a motif such that $\mD(H)=\decomp=\{C_i\}_{i\in[r]}$. 
		There exists an algorithm, \SH, that returns a  copy of $H$ in $G$.
	With probability at least $1-1/\poly(n)$, the returned copy is uniformly distributed in $G$.
	The expected query complexity of the algorithm is
	$$O\left( \min\left\{\dc(G,H,\mD(H)),m\right\}\right)\cdot {\log n\loglog n}.$$
\end{restatable}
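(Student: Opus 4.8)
The plan is to carry out the scheme sketched just above the statement. First, \SH\ computes constant‑factor estimates of the counts $\cn c_1,\dots,\cn c_r$ of \emph{all} basic components of $\mD(H)$, via the sublinear star‑counting algorithms of~\cite{ERS19-sidma, counting_stars_edge_sampling} for the star components $S_{p_j}$ and the odd‑cycle counting routine accompanying~\cite[Lemma~3.1]{Peng} for the cycle components $O_{k_i}$, each amplified to be correct with probability $1-1/\poly(n)$. Then \SH\ runs rejection sampling in \emph{rounds}: in each round, for every component $C_i$ it draws one uniformly random copy of $C_i$ in $G$ --- via \SaS\ (Theorem~\ref{lem:star-sampler}) when $C_i$ is a star, and via~\cite[Lemma~3.1]{Peng} when $C_i$ is an odd cycle, feeding in the precomputed count estimate --- and then picks, at no query cost, a uniformly random labelling of that copy, i.e.\ a uniformly random injective homomorphism of $C_i$ into $G$ realising it. Since the components of $\mD(H)$ are vertex‑disjoint and cover $V(H)$, the $r$ drawn labelled copies determine a single map $\psi\colon V(H)\to V(G)$; using $O(|H|^2)$ pair queries we test whether $\psi$ is an injective homomorphism $H\to G$, i.e.\ whether $\psi$ is injective and every edge of $H$ (in particular every weight‑$0$ edge) is present in $G$. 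On success we output the image copy; otherwise we start a new round. Finally, the whole procedure is wrapped so that it never makes more than $O(\min\{n+m,m\log n\})$ queries: should that bound be about to be exceeded, \SH\ instead reads all of $G$ within that budget (as in the footnote of Section~1) and returns a uniform copy of $H$ --- or reports that there is none --- by brute force.

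\textbf{Correctness.} I would condition on the event $\mathcal E$ that all $r=O(1)$ count estimates are within a constant factor, which holds with probability $1-1/\poly(n)$ by a union bound. On $\mathcal E$, Theorem~\ref{lem:star-sampler} and~\cite[Lemma~3.1]{Peng} guarantee that each component copy is sampled exactly uniformly, so each round produces a uniformly random tuple in $\prod_{i\in[r]}\mathrm{InjHom}(C_i,G)$. The restriction map $\phi\mapsto(\phi|_{C_1},\dots,\phi|_{C_r})$ is an \emph{injection} $\mathrm{InjHom}(H,G)\hookrightarrow\prod_i\mathrm{InjHom}(C_i,G)$, again because the $C_i$ cover $V(H)$, and ``glue and test'' accepts a tuple precisely when it lies in the image of this injection; hence conditioned on success the sampled tuple --- and therefore $\psi$ --- is uniform over $\mathrm{InjHom}(H,G)$, so the output is a uniformly random copy of $H$. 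In the budget‑exceeded branch the output is uniform by construction. Thus the returned copy is uniform with probability at least $1-1/\poly(n)$.

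\textbf{Running time.} From the counting above, the per‑round success probability on $\mathcal E$ is $p=\frac{|\mathrm{Aut}(H)|}{\prod_i|\mathrm{Aut}(C_i)|}\cdot\frac{\HH}{\prod_i\cn c_i}=\Theta(\HH/\prod_i\cn c_i)$, the hidden constant depending only on $|H|$, so the number of rounds is geometric with mean $1/p=\Theta(\prod_i\cn c_i/\HH)$. Each round costs $O(\sum_i cost(C_i))+O(1)=O(\max_i cost(C_i))$ queries in expectation (using $r=O(1)$, Theorem~\ref{lem:star-sampler}, and~\cite[Lemma~3.1]{Peng}), so by Wald's identity the expected total cost of the rounds is $\frac1p\cdot O(\max_i cost(C_i))=O(\max_i cost(C_i)\cdot\prod_i\cn c_i/\HH)=O(\dc(G,H,\mD(H)))$. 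The same injection gives $\prod_i\cn c_i=\Omega(\HH)$, hence $\dc(G,H,\mD(H))=\Omega(\max_i cost(C_i))$, so the initial count‑estimation step --- which costs $\sum_i cost(C_i)$ up to the $\log n\loglog n$ factor incurred by running and amplifying the counting subroutines~\cite{ERS19-sidma, counting_stars_edge_sampling, Peng} --- is absorbed into $O(\dc(G,H,\mD(H)))\cdot\log n\loglog n$. Finally, the wrapper forces the actual cost to be at most $2\min\{(\text{cost of the unwrapped scheme}),\,O(\min\{n+m,m\log n\})\}$, which also caps the rare $\overline{\mathcal E}$ branch; taking expectations and using concavity of $\min(\cdot,B)$ gives expected cost $O(\min\{\dc(G,H,\mD(H))\cdot\log n\loglog n,\ \min\{n+m,m\log n\}\})=O(\min\{\dc(G,H,\mD(H)),m\})\cdot\log n\loglog n$, where we use that $H$ is connected so $\min\{n+m,m\log n\}=O(m\log n)$.

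\textbf{Main obstacle.} The delicate part will be the combinatorial bookkeeping behind uniformity: exhibiting the restriction map as a genuine injection $\mathrm{InjHom}(H,G)\hookrightarrow\prod_i\mathrm{InjHom}(C_i,G)$, verifying that ``glue and test'' accepts exactly its image (including the injectivity check), and tracking the $|\mathrm{Aut}(\cdot)|$ constants so that $p=\Theta_H(\HH/\prod_i\cn c_i)$ --- together with its consequence $\prod_i\cn c_i=\Omega(\HH)$, which is what lets the count‑estimation cost and the $\min\{\cdot,m\}$ truncation fold cleanly into the stated bound. The Wald/geometric‑tail estimate and the wrapper accounting are then routine.
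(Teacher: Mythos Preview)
Your proposal is correct and follows essentially the same approach as the paper: repeatedly sample one uniform copy of each decomposition component, test via $O(|H|^2)$ pair queries whether the tuple extends to a copy of $H$, and wrap the whole thing in a query budget of $O(m)$. Two minor deviations worth noting: the paper's odd-cycle sampler (\SoC, via \cite[Lemma~3.3]{Peng}) needs only an estimate $\hat m$ of $m$ and \emph{not} an estimate of $\cn o_k$, so the paper pre-estimates only $m$ and the star counts $\cn s_{p_i}$ --- your extra cycle-count estimation step is harmless but unnecessary; and the paper handles the uniformity/multiplicity bookkeeping by the ad-hoc device of ordering repeated components by id (a footnote in the proof), rather than your cleaner $\mathrm{InjHom}/\mathrm{Aut}$ accounting.
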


In 
\ifnum\fullversion=1
\ref{sec:from_sampling_to_estimating}, \ref{thm:samp_to_est}, 
\else
the full version
\fi
we prove that with slight modifications to the sampling algorithm we  can obtain a $(1\pm\eps)$-approximation algorithm for $\HH$, 
with the same  expected query complexity and running time
up to a multiplicative factor of $O(1/\eps^2)$.

\textbf{Comparison to previous bounds.}
We would like to compare our algorithm's expected complexity stated in Theorem~\ref{thm:intro_ub}, to the expected complexity $O\left(\frac{m^{\rho(H)}}{\cn h}\right)$ of the counting and sampling algorithms by~\cite{AKK19} and~\cite{Peng}, respectively,  where recall that for an optimal decomposition $\mD(H)=\decomp$ of $H$, $\rho(H)=\sum_{i\in [q[} k_i/2+\sum_{i\in[\ell]} p_i$.

Recalling Equation~\ref{eqn:sp_m}, and plugging in the costs of the basic components and the decomposition cost, defined in
Notations~\ref{def:costs} and~\ref{def:decomp-cost}, respectively, we get that 	for any graph $G$ and motif $H$,
\begin{align*} 
\dc(G,H,\mD(H))&=\comp 
\\& = \max_{i\in[r]}\{cost(C_i)\}\cdot \frac{\prod_{i\in[q]} \cn o_{k_i}\cdot \prod_{i\in[\ell]}  \cn s_{p_i}}{\cn h} 
\\ & \leq 
 \frac{\prod_{i\in[q]} m^{k_i/2}\cdot \prod_{i\in[\ell]} m\cdot  (\min\{n^{p_i-1},\cn s_{p_i}^{(p_i-1)/p_i}\})}{\HH}  \\
 &  < \frac{\prod_{i\in[q]} m^{k_i/2}\cdot \prod_{i\in[\ell]} m^p }{\HH}=\frac{m^{\rho(H)}}{\HH}, 
\end{align*}
Therefore, as long as $\mD(H)$ contains at least one star, and not accounting for the $O(\log n\loglog n)$ term, our algorithm is  preferable to the previous one, as we save a factor of at least $\davg^{p-1}$ for each $p$-star in $\mD(H)$.

Moreover, the  complexity of our sampling  algorithm is  parameterized by the \emph{actual} counts of the basic components $ O_{k_1}, \mydots,  O_{k_q},  S_{p_1}, \mydots, S_{p_\ell}$ of the graph $G$ at hand, rather than by the maximal possible counts of these components, respectively $m^{k_1/2}, \ldots m^{k_q/2}, m^{p_1}, \ldots m^{p_{\ell}}$, as is in  previous algorithms.
For example, if the max component cost is due to the odd cycle of length $k_1$, we get 
\[
O^*\left(\frac{m^{k_1/2} \cdot \cn o_{k_2} \cdot  \mydots \cdot \cn o_{k_q}\cdot \cn s_{p_1}\cdot \mydots \cdot \cn s_{p_{\ell}}}{\cn h}
\right)  \;\;\;\;\text{ vs. }\;\;\;\; O^*\left(\frac{m^{k_1/2} \cdot  m^{k_2/2} \mydots \cdot m^{k_q/2}\cdot  m^{p_1}\cdot \mydots \cdot m^{p_{\ell}}}{\cn h}
\right)
\]
of the previous algorithms.
Importantly, this parameterization arises \emph{only} in the analysis, while the algorithm itself is  very simple, and does not depend on prior knowledge of the actual values of these counts.

\textbf{Improved results for various graph classes.}
Our  parameterization immediately implies improved results in various interesting graph classes. 
For example, for sparse Erd\H{o}s-R\'e{n}yi random graphs $\mG(n,d/n)$, the expected count of $k$-odd cycles is  $\Theta(d^k)$, and of $p$-stars is $\Theta(n\cdot d^p$).
Hence, if we consider for example a motif $H$ that is composed of a triangle connected to a 5-petals star, our algorithm has expected complexity 
$O^*\left(\frac{m^{2.5}\cdot d^{4}}{\cn h} \right)$, while the  algorithms in~\cite{AKK19, Peng} have expected complexity $O(\frac{m^{6.5}}{\cn h})$.
In another example, for graphs of bounded arboricity\footnote{The arboricity of a graph $G$ is the minimal number of forests required to cover the edge set of $G$.} $\alpha$, the number of $k$-odd cycles is upper bounded\footnote{	
	In a graph $G$ with arboricity $\alpha$ there exists an acyclic ordering of the graph's vertices, such that each vertex has $O(\alpha)$ vertices exceeding it in the order. We can attribute each $k$-cycles in the graph to its first vertex in that ordering. It then holds that each vertex has at most $(d^+(v))^2 \cdot m^{(k-3)/2}$ attributed cycles, and it follows that $\cn O_k \leq \alpha \cdot m^{(k-1)/2}$, where $d^+(v)$ is the number of neighbors of $v$ that exceed it in the aforementioned ordering.} by $\alpha\cdot m^{(k-1)/2}$.
Therefore, in the case that $G$ has, e.g., constant arboricity, we save a multiplicative factor of $\sqrt m^{q}$ or $\sqrt m^{q-1}$, depending on whether the max cost component is due to  a star or an odd cycle, respectively (recall that $q$ is the number of odd cycles in the decomposition).

\talyaOld{Finally, while our bound might be hard to parse, we prove that indeed, if all the information we have on $H$ is its decomposition, then there exists an almost matching lower bound, up to a single $\sqrt m$ term. 
}\Ttodo{I'm not sure about this $\sqrt m$ term}

	\subsubsection{Lower bound for estimating and sampling general motifs}
In
\ifnum\fullversion=1 
Section~\ref{sec:lb},
\else
the full version,
\fi
 we prove the following lower bound, which states that for {every} decomposition $D$ that contains at least one odd cycle component and 
every realizable  value of $\dc$,
there exists a motif $H_D$ such that $D$ is an optimal decomposition of $H_D$, and for which our upper bound is optimal.

\begin{restatable}{theorem}{dcLB}
	\label{thm:lb_dc}
	For any decomposition $D$ that contains at least one odd cycle, 
	and for every $n$ and $m$ and realizable value $\textproc{dc}$ of $\dc$, there exists a motif $H_D$, with optimal decomposition  $D$, and a family of graphs $\mG$ over $n$ vertices and $m$ edges, for which the  following holds.
	For every $G\in \mG$, $\dc(G,H_D,D)=\textproc{dc}$, and the expected query complexity of sampling (whp) a uniformly distributed copy of $H_D$ in a uniformly chosen $G\in G$ is $\Omega(\textproc{dc})$.
\end{restatable}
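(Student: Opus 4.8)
The plan is to invoke Yao's principle: for the given decomposition $D$, for $n$, $m$, and the prescribed target value $T:=\textproc{dc}$, I will exhibit a distribution $\mG$ over $n$-vertex, $m$-edge graphs, each satisfying $\dc(G,H_D,D)=T$, on which every \emph{deterministic} algorithm that reads only $o(T)$ query answers fails, with probability bounded away from zero over $G\sim\mG$, to output a (whp) uniformly distributed copy of $H_D$. The motif $H_D$ is built from $D=\decomp$ by laying its $r=q+\ell$ basic components out in a path and joining consecutive components by a single \emph{bridge} edge attached to an already-covered vertex of each (a cycle vertex, or the centre of a star). Every bridge endpoint has fractional cover weight $\ge 1$ from its own component, so each bridge edge may be assigned weight $0$; hence $\rho(H_D)=\rho(D)$, $D$ is an optimal decomposition of $H_D$, and $H_D$ is connected. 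A short case analysis of how the vertices of $H_D$ can be repartitioned into odd cycles and stars shows $D$ is the \emph{unique} optimal decomposition, so that the quantity against which we want a matching lower bound really is $\dc(G,H_D,D)$.

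The query-hardness is injected through a single \emph{expensive} component, which the hypothesis $q\ge 1$ lets us take to be an odd cycle $C_{i^*}=O_{k}$ (if in the prescribed regime the bottleneck cost $\max_i cost(C_i)$ is attained at a star $S_p$, we instead plant inside the $S_p$-gadget the hard instance underlying the $\ell_p$-sampling lower bound of~\cite{GRS11}, and the rest is unchanged; so assume the cycle is the bottleneck). For $O_k$ we use a \emph{cycle gadget}: a subgraph on $\Theta(m)$ edges containing $\cn o_k$ copies of $O_k$, only a $\gamma$-fraction of which are \emph{live}, i.e.\ can be completed through the bridge structure into a copy of $H_D$; the remaining ("decoy") copies are locally indistinguishable from the live ones. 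The other components are realized by simpler gadgets (star gadgets, and further cycle gadgets serving only as decoy multipliers), arranged so that the bridges glue together precisely the tuples consisting of one live copy of each component, and so that distinct gadgets are stochastically independent in $\mG$. This yields $\cn c_i=(\text{live}+\text{decoy copies of }C_i)$ but $\HH=\prod_{i\in[r]} b_i$, where $b_i$ is the number of live copies of $C_i$ and $b_{i^*}/\cn o_k=\gamma$.

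To certify the lower bound we argue that, since every copy of $H_D$ uses a live copy of each component and "liveness" can only be verified by exhibiting the matching copies of the other components, an algorithm producing a (whp) uniform copy of $H_D$ must in effect sample tuples of component-copies until it lands on an all-live one, which happens with probability $\prod_i b_i/\cn c_i=\HH/\prod_i\cn c_i$ per tuple, and sampling a tuple costs $\Omega(\max_i cost(C_i))$ because the expensive gadget must be probed. The clean way to make this rigorous — and to obtain the product $\prod_i\cn c_i/\HH$ rather than a single-component bound — is a reduction from a $t$-party Set-Disjointness--type problem (\kSD): the $j$-th party holds a set indexing the copies produced by the $j$-th gadget, the promise is that the sets are pairwise disjoint or share a single common index, and a copy of $H_D$ exists iff the sets $t$-wise intersect, its location revealing that index; a $q$-query algorithm for our task then yields an $O(q\log n)$-bit protocol, and the communication lower bound for the promise problem forces $q=\Omega(T)$. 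Uniformity of the returned copy is essential here: without it the algorithm could return whichever copy it stumbles on, defeating the reduction. Finally, the decoy counts, $\gamma$, and the $b_i$'s (all constrained by $n$ and $m$) are tuned so that $\dc(G,H_D,D)=\max_i cost(C_i)\cdot\prod_i \cn c_i/b_i$ equals exactly the prescribed value $T$ — the word "realizable" in the statement being precisely the assertion that $n$ and $m$ admit such a choice — and capping at $m$ (the whole graph can always be read) gives the $\Omega(\min\{\dc,m\})$ form.

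The main obstacle will be the simultaneous bookkeeping: ensuring the planted distribution has decomposition cost landing \emph{exactly} on the target, that $D$ stays the optimal (indeed unique optimal) decomposition of $H_D$ inside these adversarial graphs — so that no cheaper decomposition, hence no cheaper algorithm, is available — and, most delicately, designing the cycle gadget so that \emph{uniform edge samples}, the extra power of the augmented model over the general one, do not leak the live copies; this is what forces the decoy cycles to dominate the edge mass of the gadget, and is the point where a reduction from a carefully chosen $t$-party Set-Disjointness variant, rather than an ad hoc hidden-structure argument, is the safer route.
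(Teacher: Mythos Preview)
Your high-level plan---build $H_D$ by bridging the components of $D$, plant gadgets for each component, and certify hardness via a communication-complexity reduction---matches the paper. But two points deserve correction.

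First, the $k$-party Set-Disjointness reduction you sketch is not well-formed. You declare $\HH=\prod_i b_i$ (every live copy of $C_1$ combines with every live copy of $C_2$, etc.), yet the promise you invoke---``pairwise disjoint or share a single common index''---would force $\HH\in\{0,1\}$. More basically, ``the $j$-th party holds a set indexing the copies produced by the $j$-th gadget'' gives the parties sets over \emph{different} universes (copies of $O_{k_1}$ versus copies of $S_{p_1}$, say), so ``common index'' has no meaning. The paper avoids this entirely: it embeds the hard instance in \emph{only one} gadget, the \CC\ corresponding to the max-cost odd cycle $O_{k_1}$, and uses the \emph{two}-party $t$-Set-Disjointness problem over $[\sqrt m]\times[\sqrt m]$ to hide $t$ edges between two layers of that gadget. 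Each hidden edge spawns $\sqrt m^{\,k_1-2}$ copies of $O_{k_1}$, and the remaining gadgets are \emph{deterministic}: every copy of $O_{k_1}$ so created combines freely with all $\prod_{i>1}\cn c_i$ tuples from the other gadgets. Thus $\HH=\Theta(t\cdot\sqrt m^{\,k_1-2}\cdot\prod_{i>1}\cn c_i)$, and the product $\prod_i\cn c_i/\HH$ falls out of the construction, not out of a multi-party reduction. The $\Omega(m/t)$ bound for two-party $t$-Set-Disjointness then gives exactly $\Omega(\dc)$. Your ``live/decoy copies in every gadget'' mechanism is not needed and, as you set it up, not sound.

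Second, your parenthetical that when the bottleneck is a star ``we instead plant\ldots the hard instance underlying the $\ell_p$-sampling lower bound\ldots and the rest is unchanged'' is overconfident: the paper explicitly leaves the star-max-cost case open (Definition~\ref{def:good_cnts}, constraint~\ref{const:max_cost}). Fortunately you never need it for Theorem~\ref{thm:lb_dc}: since only the value $\textproc{dc}$ is prescribed, you are free to \emph{choose} the component counts, and the paper (Lemma~\ref{lem:main_to_dc}) simply sets them so that the max cost is always attained at an odd cycle. You should do the same rather than gesture at a case that is not known to work.
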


Prior to this work, the only known lower bounds  for the tasks of uniformly sampling or approximately counting motifs  $H$ that were either a clique~\cite{Eden2018-approx},  a single odd cycle~\cite{AKK19}, or a single star~\cite{GRS11, counting_stars_edge_sampling, Eden2018-approx}.
The above theorem provides  the first lower bounds for motifs with  non-trivial decompositions.
Furthermore,  even though our bounds are only \emph{decomposition-optimal} (that is, they do not hold for \emph{any} motif $H$),
each decomposition $ D$ corresponds to at least one motif $H_{\mathcal D}$ (generally, there are multiple valid ones),
for which our bounds are tight.

In order to prove Theorem~\ref{thm:lb_dc}, we actually prove a  stronger theorem, which relies on a technical notion of \emph{good counts}, formally stated in 
\ifnum\fullversion=1
Definition~\ref{def:good_cnts}.
\else
Definition  17 in the full version.
\fi

	\ifnum\fullversion=1
\begin{restatable}{theorem}{mainLB}
	\label{thm:main_lb}
	\sloppy
	For any decomposition  $D=\decomp=\{C_i\}_{i\in r}$ that contains at least one odd cycle component, for every $n,m$, $\cn h$ 	and a set of good counts,  $\{\cn c_i\}_{i\in [r]} =\{\cn o_{k_1}, \compldots, \cn o_{k_q}, \cn s_{p_1}, \compldots , \cn s_{p_\ell}\}$, as defined in 
	Definition~\ref{def:good_cnts}, 
	the following holds.  There exists a motif $H_{D}$, with an optimal decomposition $D$,  and a family of graphs $\mG$ over $n$ vertices and $m$ edges, as follows.
		For every $G\in \mG$, the basic components counts are as specified by  $\{\cn c_i\}_{i\in [r]}$, the number of copies of $H_D$ is $\cn h$, and  the expected query complexity of sampling (whp) a uniformly distributed copy of $H_D$ in a uniformly chosen $G\in \mG$ is 
	\[
	\Omega\left( \min\left\{\max_{i\in[r]}\{cost(C_i)\}\cdot \frac{\prod_{i} \cn c_{i}}{\cn h},m\right\}\right). 
	\]	
\end{restatable}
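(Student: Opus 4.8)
The plan is to prove the stronger Theorem~\ref{thm:main_lb}, from which Theorem~\ref{thm:lb_dc} follows by choosing any realizable good counts achieving the target value $\textproc{dc}$. The overall strategy is a reduction from a communication-complexity problem — specifically a multiparty (or two-party) variant of \SD\ / \tSD\ — to the sampling task, in the spirit of the odd-cycle lower bound of~\cite{AKK19}. First I would fix the decomposition $D=\{C_i\}_{i\in[r]}$ with at least one odd cycle, say $C_1=O_{k_1}$, and design the motif $H_D$: take $D$'s components and glue them together along a small number of shared auxiliary vertices (e.g., a path or a single hub attaching one vertex of each component) so that (a) $D$ remains an \emph{optimal} decomposition of $H_D$ — this needs the fractional-edge-cover value to be unchanged by the gluing, which one ensures by attaching at vertices that are already "covered" by their component — and (b) a copy of $H_D$ in the hard instance $G$ decomposes essentially uniquely into copies of the $C_i$'s, so that counting/sampling $H_D$ is equivalent to counting/sampling tuples of component-copies that are mutually compatible.

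Next I would build the family $\mG$. The base graph is a disjoint-ish union of gadgets: one gadget per basic component that realizes the prescribed count $\cn c_i$ of copies of $C_i$ while keeping $n$ vertices and $m$ edges, and in which sampling a single copy of $C_i$ provably costs $\Theta(cost(C_i))$ — for odd cycles this is exactly the \MM/\ML-type construction from~\cite{AKK19} giving cost $m^{k_i/2}/\cn o_{k_i}$, and for stars a high-degree-vertex gadget giving cost $\min\{mn^{p-1}/\Sp, m/\Sp^{1/p}\}$ (matching the lower bound of~\cite{GRS11} referenced after Theorem~\ref{lem:star-sampler}). The "good counts" hypothesis is exactly the set of side conditions on $n,m,\cn h,\{\cn c_i\}$ under which all these gadgets can coexist in one $n$-vertex, $m$-edge graph with the counts and $\cn h$ as specified; I would state it so that it holds (modulo the single $\sqrt m$ slack mentioned in the text) whenever $\dc \le m$. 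The randomness in $\mG$ is a hidden planted structure inside the bottleneck gadget (the one achieving $\max_i cost(C_i)$) encoding the communication input, so that a copy of $H_D$ exists only along the planted coordinates.

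Then comes the reduction. A sampler that, with high probability, returns a uniformly random copy of $H_D$ must in particular, with constant probability, \emph{hit} one of the $\cn h$ copies; each such copy uses one copy of every component, and for the bottleneck component the number of "accessible" copies before paying the bottleneck cost is small, so hitting a planted $H_D$-copy forces the algorithm to locate a planted coordinate in the bottleneck gadget. By the communication lower bound for \tSD\ (with $t$ tuned to the structure, and with the edge-sample queries simulated the same way as in~\cite{AKK19}, since a uniform edge query reveals only an $O(1/m)$-fraction of the planted information in expectation), this requires $\Omega(\text{(number of accessible bottleneck copies)})$ queries. A counting routine giving a $(1\pm\eps)$-estimate of $\cn h$ distinguishes the planted-vs-not-planted cases and is handled by the same reduction. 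Finally I would verify the bookkeeping: the number of accessible bottleneck copies is $\max_i cost(C_i)$ times the count $\cn c_{\text{bottleneck}}$, and the "dilution" — the fraction of component-tuples that actually extend to an $H_D$-copy — multiplies in the remaining counts and divides by $\cn h$, yielding exactly $\max_{i\in[r]}\{cost(C_i)\}\cdot \prod_i \cn c_i / \cn h$, capped at $m$ because one can always read the whole graph.

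\textbf{Main obstacle.} The delicate part is the gadget \emph{composition}: making one graph on exactly $n$ vertices and $m$ edges that simultaneously realizes all component counts $\cn c_i$, the motif count $\cn h$, and the per-component sampling hardness, \emph{and} in which $H_D$-copies are in near-bijection with compatible component-tuples (no spurious copies from cross-gadget edges, no loss of optimality of $D$). This is exactly what the technical "good counts" definition is there to encapsulate, and getting its constraints right — tight enough to be the true parameter regime, loose enough that the matching upper bound from Theorem~\ref{thm:intro_ub} applies — is where the real work lies; the communication-to-sampling reduction itself is then a fairly standard adaptation of~\cite{AKK19}.
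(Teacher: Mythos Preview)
Your high-level plan matches the paper's approach: reduce from \tSD, build one gadget per component of $D$, plant the hidden structure in the max-cost gadget, and use the $\Omega(N^2/t)$ bound with $N=\sqrt m$ and $t$ tuned so that the $t$ hidden edges create $\Theta(\cn h)$ copies of $H_D$. The paper's \CC, \MM/\SCG, and \SSS\ gadgets play exactly the roles you describe, and your identification of the composition problem as the main obstacle is accurate.

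There is however a genuine gap. You treat the bottleneck symmetrically, implicitly allowing the planted structure to sit inside a star gadget when $\max_i cost(C_i)$ is achieved by some $S_p$. The paper cannot do this, and neither can your sketch: the hiding trick in the odd-cycle \CC\ works because the ``missing'' $R_1$--$R_2$ edges are rerouted to dummy sets $R_1',R_2'$, so that every vertex degree is fixed independently of $(\vec x,\vec y)$ and degree/uniform-edge queries reveal nothing. In a star gadget the count $\cn s_p$ is determined by the degree sequence, so you cannot hide or reroute edges without either changing some degree (leaking the answer via a single degree query) or changing $\cn s_p$ itself. This is exactly why constraint~\ref{const:max_cost} of ``good counts'' forces the max-cost component to be an odd cycle, and why the paper leaves the star-bottleneck case open rather than claiming a symmetric construction.

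A secondary issue you underweight: the ``near-bijection with compatible component-tuples'' is far from automatic. A \MM\ for a short cycle $O_{k_j}$ with $k_j<k_1$ contains, by itself, many longer odd cycles and many $p$-stars, and can assemble $\Omega(\cn h)$ copies of $H_D$ entirely inside one gadget even when $\vec z=\vec 0$, destroying the YES/NO gap. The paper spends constraints~\ref{const:cycles}, \ref{const:short_or_star} and~\ref{const:stars}, a special single-apex \SCG\ variant, and a specific rule for which odd cycle each star of $H_D$ is attached to, precisely to kill these spurious copies in the \NO\ case; ``glue along a hub'' does not suffice.
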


\else
\begin{restatable}{theorem}{mainLB}
	\label{thm:main_lb}
	\sloppy
	For any decomposition  $D=\decomp=\{C_i\}_{i\in r}$ that contains at least one odd cycle component, for every $n,m$, $\cn h$ 	and a set of good counts,  $\{\cn c_i\}_{i\in [r]} =\{\cn o_{k_1}, \compldots, \cn o_{k_q}, \cn s_{p_1}, \compldots , \cn s_{p_\ell}\}$, as defined in 
	Definition 17 of the full version,
	the following holds.  There exists a motif $H_{D}$, with an optimal decomposition $D$,  and a family of graphs $\mG$ over $n$ vertices and $m$ edges, as follows.
	For every $G\in \mG$, the basic components counts are as specified by  $\{\cn c_i\}_{i\in [r]}$, the number of copies of $H_D$ is $\cn h$, and  the expected query complexity of sampling (whp) a uniformly distributed copy of $H_D$ in a uniformly chosen $G\in \mG$ is 
	\[
	\Omega\left( \min\left\{\max_{i\in[r]}\{cost(C_i)\}\cdot \frac{\prod_{i} \cn c_{i}}{\cn h},m\right\}\right). 
	\]	
\end{restatable}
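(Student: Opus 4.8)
\textbf{Proof proposal for Theorem~\ref{thm:main_lb}.}

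The plan is to build a family of graphs $\mG$ on which sampling a copy of $H_D$ is hard, by reduction from a communication-complexity problem, using the single odd cycle component guaranteed by the hypothesis as the ``hiding place'' for the hard instance. First I would fix the motif $H_D$: take the decomposition $D=\{O_{k_1},\dots,O_{k_q},S_{p_1},\dots,S_{p_\ell}\}$ and glue the basic components into a connected motif by identifying one vertex of each consecutive pair of components along a path, so that $D$ is forced to be an optimal decomposition of $H_D$ (one must check that no cheaper fractional edge cover exists; gluing at single vertices does not create new short even cycles or extra structure that would lower $\rho$, so $\rho(H_D)=\rho(D)$ as in \cite{AKK19}). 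The graph family $\mG$ then consists of a ``backbone'' that realizes the prescribed good counts $\{\cn c_i\}$ for all components \emph{except} the most expensive one, $C_{i^*}=\arg\max_i cost(C_i)$, assembled out of disjoint cliques/bicliques and star-centers in the standard way so that the counts of copies of $H_D$, of the odd cycles $\cn o_{k_j}$, and of the stars $\cn s_{p_j}$ all equal the target values; and, for the component $C_{i^*}$, the backbone is attached to a gadget built from the communication problem whose yes/no answer controls whether the $\Th{\prod_i \cn c_i/\cn h}$ potential copies of $H_D$ that route through that gadget actually close up.

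The core reduction is the one used for the single-odd-cycle lower bound in \cite{AKK19}: $t$-party set-disjointness (or its bipartite variant) encoded into a graph where an odd cycle $O_{k^*}$ exists iff the instance is a ``yes'' instance, with the edges of the potential cycle distributed among the parties so that each query to $G$ can be simulated with $O(\poly\log n)$ bits of communication. I would take the no-instance graph $G_{\mathrm{no}}$ to have exactly $\cn h$ copies of $H_D$ (supplied by a ``free'' gadget that does not touch the hard part), and the yes-instance graph $G_{\mathrm{yes}}$ to have $\Th{\cn h}$ additional copies all of which pass through the disjointness gadget; since a uniform copy of $H_D$ in $G_{\mathrm{yes}}$ lands in the gadget with constant probability, a sampler that succeeds w.h.p. distinguishes the two cases, and the communication lower bound $\Omega(\cnt/\poly\log)$ on set-disjointness — scaled to the size of the gadget, which is chosen so that its ``cost'' is exactly $cost(C_{i^*})$ — forces $\Omega(cost(C_{i^*})\cdot \prod_i\cn c_i/\cn h)$ queries. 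The $\min\{\cdot,m\}$ cap is automatic because once an algorithm makes $\Omega(m)$ queries it can read the whole graph.

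The main obstacle I anticipate is \emph{simultaneously} hitting all the prescribed counts: the graph must have exactly (up to constants) $\cn o_{k_j}$ copies of each odd cycle, $\cn s_{p_j}$ copies of each star, $\cn h$ copies of $H_D$, \emph{and} a hard gadget of precisely the right size realizing $cost(C_{i^*})$, all on $n$ vertices and $m$ edges — this is exactly what the ``good counts'' hypothesis of Definition~\ref{def:good_cnts} is there to make feasible, but verifying that the constructed family actually meets every constraint (and that spurious copies of $H_D$ formed by mixing components across the backbone and the gadget are negligible) is the delicate bookkeeping. A secondary subtlety is ensuring the decomposition $D$ remains \emph{optimal} for $H_D$ and that the adversary's choice of graph (uniform over $\mG$) genuinely defeats the communication argument: one routes the random bits of the set-disjointness instance into the graph's structure so that the algorithm's query sequence yields a low-communication protocol, then applies Yao's principle. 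Once the gadget size is calibrated so that its intrinsic sampling cost matches $cost(C_{i^*})$ and the number of gadget-copies of $H_D$ is $\Th{\cn h}$, multiplying through gives precisely the claimed bound $\Omega(\min\{\max_i cost(C_i)\cdot\prod_i\cn c_i/\cn h,\;m\})$, which then implies Theorem~\ref{thm:lb_dc} by plugging in good counts that realize any target value of $\dc$.
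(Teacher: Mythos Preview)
Your high-level plan---reduce from a communication problem, hide the instance in the max-cost odd-cycle gadget, calibrate so that YES/NO instances differ by $\Theta(\cn h)$ copies of $H_D$---matches the paper's. But there is a genuine error in your construction of $H_D$: gluing components by \emph{identifying} vertices generally strictly lowers the fractional edge-cover value, so $D$ would no longer be optimal for $H_D$. (Identify a triangle vertex with the center of an $S_2$: the result has $\rho=3$, while $\rho(D)=1.5+2=3.5$.) The paper instead connects components by adding a single new \emph{edge} between them (star center to a cycle vertex, cycle vertex to cycle vertex); such an edge receives weight $0$ in the cover and preserves $\rho(H_D)=\rho(D)$.

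Two further differences in execution are worth noting. First, the communication problem is not multi-party disjointness as in \cite{AKK19}; the paper uses the two-party $t$-\textsc{Set-Disjointness} variant from the framework of \cite{Eden2018-approx} (Alice and Bob hold $\sqrt m\times\sqrt m$ bit-matrices, intersection size promised to be $t$ or $0$, communication lower bound $\Omega(m/t)$). Each intersecting index $(i,j)$ becomes one hidden edge between two designated parts of the $O_{k_1}$-gadget, so the YES case contributes exactly $t\cdot\sqrt m^{\,k_1-2}$ odd $k_1$-cycles, and $t$ is set so that these, multiplied by the other component counts, give $\Theta(\cn h)$ copies of $H_D$; then $m/t$ is precisely the target bound. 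Second, the graph is not ``backbone plus one gadget'': every component gets its own gadget (a $k$-partite or single-apex cycle gadget per $O_k$, a bipartite star gadget per $S_p$, the special \texttt{CC-gadget} embedding the disjointness instance for $O_{k_1}$), the gadgets are linked by complete bipartite graphs between their first parts, and a separate disconnected subgraph $G'$ realizing all the prescribed counts is adjoined so that every $\cn c_i$ is met regardless of the instance. The constraints of Definition~\ref{def:good_cnts} are invoked exactly where you anticipate---to bound the spurious cross-gadget copies of $H_D$ in the NO case and to make the gadget sizes mutually consistent.
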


	\fi

\ifnum\fullversion=1
In Section~\ref{sec:dc_main},  we
\else
In the full version, we first 
\fi
 prove that  Theorem~\ref{thm:lb_dc} follows from Theorem~\ref{thm:main_lb}.
Theorem~\ref{thm:main_lb} is essentially a substantial refinement of Theorem~\ref{thm:lb_dc}, in the following sense.
Not only that for any decomposition cost we can match the lower bound (as stated in Theorem~\ref{thm:lb_dc}), but we can match it for a large variety of \emph{specific}  setting of the basic counts (as long as they are good, as stated in Theorem~\ref{thm:main_lb}). 
While Theorem~\ref{thm:main_lb} does not state that the lower bound holds for \emph{any} setting of the counts $\{\cn c_i\}_{i\in [r]}$, as we discuss in Section~\ref{sec:good_cnts}, 
some of the constraints on these counts (detailed in Definition~\ref{def:good_cnts}) are unavoidable. It remains an open question whether this set of constraints can be weakened, or perhaps more interestingly, whether, given that a set of constraints that is \emph{not good}, can a better upper bound  be devised.

	\subsection{Organization of the paper}
We give some preliminaries in Section~\ref{sec:prel}. 	
The discussion on additional related works on sublinear motif counting and sampling is deferred  to Appendix~\ref{sec:related-work}.
In Section~\ref{sec:overview_of_our_results_and_techniques} we give a high level overview of our techniques.
We present our algorithms for uniformly sampling stars and arbitrary motifs $H$ in Section~\ref{sec:general_upper_bounds}.
Due to page limitation, the full details of the $\ell_p$-sampler, approximation algorithm, as well as the 
 decomposition-optimal lower bounds  are deferred to the full version of this paper.
%
%

	\section{Preliminaries and Notation}\label{sec:prel}
Let $G=(V,E)$ be a simple undirected graph. We let $n$ denote the number of vertices in the graph.
We think of every edge $\{u,v\}$ in the graph as two \emph{oriented} edges $(u,v)$ and $(v,u)$,
and slightly abuse notation to let $m$ denote the number of oriented edges, so that $m=\sum_{v \in V} d(v)=2|E|$, and $\davg=m/n$.
Unless explicitly stated otherwise,  when we say ``edge'' we mean an oriented edge.
We let $d(v)$ denote the degree of a given vertex. We let $[r]$ denote the set of integers $1$ through $r$.

\textbf{The augmented query model.} We consider the augmented query model which allows for the following  queries. (1) A degree query, $deg(v)$,  returns the degree of $v$,  $d(v)$; (2)
An  $i\th$ neighbor query, $Nbr(v,i)$ returns the $i\th$ neighbor of $v$ if $i\leq d(v)$, and otherwise returns FAIL; (3) A pair query, $pair(u,v)$, returns whether $(u,v)\in E$; and (4) Uniform edge query returns a uniformly distributed (oriented) edge in $E$.

\textbf{A decomposition into odd cycles and stars.}
Given a motif $H$, the result in~\cite{AKK19} is parameterized by the \emph{fractional edge cover number} $\rho(H)$. The fractional edge cover number
is the optimal solution to the \emph{linear programming relaxation}
of the integer linear program (ILP) for the minimum edge cover of $H$:
The ILP allows each edge to take values in $\{ 0,1\}$, under the constraint that the sum of edge values incident to any vertex $v$ is at least $1$.
The LP relaxation allows values in $[0, 1]$ instead, and $\rho(H)$ is the minimum possible sum of all the (fractional) values.
In~\cite{AKK19}, the authors strengthen an existing result by Atserias, Grohe nd Marx~\cite{AGM},
in order to prove that there always exists an optimal solution as follows.
All of the weight (i.e., non zero edges) is supported on (the edges of) vertex-disjoint odd cycles and stars,
where each odd cycle edge has weight $\nicefrac 12$, and each star edge has weight $1$.
Consequently, the corresponding optimal solution of  the LP for a given graph $H$ 
is equivalent to a decomposition of $H$ into a collection of vertex-disjoint odd cycles and stars, denoted $\mD(H)=\decomp$. See Figure~1 for an illustration.

Generally, the motif we aim to sample (or approximate its counts) will be denoted by $H$, and the corresponding decomposition will be $\mathcal D(H)=\decomp=\{C_i\}_{i\in r}$ for $r=q+\ell$.
We use a convention of using $O_{k_i}$ to refer to the $i^{th}$ decomposition component which is an odd cycle of size $k_i$,
and $S_{p_i}$ to refer  to the $i^{th}$ star component, which is a star with $p_i$ petals.
We  use $\cn O_k$ and $\cn S_p$ denote the number of $k$-cycles and $p$-stars in $G$ respectively,
and we use $\cn h$  to denote the number of copies of $H$ in $G$.

Next, we formally define the fractional edge cover of a graph (or motif), and the resulting decomposition. We note that in this paper we will be interested in the decomposition of the motif $H$, and not the graph $G$.
\begin{definition}[Fractional edge cover]
A fractional edge cover of a graph is a function $f:E \rightarrow \mathbb{R}_{\geq 0}$ such that for every $v\in V$, $ \sum_{e \ni v} f(e) \geq 1$.
We say that the cost of a given edge cover $f$ is $\sum_{e \in E } f(e)$.
For any graph (motif) $H$, its fractional edge cover value is the minimum cost over all of its fractional edge covers,
and we denote this value by $\rho(H)$. An \emph{optimal} edge-cover of $H$ is any edge cover of $H$ with cost $\rho(H)$.
\end{definition}

\begin{lemma}[Lemma 4 in~\cite{AKK19}]\label{lem:AKK}
Any graph (motif) $H$ admits an optimal fractional edge cover $\xs$, whose support, denoted $SUPP(\xs)$,
is a collection of vertex-disjoint odd cycles and stars, such that:
\begin{itemize}
	\item for every odd cycle $C\in SUPP(\xs)$, for every $e \in C$, $\xs(e)=1/2.$
	\item for every $e \in SUPP(\xs)$ that does not belong to an odd cycle, $\xs(e)=1.$
\end{itemize}
\end{lemma}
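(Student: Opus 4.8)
The plan is to prove the statement via the standard linear-programming route: exhibit an \emph{extreme point} (vertex) of the fractional edge cover polytope of $H$ that is optimal, and then read its combinatorial structure off the tight constraints. Write the LP as $\min \sum_{e} x_e$ subject to $x \ge 0$ and $\sum_{e\ni v} x_e \ge 1$ for all $v$. This polyhedron is pointed (it lies in the nonnegative orthant), the objective is bounded below by $0$, and the feasible region is nonempty (take $x\equiv 1$), so the minimum is attained at an extreme point $\xs$. A first easy reduction is that $\xs_e \le 1$ for every $e$: if $\xs_{uv}>1$ we could lower it to $1$ while preserving feasibility (the edge $uv$ is one of the summands at both $u$ and $v$) and strictly decrease the cost, contradicting optimality. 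Let $E^+ = \{e : \xs_e > 0\}$ be the support and $H^+=(V^+,E^+)$ the corresponding subgraph; the goal is to show every connected component of $H^+$ is either a star whose edges all have value $1$, or an odd cycle whose edges all have value $1/2$, which also gives $\rho(H)=\sum_i k_i/2 + \sum_j p_j$.

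The second step extracts the ``skeleton'' of $H^+$. Since $\xs$ is an extreme point, the tight constraints --- namely $x_e = 0$ for $e\notin E^+$, and $\sum_{e\ni v}x_e = 1$ for the set $T$ of tight vertices --- have full rank $|E|$. The rows for $x_e = 0$ span exactly the coordinate subspace outside $E^+$, so this forces the $H^+$-incidence vectors $\{\chi_{\delta_{H^+}(v)} : v\in T\}$ to span $\mathbb{R}^{E^+}$, i.e. to have rank $|E^+|$. Working component by component and using the classical fact that the vertex--edge incidence matrix of a connected graph on $W$ vertices has rank $|W|-1$ when it is bipartite and $|W|$ otherwise, each component $K$ of $H^+$ must satisfy $|E(K)| \le |V(K)|$: in the tree case $|E(K)| = |V(K)|-1$ with at least $|V(K)|-1$ tight vertices, and in the equality case $|E(K)| = |V(K)|$ with \emph{every} vertex of $K$ tight and $K$ non-bipartite, i.e. $K$ unicyclic with an odd cycle. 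Even cycles and components with two independent cycles are excluded because their incidence rank is strictly less than $|E(K)|$.

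The third step pins the values by ``tightness propagation''. For a tree component $K$, root it at its unique non-tight vertex $z$ (if all vertices are tight, root at an arbitrary leaf). By induction from the leaves inward: a leaf $u\ne z$ is tight with a single incident edge $e$, so $\xs_e = 1$; an internal vertex $u\ne z$ is tight, and by induction each of its edges to children already has value $1$, so tightness reads $\xs_{(u,\mathrm{parent}(u))} + (\#\text{children of }u) = 1$, which is impossible once $u$ has a child and $\xs_{(u,\mathrm{parent}(u))}>0$. Hence $K$ has no internal vertex other than possibly $z$, so $K$ is a star centered at $z$ (or a single edge), with all edges of value $1$. For a unicyclic component $K$ with odd cycle $C$: every vertex is tight, so a pendant leaf $\ell$ attached to a cycle vertex $u$ would force $\xs_{u\ell}=1$, making the tight sum at $u$ exceed $1$ (its two incident cycle edges are positive); thus $K=C$, and walking around $C$ the tightness relations $\xs_{e_i}+\xs_{e_{i+1}}=1$ give $\xs_{e_{i+2}}=\xs_{e_i}$, and since $C$ is odd stepping by two visits all edges, so all $\xs_{e_i}$ are equal and hence equal to $1/2$.

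Combining the three steps, $\xs$ is an optimal fractional edge cover whose support decomposes, as a disjoint union of connected components, into stars with unit-weight edges and odd cycles with weight-$1/2$ edges, which is exactly the claim. I expect the step requiring the most care to be the ``skeleton'' step: turning the linear independence of the tight constraints at an extreme point into the bipartite-versus-nonbipartite incidence-rank dichotomy for each component, including the bookkeeping that separates the $x_e=0$ rows from the vertex rows. The tightness-propagation step is then short and self-contained, and the computation of $\rho(H)$ from the resulting decomposition is immediate.
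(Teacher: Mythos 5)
The paper does not prove this statement: it is imported verbatim as Lemma~4 of~\cite{AKK19} (which in turn builds on Atserias--Grohe--Marx), so there is no in-paper argument to compare against. Your self-contained proof via extreme points of the fractional edge cover polytope is correct and is essentially the standard derivation: truncate at $1$, use that the tight rows at a vertex of the polytope have rank $|E|$, separate the $x_e=0$ rows to reduce to the incidence-rank of the support graph, apply the bipartite/non-bipartite rank dichotomy per component, and propagate tightness. All three steps are sound, including the observation that a tree component can have at most one non-tight vertex (any $|V(K)|-1$ rows of a tree's incidence matrix are independent) and the odd-cycle alternation $\xs_{e_{i+2}}=\xs_{e_i}$ forcing the common value $1/2$.

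One small patch is needed in the unicyclic case: you rule out only a pendant leaf attached \emph{directly} to a cycle vertex, but a unicyclic component could a priori have a longer tree hanging off the cycle, whose leaves are not adjacent to $C$. The fix is the same propagation you already use for tree components: take a deepest vertex $w$ of the hanging tree having a child; all its children are leaves, hence their edges have value $1$, and since $w$ is tight and its edge toward the cycle is strictly positive, its constraint sum exceeds $1$ --- so the hanging trees are empty and $K=C$. With that one sentence added, the argument is complete, and $\rho(H)=\sum_i k_i/2+\sum_j p_j$ follows as you say.
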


\begin{definition}[Decomposition into odd-cycles and stars]
\label{def:decomposition}
Given an optimal fractional edge-cover $\xs$ as in Lemma~\ref{lem:AKK},
let $\{O_{k_1}, \mydots, O_{k_q}\}$ be the odd-cycles in the support of $\xs$, and let $\{S_{p_1}, \mydots, S_{p_\ell} \}$ be the stars.
We refer to $\mD(H):=\decomp$ as an (optimal) \emph{decomposition of $H$}.
\end{definition}

Given a graph (motif) $H$, its fractional edge cover value and an optimal decomposition can be computed efficiently:

\begin{theorem}[Lemma 4 and Section 3 in~\cite{AKK19}]
For any graph $H$, its fractional edge cover value $\rho(H)$ and an optimal decomposition $\mD(H)$ can be computed in polynomial time in $|H|$.
\end{theorem}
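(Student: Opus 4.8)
The plan is to separate the two assertions: computing the value $\rho(H)$, and exhibiting a decomposition $\mD(H)$ that realizes it. For the value, I would observe that $\rho(H)$ is by definition the optimum of the \emph{fractional edge cover linear program} of $H$: minimize $\sum_{e\in E(H)} f(e)$ subject to $\sum_{e\ni v} f(e)\ge 1$ for every $v\in V(H)$ and $f(e)\ge 0$ for every $e$. This LP has $|E(H)|\le\binom{|V(H)|}{2}$ variables and $|V(H)|$ nontrivial constraints, so its encoding size is $\poly(|H|)$; hence any polynomial-time LP algorithm (the ellipsoid method, or an interior-point method) computes both the optimal value $\rho(H)$ and an optimal solution $f^\star$ in time $\poly(|H|)$.

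The real step is to convert $f^\star$ into an optimal solution of the special form guaranteed by Lemma~\ref{lem:AKK} — one whose support is a vertex-disjoint union of odd cycles with all edges at weight $\tfrac12$ and stars with all edges at weight $1$ — since that is precisely what Definition~\ref{def:decomposition} reads off. I would first pass to a \emph{vertex} (basic feasible solution) of the optimal face: the set of optima is itself a polytope $F$ of size $\poly(|H|)$, and minimizing a fixed generic linear functional over $F$ (one more polynomial-size LP, solved so as to return a vertex) yields a basic optimal solution $\xs$. It then remains to argue that $\xs$ already has the claimed structure, which is the content of the proof of Lemma~\ref{lem:AKK} (the Atserias--Grohe--Marx argument): one shows $\xs$ is half-integral; that the edges with $\xs(e)=\tfrac12$ induce vertex-disjoint odd cycles, via a perturbation argument (if the fractional part contained a simple path or an even cycle, alternately adding and subtracting a small $\delta$ along its edges would produce two distinct feasible optimal points whose average is $\xs$, contradicting vertexhood); and that the edges with $\xs(e)=1$ can be organized into stars by a local re-routing inside each such component that leaves the total weight unchanged. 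Alternatively, one can bypass the optimal face entirely and run the constructive re-routing from the proof of Lemma~\ref{lem:AKK} directly on $f^\star$: each step shifts weight along an alternating walk in the support, preserves feasibility and optimality, and strictly simplifies the support, so after $\poly(|H|)$ steps it terminates in the canonical odd-cycles-and-stars form.

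Given such an $\xs$, the decomposition is extracted in near-linear time: compute the connected components of its support, classify each as an odd cycle (all weights $\tfrac12$) or a star (all weights $1$), and output the resulting families $\{O_{k_1},\dots,O_{k_q}\}$ and $\{S_{p_1},\dots,S_{p_\ell}\}$; by construction $\sum_e \xs(e)=\rho(H)$. The overall running time is one or two $\poly(|H|)$-size LP solves plus $O(|H|^2)$ post-processing, hence $\poly(|H|)$. The main obstacle is the middle step — turning the \emph{existence} statement of Lemma~\ref{lem:AKK} into an \emph{algorithm} that actually outputs a structured optimum: either proving that a vertex of the fractional edge cover polytope is necessarily half-integral with the odd-cycle/star support structure (the alternating path/even-cycle perturbation argument, together with the local argument forcing the integral part to decompose into stars), or, in the constructive version, bounding the number of re-routing steps by a polynomial and certifying that the process halts only in canonical form. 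Everything else is routine LP machinery.
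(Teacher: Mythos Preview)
Two preliminary remarks. First, your header mislabels the statement as ``Theorem~\ref{lem:star-sampler} (the \SaS\ star sampler)''; the body of what you wrote correctly addresses the polynomial-time computability of $\rho(H)$ and $\mD(H)$, which is the statement at hand. Second, and more to the point: the paper does not prove this theorem at all. It is quoted as a result of~\cite{AKK19} (Lemma~4 and Section~3 there) and no argument is supplied here; there is therefore no ``paper's own proof'' to compare your proposal against.

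That said, your sketch is a reasonable reconstruction of how the result is established. Solving the polynomial-size fractional edge-cover LP yields $\rho(H)$ directly. For extracting a decomposition $\mD(H)$, both of your routes work. The constructive re-routing from the proof of Lemma~\ref{lem:AKK} is the cleaner option and is essentially what~\cite{AKK19} does: each local move preserves feasibility and cost and strictly simplifies the support, so it terminates in $\poly(|H|)$ steps in the canonical odd-cycles-plus-stars form. Your first route---pass to a basic optimal solution and read off the structure---is also correct but you are slightly quick at one point. Half-integrality of any BFS follows from the fact that square submatrices of the vertex--edge incidence matrix have determinant in $\{0,\pm 1,\pm 2\}$, and your alternating-perturbation argument correctly eliminates even cycles from the fractional support. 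To conclude that the weight-$1$ part is a disjoint union of stars, however, one also needs the observation that in any weight-$1$ path of length at least three the interior edge has both endpoints strictly over-covered, so it can be perturbed in either direction without leaving the polytope; hence no BFS contains such a path, and the weight-$1$ components are stars. With that filled in, the argument is complete.
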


	\section{Overview of Our Results and Techniques}
\label{sec:overview_of_our_results_and_techniques}

We start with describing the ideas behind our upper bound result.

\subsection{An algorithm for sampling arbitrary motifs}
\label{sec:an_algorithm_for_general_graphs}
We take the same approach as that of~\cite{Peng}, of sampling towards estimating,
but improve on the query complexity of their bound using two ingredients.
The first is an improved star sampler, and the second is an improved sampling approach.

\textbf{Improved star sampler.} \label{sec:star-sampler}
The algorithm of~\cite{Peng} tries to sample $p$-stars by sampling $p$ edges uniformly at random,
and checking if they form a star (by simply checking if all $p$ edges agree on their first endpoint).
Hence, each $p$-star is sampled with probability $1/m^p$.
Our first observation is that it is more efficient to sample a \emph{single} edge $(u,v)$ and then sample $p-1$ neighbors of $v$ uniformly at random,
by drawing $(p-1)$ indices $i_1, \ldots, i_p$ in $[d(v)]$ uniformly at random, and performing neighbor queries $(v,i_j)$ for every $j\in [p-1]$.
However, this sampling procedure introduces biasing towards stars that are incident to lower degree endpoints.
If we were also given an upper bound $\du$ on the maximal degree in the graph, i.e., a value $\du$ such that $\dmax\leq \du$, where $\dmax$ is the maximum degree in $G$, then we could overcome the above biasing,
by ``unifying'' all the degrees in the graph to $\du$.
Specifically, this unification of degrees is achieved by querying the $i\th$ neighbor of a vertex, where $i$ is chosen uniformly at random in  $[\du]$,
rather than in $[d(v)]$.\footnote{This is effectively equivalent to rejection sampling where first $v$ is ``kept'' with probability $d(v)/\du$,
and then a neighbor of $v$ is sampled uniformly at random.}
By repeating this process $p-1$ times, we get that each specific copy of a $p$-star is sampled with equal probability $\frac{1}{m\cdot \left( \du\right)^{p-1}}$. Observe that this is always preferable to  $1/m^p$, i.e. $\frac{1}{m\cdot \left( \du\right)^{p-1}}> \frac{1}{m_p}$,  since for every graph $G$, $\du< m$. 
While we are not given such a bound on the maximal degree, letting $\cn S_p$ denote the number of $p$-stars in $G$,
it always holds that $d_{max}\leq \min\{n,\cn S_p^{1/p}\}$ (since every vertex with degree $d>p$ contributes $d^p$ to $\cn S_p$).
Hence, we can use the existing algorithms for star approximations by~\cite{GRS11, counting_stars_edge_sampling, ERS19-sidma} in order to first get an  estimate $\chat s_p$  of $\cn S_p$,
and then use this estimate to get an upper bound $\du$ on $\dmax$ by setting $\du = \min\{n,\chat S_p^{1/p}\}$.

\textbf{An improved sampling approach.}
In order to describe the  second ingredient for improving over the bounds of \cite{Peng},
we first recall their algorithm.
In the first step, their algorithm  simultaneously attempts to sample a copy 
of each odd cycle and star in the decomposition of $H$. Then if all  individual sampling attempt succeed, the algorithm proceeds to check if 
the sampled copies are connected in $G$ in a way that is consistent with the non-decomposition edges of $H$. 
However, it is easy to see that this approach is wasteful. Even if all but one of the simultaneous   sampling attempts of the first step succeed, the algorithm starts over. For example, if $\mD(H)$ consists of a star and a triangle, then in the first step their algorithm attempts to sample simultaneously  a star and a triangle, and  in the case that, say, a triangle is sampled but the star sampling attempt fails, then the sampled triangle is discarded, and the algorithm goes back to the beginning of the first step.

To remedy this, in the first step our algorithm  invokes the star- and odd-cycle samplers for every basic component in $\mD(H)$,  until all  samplers return an  \emph{actual} copy of 
 of the requested  component.  This ensures that we proceed to the next step of verifying $H$ only once we have actual copies of all the basic components.
 We then continue to check if these copies can be extended to a copy of $H$ in $G$, as before.
While this is a subtle change, it is exactly what allows us to replace the dependency in the maximum number of potential copies of the basic components, to a dependency in the actual number of copies in $G$.

We note that for motifs $H$ whose decomposition has repeating smaller sub-motifs, our sampling approach can be used recursively,
which can be more efficient.
That is, instead of decomposing $H$ to its most basic components, stars and odd-cycles, we can consider decomposing it to collections of more complex  components.
For example, if $H$ has such a collection $H_1 \subset H$ that is repeated more than once,
then it is more beneficial to first try and sample all of the copies of $H_1$
(as well as the other components of $H$) and only then try to extend these copies to $H$.
The sampling of the $H_1$ copies can then be performed by a recursive call to the motif sampler.
It can be shown that for any repeated motif $H_1$ in the decomposition of $H$,
applying the recursive sampling process results in an improved upper bound. 

\paragraph{From sampling to estimating}
In order to obtain a $(1\pm\eps)$-estimate of $\HH$, we can use the sampling algorithm as follows.
Consider a single sampling attempt in which we first sample all basic components of $\mD(H)$ (at some cost $Q$), and then preform all pair queries between the components to check if the sampled components induce a copy of $H$ (at cost $O(|H|^2)$).
By the above description such an attempt succeeds with probability that depends on the counts of the basic components of $\mD(H)$ and on the count $\HH$. Hence we can think of the success probability of each attempt as a coin toss with bias $p$, where $p$ depends only on the counts of the components and $\HH$. By standard concentration bounds, using $\Theta(1/(p\eps^2))$ sampling attempts, we can compute a $(1\pm\eps)$-estimate $\hat p$ of $p$. Since we can also get $(1\pm\eps)$-multiplicative estimates of the counts of each basic component without asymptotically increasing the running time, we can deduce from $\hat p$ a $(1\pm\Theta(\eps))$-estimate of $\HH$. See
\ifnum\fullversion=1 
 Section~\ref{sec:from_sampling_to_estimating} 
 \else
 the full version
 \fi
 for more details.

	\subsection{Decomposition-optimal lower bounds}\label{sec:overview_lb}

\textbf{Theorem~\ref{thm:lb_dc} follows from Theorem~\ref{thm:main_lb}.}
In order to prove Theorem~\ref{thm:lb_dc}, we first prove Theorem~\ref{thm:main_lb} (in Section~\ref{sec:lb}), and then prove that Theorem~\ref{thm:lb_dc} follows from Theorem~\ref{thm:main_lb} (in Section~\ref{sec:dc_main}). 
We first explain the  intuition  as to why Theorem~\ref{thm:lb_dc} follows from Theorem~\ref{thm:main_lb}.

At a high level, Theorem~\ref{thm:main_lb} states that  given (1) a decomposition $D$ and (2) a set of good counts $\{\cn c_i\}_{i\in [r]}$,  we  can construct (3) a motif $H_D$ (such that $D$ is an optimal decomposition of $H_D$) and  (4) a  family of graphs  $\mG$ such that expected number of queries required to sampling copies of $H_D$ in $\mG$  is 
\[
\comp \;.
\]
Theorem~\ref{thm:lb_dc} states that given (a) a decomposition $D$ and (b) a (realizable) decomposition cost $\textproc{dc}$, that there exists (c) a motif $H_D$ and (d) a family of graphs for which the decomposition-cost of $G, D$ and $H_D$ is $\textproc{dc}$, and sampling copies of $H_D$ in graphs of $\mG$ requires $\Omega(\textproc{dc})$ queries.

To prove that Theorem~\ref{thm:lb_dc} follows from Theorem~\ref{thm:main_lb}, we then prove that given (a) and (b), we can specify  a set of counts which both satisfies $\textproc{dc}=\comp$ and which is good. Since the set of counts is good, we can invoke Theorem~\ref{thm:main_lb}, and get that there exists a motif $H_D$ and a family of graphs in which it is hard to sample copies of $H_D$. 
We formalize this argument in Lemma~\ref{lem:main_to_dc}, and in the rest of the section we focus our attention on the proof of Theorem~\ref{thm:main_lb}. 

\textbf{Ideas behind the proof of  Theorem~\ref{thm:main_lb}.}
Given a graph decomposition $D$, values $n$, $m$, $\cn h$ and a set of counts $\cn c_1, \mydots, \cn c_r$ of its basic components, our lower bound proof starts by defining a motif $H_{D}$, and a family of graphs $\mG$ such that  the following holds.
\sloppy
\begin{itemize}
	\item The optimal  decomposition of $H_{D}$ is $D$;
	\item For every $G\in \mG$ and $O_{k_i}, S_{p_j}\in D$, their number of copies  in $G$ is  $\Theta(\cn o_{k_i})$ and $\Theta(\cn s_{p_j})$, respectively;
	\item The number of copies of $H$ in $G$ is $\Theta(\cn h)$
	\item Sampling a uniformly distributed copy of $H_D$ in  a uniformly chosen $G$ in $\mG$, requires $\Omega\left(\min\left\{m,\textproc{dc}\right\} \right)$ queries in expectation. 
\end{itemize}

There are several challenges in proving our lower bound. First, as they are very general and work for any given decomposition $D$ that contains at least one odd cycle, there are many sub cases that need to be dealt with separately, depending on the mixture  of components in $D$.
Second, the lower bound term  does not only depend on the different counts, but also on the relations between them, which determines the component that maximizes $cost(C_i)$. As mentioned previously, our lower bound only holds for the case that the max cost is due to an odd cycle component. It remains an open question whether a similar lower bound can be proven for the case that the max cost is due to a star, or whether in that case a better algorithm exists. The authors suspect the latter option.
Third, as in most previous lower bounds for motif sampling and counting, we prove the hardness of the task by ``hiding'' a constant fraction of the copies of $H_D$, so that the existence of these copies depends on a small set of crucial edges. That is, we prove that we can construct the family of graphs $\mG$,  such that for every $G\in\mG$,  a specific set of $t$ crucial edges, for some small $t$ that depends on the basic counts and $\cn h$, contributes $\Theta(\cn h)$ copies of $H_D$ . We then prove that detecting these edges requires many queries (this is formalized by a reduction from a variant of the \SD\ communication complexity problem, based on the framework of~\cite{Eden2018-approx}).
This approach  of constructing many copies of $H_D$ which all depend on  small set of crucial edges, leads the construction of the graphs $\mG$ to contain  very dense components, which in turn causes correlations between the counts of the different components. A significant challenge  is therefore to define the motif $H_D$ and the graphs of $\mG$ in  a way that satisfies all given counts simultaneously.

In each graph  $G$ in the hard family $\mG$,  we have a corresponding ``gadget'' to each of the components of  $D$. 
Let $k_1$ denote (one of) the maximum-cost odd-cycle components.
For each odd-cycle component $O_{k_i}$ for $k_i\neq k_1$, we define either a \SCG\ or a
\MM{} that induce $\cn o_{k_i}$ odd cycles of length $k_i$ according to the relation between $k_i$ and $k_1$.
For each star component $S_{p_j}$ we define a \SSS{} that induces $\cn s_{p_j}$ many $p_j$-stars.
The maximum-cost cycle component $O_{k_1}$ has a different gadget, a \CC{}. This gadget is used to hide the set of $t$ crucial edges, and allows us to parameterize the complexity in terms of the cost $cost\{O_{k_1}\}$. 

To formally prove  the lower bound we  make use the framework introduced in~\cite{Eden2018-approx}, which uses reductions from communication complexity problems to motif sampling and counting problems in order to prove hardness results of these latter tasks. 
This allows us to prove that one cannot, with high  probability, witness an edge from the set of $t$ hidden edges, unless $\Omega(m/t)$ queries are performed. This in turn implies that one cannot, with high probability, witness a copy of $H_D$ contributed by these edges.
Hence, we obtain a lower of $\Omega(m/t)$ for the task of outputting a uniformly sampling. Setting $t$ appropriately gives the desired bound.

	\renewcommand{\rmax}{r_{max}}

\section{Upper Bounds for Sampling Arbitrary Motifs}
\label{sec:general_upper_bounds}

In this section we present our improved sampling algorithm.
Recall that our upper bound improvement has two ingredients, an improved star sampler, and an improved sampling approach. 
We start with presenting the improved star sampling algorithm.

\subsection{An optimal ($\ell_p$-)  star-sampler} 

Our star sampling procedure assumes that it gets as a parameter  
a value $\hSp$ which is a constant-factor estimate of $\Sp$. This value can be obtained by invoking one of the star estimation algorithm of~\cite{counting_stars_edge_sampling,ERS19-sidma}.

\begin{lemma}[\cite{counting_stars_edge_sampling}, Theorem 1]\label{lem:star-est-ERS}
	\sloppy
	Given query access to a graph $G$ and an approximation  parameter $\eps$, 
	there exists an algorithm, \textsf{Moment-Estimator}, that returns  a value $\hSp$,  such that with probability at least $2/3$,
	$\hSp\in [\cn s_p,2\cn s_p]$. The expected query complexity and running time $O\left(\min\left\{m,\min\left\{\frac{m\cdot n^{p-1}}{\Sp},\frac{m}{\Sp^{1/p}}\right\} \cdot {\loglog n}\right\} \right)$. 
\end{lemma}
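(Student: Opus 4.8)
The statement is a standard edge-sampling moment estimator for the $p$-star count $\cn s_p=\sum_{v}\binom{d(v)}{p}$, and I would prove it in the style of the cited work, treating $p=O(1)$ throughout and recalling that $m$ is known in this model (otherwise replace it by an easy constant-factor estimate). The first step is to fix an unbiased per-sample estimator: draw a uniform oriented edge $(u,v)$, query $d(u)$, and set $X=\frac{m}{p}\binom{d(u)-1}{p-1}$, with the convention $\binom{d}{p}=0$ for $d<p$. Since a uniform oriented edge selects a given vertex $u$ with probability $d(u)/m$, linearity gives $\E[X]=\sum_{u}\frac{d(u)}{m}\cdot\frac{m}{p}\binom{d(u)-1}{p-1}=\sum_{u}\binom{d(u)}{p}=\cn s_p$, and each draw costs $O(1)$ queries and $O(1)$ time.

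\textbf{Range and number of samples.} Next I would bound the multiplicative range of $X$, which is what controls the sample count. We have $X\ge 0$ and $X\le \frac{m}{p}\binom{\dmax-1}{p-1}=O(m\,\dmax^{p-1})$; since the vertex attaining $\dmax$ alone contributes $\binom{\dmax}{p}$ to $\cn s_p$ we get $\dmax\le O(\cn s_p^{1/p})$, and trivially $\dmax\le n$. Hence $X\le B\cdot\cn s_p$ with $B=O\!\left(m\cdot\min\{n^{p-1},\cn s_p^{(p-1)/p}\}/\cn s_p\right)=O\!\left(\min\{mn^{p-1}/\cn s_p,\ m/\cn s_p^{1/p}\}\right)$. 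Therefore $\Var(X)\le\E[X^2]\le B\,\cn s_p\cdot\E[X]=B\,\cn s_p^2$, so by Chebyshev the average $\bar X$ of $t=\Theta(B/\eps^2)$ independent copies of $X$ satisfies $\Pr[\,|\bar X-\cn s_p|>\eps\,\cn s_p\,]\le 1/3$. For the lemma as used here $\eps$ is a constant, so $t=\Theta(B)$ samples suffice, and after a trivial rescaling $\bar X$ lies in $[\cn s_p,2\cn s_p]$ with probability at least $2/3$ (the general $(1\pm\eps)$ guarantee is the same estimator with the $1/\eps^2$ factor kept).

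\textbf{Unknown $B$, the $\loglog n$ overhead, and the $\min\{m,\cdot\}$ cap.} The obstacle is that $B$ depends on the unknown $\cn s_p$, so the estimator must be wrapped in a geometric search: try guesses $\widehat s=n^p, n^p/2, n^p/4,\dots$, and for each guess run the above with $t=\Theta(B(\widehat s))$ samples, accepting as soon as $\bar X$ is consistent with the current guess (roughly $\bar X\gtrsim\widehat s$). Because $B(\widehat s)$ is itself geometric in the guess index, the total work telescopes to $O(B(\cn s_p))$ up to the cost of amplifying each of the $O(\log n)$ individual tests enough to union-bound soundly over all scales, which is exactly where the $\loglog n$ factor (rather than $\log n$) enters, via the refined search schedule of the cited paper. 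Finally, I would cap the whole procedure at $O(m)$ queries: if it ever reaches that budget, read all degrees (or recover every edge through $O(m\log n)$ uniform edge samples) and return $\cn s_p$ exactly, which yields the outer $\min\{m,\cdot\}$. I expect this search step to be the main technical point: when the sample count is set from a wrong guess $\widehat s$ for which $B(\widehat s)$ underestimates the true $B$, one must still guarantee a sound accept/reject decision — the standard remedy is to truncate $X$ at a degree threshold determined by $\widehat s$ and bound the (one-sided) bias this truncation introduces, noting that near the correct scale the threshold exceeds $\dmax$ so only a constant fraction of the mass is lost.
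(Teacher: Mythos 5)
This lemma is not proved in the paper at all: it is imported verbatim as Theorem~1 of the cited work \cite{counting_stars_edge_sampling}, so there is no in-paper argument to compare against. Judged on its own terms, your reconstruction is the standard (and correct) proof of that result. The unbiased estimator $X=\frac{m}{p}\binom{d(u)-1}{p-1}$ from a uniform oriented edge, the identity $\E[X]=\sum_u\binom{d(u)}{p}=\cn s_p$, the range bound via $\dmax\le\min\{n,O(\cn s_p^{1/p})\}$ giving $\Var(X)\le B\cn s_p^2$ with $B$ equal to the claimed complexity, and the Chebyshev sample count $\Theta(B/\eps^2)$ are all right. You also correctly locate the two genuinely delicate points: the geometric search over guesses (with the $\loglog n$ factor coming from median-amplifying each of the $O(\log n)$ scales so the union bound over scales goes through), and the soundness of accept/reject at wrong scales, which is handled by one-sided Markov bounds for overestimates together with degree truncation near the correct scale — this is exactly the mechanism in the cited paper.

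One small imprecision: your fallback for the $\min\{m,\cdot\}$ cap (reading all degrees, or recovering all edges from $O(m\log n)$ uniform edge samples) costs $n$ or $m\log n$ queries, not $O(m)$. The paper itself is loose on this point (its footnote states the cap as $\min\{n+m,\,m\log n\}$), so this does not undermine your argument, but if you want the cap exactly as stated in the lemma you should either say $O(n+m)$ or note that for the purposes of this paper the cap is only used up to logarithmic factors.
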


Given an estimate $\hSp$ on $\Sp$, our algorithm sets an upper bound\footnote{Observe that $\dmax$ is $\dmax=\max_{v}d(v)$, while $\du$ is simply a bound on $\dmax$, so that $\dmax\leq \du$.} $\du$ on the maximal degree, $\du=\min\{n,\hSp\}$. It then tries to sample a copy of a $p$-star as follows. In each sampling attempt it  
 samples a single edge $(v_0,v_1)$, and then performs $p-1$ neighbor queries $nbr(v_0,i_j)$ for $j=2\ldots p$, where each $i_j$ is chosen independently and uniformly at random from $[\du]$.  In order to ensure that the sampled neighbors are distinct, and to avoid multiplicity issues, a $p$-star is returned only if its petals are sampled in ascending order of ids.
In  every such sampling attempt, each specific $p$-star  is therefore sampled with equal probability $\frac{1}{m\cdot \du^{p-1}}$. Hence, invoking the above $\frac{m\cdot \du^{p-1}}{\Sp}$ times, in expectation, returns a uniformly distributed copy of a $p$-star.

\alg{
	{\SaS$(p,n, \hSp)$} \label{alg:sample_star}\label{sas}
	\smallskip
	\begin{compactenum}
		\item Let $\du=\min\{n, (c_p\cdot \hSp)^{1/p}\}$ for a value $c_p$ as specified in the proof of Theorem~\ref{lem:star-sampler}. \label{step:set-du}
		\item While \textbf{TRUE}:
		\begin{compactenum}
			\item Perform a uniform edge query, an denote the returned  edge $(v_0,v_1)$.\label{step:samp-edge}
			\item Choose $p-1$ indices $i_2, \ldots, i_{p}$ uniformly at random in  $[\du]$ (with replacement).  \label{step:samp-nbrs}
			\item  For every $j\in [2..p]$, query the $i_j \th$ neighbor of $v_0$. Let 
			 $v_2, \ldots, v_{p}$ be the returned vertices, if all queries returned a neighbor. Otherwise break.
			 \item If $id(v_2)<id(v_2)<\ldots<id(v_p),$ then \textbf{return} $(v_0,v_1, \ldots, v_{p})$.	
		\end{compactenum}
	\end{compactenum}
}

\begin{theorem}
	\label{lem:star-sampler}
	\sloppy
Assume that  $\hSp\in [\cn s_p, c\cdot \cn s_p]$ for some small constants $c$. 
The procedure \SaS$(p,\hSp)$ returns a uniformly distributed $p$-star in $G$. The expected query complexity of the procedure is  $O\left(\min\left\{\frac{m\cdot n^{p-1}}{\Sp},\frac{m}{\Sp^{1/p}} \right\}\right)$.
\end{theorem}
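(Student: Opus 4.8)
The plan is to establish two things separately: correctness (that the returned $p$-star is uniformly distributed) and the bound on the expected query complexity. For correctness, I would first analyze a single iteration of the while loop. In each iteration, step~\ref{step:samp-edge} returns a uniformly random oriented edge $(v_0,v_1)$, so $v_0$ is picked as the center with probability $d(v_0)/m$, and conditioned on this, $v_1$ is a uniformly random neighbor of $v_0$. Then steps~\ref{step:samp-nbrs}--\ref{step:samp-nbrs} draw $p-1$ indices uniformly in $[\du]$; each neighbor query succeeds only if the index falls in $[d(v_0)]$, so each $i_j$ yields a specific neighbor $v_j$ of $v_0$ with probability $1/\du$. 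The ascending-id filter in the last substep ensures that each \emph{unordered} set of $p$ distinct petals around center $v_0$ is accepted through exactly one ordering of the $i_j$'s (and when two of the sampled neighbors coincide, the strict inequality fails, so such degenerate draws are rejected). Hence any fixed $p$-star (a center together with a $p$-subset of its neighbors) is output in a given iteration with the \emph{same} probability, namely $\frac{1}{m}\cdot\frac{(p-1)!}{\du^{\,p-1}}$ up to the fixed combinatorial factor $(p-1)!$ coming from the ordered-to-unordered correspondence, independent of which $p$-star it is. Since every iteration is i.i.d.\ and the loop returns the first accepted star, the output is uniform over all $p$-stars; I should also remark that for this to be well-defined we need $\du\geq\dmax$, which is guaranteed because $d_{max}\leq\min\{n,\cn s_p^{1/p}\}$ and $\hSp\geq\cn s_p$, so with the constant $c_p$ chosen appropriately $\du=\min\{n,(c_p\hSp)^{1/p}\}\geq\dmax$.

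For the running time, I would compute the acceptance probability of a single iteration. Summing the per-star probability over all $p$-stars, the probability that a given iteration returns \emph{some} star is $\Theta\!\left(\frac{\cn s_p}{m\cdot\du^{\,p-1}}\right)$ (the sum of $s_p(v_0)=\binom{d(v_0)}{p}$ over centers, times the $1/m$ and $1/\du^{p-1}$ factors, up to the $(p-1)!$ constant absorbed in $\Theta$ for constant $p$). Therefore the expected number of iterations is $O\!\left(\frac{m\cdot\du^{\,p-1}}{\cn s_p}\right)$, and each iteration costs $O(p)=O(1)$ queries. Plugging in $\du=\min\{n,(c_p\hSp)^{1/p}\}$ and using $\hSp=\Theta(\cn s_p)$ gives
\[
O\!\left(\frac{m\cdot\min\{n,\cn s_p^{1/p}\}^{\,p-1}}{\cn s_p}\right)=O\!\left(\min\left\{\frac{m\cdot n^{p-1}}{\cn s_p},\frac{m\cdot \cn s_p^{(p-1)/p}}{\cn s_p}\right\}\right)=O\!\left(\min\left\{\frac{m\cdot n^{p-1}}{\cn s_p},\frac{m}{\cn s_p^{1/p}}\right\}\right),
\]
which is exactly the claimed bound. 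The constant $c_p$ just needs to be large enough that $(c_p\cn s_p)^{1/p}\geq\dmax$ whenever $\cn s_p^{1/p}<n$; since $\dmax^p\leq\cn s_p\cdot p!$ when $\dmax\geq p$ (as $\binom{\dmax}{p}\geq(\dmax/p)^p$ crudely, or more precisely $\binom{\dmax}{p}\geq\dmax^p/p^p$), taking $c_p=p^p$ or similar suffices, and I would just cite a clean inequality of this form.

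The main obstacle I anticipate is the careful bookkeeping around the ascending-id acceptance rule and degenerate samples: I must argue that it does not distort uniformity, i.e., that every $p$-star is accepted with \emph{exactly} the same multiplicity of ordered preimages $(i_2,\dots,i_p)$ — namely $(p-1)!$, one for each permutation of its petals — and that repeated indices (or indices hitting the same neighbor via index collisions, which here cannot happen since distinct indices in $[\du]$ map to distinct neighbor slots) contribute zero. This is routine but is the place where an off-by-a-factor error would break the uniformity claim, so I would state it as a short self-contained combinatorial observation before assembling the two parts. A minor secondary point is handling the edge case $\cn s_p=0$ (no $p$-stars), where the procedure would loop forever; I would note that the theorem implicitly assumes $\cn s_p>0$, consistent with the motif-sampling context where the relevant counts are positive.
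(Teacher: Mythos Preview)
Your proposal is correct and takes the same approach as the paper: show $\du\ge\dmax$ via an inequality of the form $\dmax^p\le c_p\,\cn s_p$, compute the per-star acceptance probability in a single iteration to establish exact uniformity, and invert that probability to bound the expected number of iterations. One small correction on the bookkeeping you flagged as delicate: the per-star probability is $\frac{1}{m\cdot\du^{p-1}}$ (or $\frac{p}{m\cdot\du^{p-1}}$ if the ascending filter does not constrain $v_1$), not $\frac{(p-1)!}{m\cdot\du^{p-1}}$ --- the filter accepts exactly one ordering of $(v_2,\dots,v_p)$, so there is no $(p-1)!$ multiplier; since this constant is the same for every star it affects neither uniformity nor the asymptotic bound.
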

\begin{proof}

\sloppy
Let $c_p$ denote the minimal value such that for every $k\in [n]$, $c_p \cdot \binom{k}{p} \geq k^p$
(note that $c_p=\Theta(p!)$).
Then $\Sp =\sum_{v \in V}\binom{d(v)}{p}>\binom{\dmax}{p} \geq  \dmax^p/c_p$, and by the assumption on $\hSp$, $\dmax < (c_p \cdot \Sp)^{1/p} \leq (c_p\cdot \hSp)^{1/p}$. It follows by the setting of $\du=\min\{n,(c_p\cdot \hSp)^{1/p}\}$ 	in Step~\ref{step:set-du}, that $\du\geq \dmax$.

Consider a specific copy $\bar S_p=(a_0, a_1, \ldots, a_p)$ of a $p$-star in $G$, where $a_0$ is the star center and $a_1$ through $a_p$ are its petals in ascending id order. In each iteration of the while loop, the probability that $\bar S_p$ is returned is 
\begin{align*}
 \Pr[\bar S_p \text{ is returned}] &=\Pr[(a_0, a_1) \text{ is sampled in Step~\ref{step:samp-edge}}] 
\cdot \Pr[a_2,\mydots, a_p \text{ are sampled in Step~\ref{step:samp-nbrs}}]\\
	&= \frac{1}{m}\cdot \frac{1}{\du^{p-1}} \;. \numberthis \label{eq:samp-prob-star}
\end{align*}
Note the the last equality crucially depends on $d(v)\leq \dmax\leq \du$ for all $v\in V$. (Indeed, if there exists a vertex $v$ with degree $d(v) > \du$, then some of its incident stars will have zero probability of being sampled.) Hence, each copy is sampled with equal probability, implying that the procedure returns a uniformly distributed copy of a $p$-star.

We now turn to bound the expected query complexity.
It follows from  Equation~\ref{eq:samp-prob-star} and the setting of $\du$, that the success probability of a single invocation of the while loop is 
$ \frac{\Sp}{m\cdot \du^{p-1}}$. Hence, 
the expected number of invocations is $\frac{m\cdot \du^{p-1}}{\Sp}$. 
It follows that, for a constant $p$,  the expected number of invocations is \[O\left(\frac{m\cdot \min\{n, (c_p\cdot \Sp)^{1/p} \}^{p-1}}{\Sp}\right) = O\left(\min\left\{ \frac{m\cdot n^{p-1}}{\Sp}, \frac{m}{\Sp^{1/p}} \right\}\right) .\]
Since the query complexity and running time of a single invocation of the while loop are constant, the above is also a bound on the expected query complexity and running time of the while loop. 
\end{proof}

In the full version of this paper, we explain how  algorithm \SaS\ can be slightly modified to produce an $\ell_p$-sampler, \lps\,
as specified in Theorem~\ref{thm:lpsamp}.

\ifnum\fullversion=1
\medskip
\textbf{A sublinear $\ell_p$-sampler}
The algorithm \SaS\ can be slightly modified to produce an $\ell_p$-sampler, \lps\,
so that each vertex $v$ is returned with probability $d(v)^p/\mu_p$. First we assume that we are given a value $\hat \mu_p$ which is a constant factor estimate of $\mu_p$, rather than being given an estimate $\hSp$ on $\cn s_p$ (recall that $\mu_p=\sum_{v} d(v)^p$). As discussed in~\cite{counting_stars_edge_sampling,ERS19-sidma}, this can be obtained 
with slight modifications to their algorithms.  
Given $\hat \mu_p$ such that $\hat{\mu}_p\in [\mu_p, c\cdot \mu_p]$ for some small constant $c$, the algorithm sets $\du=\min\{n, \hat \mu_p^{1/p}\}$. It then proceeds as in \SaS, to query a uniformly distributed edge $(v_0,v_1)$, and to perform $p-1$ $i\th$-neighbor queries for $p-1$ indices chosen uniformly at random (with replacement) in $[\du]$. If all neighbor queries succeed, then the algorithm returns $v_0$.
To see that every vertex is returned with probability  $d(v)^p/\mu_p$,
fix an iteration of the while loop. The probability that a vertex $v$ is returned is 
\begin{align*}
	\Pr[v \text{ is returned}] &=\frac{d(v)}{m}\cdot \left(\frac{d(v)}{\du}\right)^{p-1}
	= \frac{d(v)^{p}}{m \cdot \du^{p-1}} \;. \numberthis 
\end{align*}
Hence, the probability that any vertex is returned in a single invocation is $\frac{\mu_p}{m \cdot \du^{p-1}}$. 
Therefore, for every vertex $v$, by Bias theorem, the probability that $v$ is the returned vertex is 
\[
\frac{d(u)^p}{\mu_p}.
\]
The expected query complexity and running time of the procedure are 
$O\left(\min\left\{ \frac{m\cdot n^{p-1}}{\mu_p}, \frac{m}{\mu^{1/p}} \right\}\right).$

\fi

\subsection{General motif sampler}

Our algorithm for sampling uniform copies of a motif $H$ in a graph $G$ relies on the above star sampler, and the odd cycle sampler of~\cite{Peng}.

\begin{lemma}[Lemma 3.3 in \cite{Peng}, restated]\label{lem:peng-samp-cyc}
There exists a procedure that, given a parameter $k$ and an estimate $\hat m\in [m,2m]$	, samples each specific copy of an odd cycle of length $k$ with probability $1/m^{k/2}$.
\end{lemma}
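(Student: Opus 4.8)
As this is a restatement of~\cite[Lemma~3.3]{Peng}, the plan is to follow (a variant of) their construction; I sketch it here. The guiding idea is that an odd cycle $C_k$ admits a fractional perfect matching placing weight $1/2$ on each of its $k$ edges, so the total weight is $k/2$ and the sampler should ``pay'' a factor $\sqrt m$ per edge. Write $k=2\ell+1$ and fix a cyclic labeling $v_0,v_1,\dots,v_{2\ell}$ of the copy we aim to hit. A single sampling attempt proceeds as follows. (1)~Draw $\ell$ independent uniform random oriented edges $f_1,\dots,f_\ell$, meant to be the alternating cycle edges $\{v_1,v_2\},\{v_3,v_4\},\dots,\{v_{2\ell-1},v_{2\ell}\}$, fixing once and for all a canonical assignment of orientations and of which labeled edge is $f_i$, so that each copy is produced through a unique sequence of choices. (2)~Issue pair queries to check that the $\ell-1$ ``connector'' edges $\{v_2,v_3\},\dots,\{v_{2\ell-2},v_{2\ell-1}\}$ are present. (3)~Run a dedicated sub-procedure that produces the one cycle vertex $v_0$ not incident to any $f_i$ --- equivalently, a common neighbor of $v_1$ and $v_{2\ell}$ --- and verifies the edges $\{v_0,v_1\}$ and $\{v_0,v_{2\ell}\}$, designed so that every valid $v_0$ is returned with the same probability $\Theta(1/\sqrt m)$. (4)~Output the induced copy iff every check succeeds. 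Since steps~(2) and~(4) only reject, multiplying the contributions of~(1) and~(3) shows that each specific copy of $C_k$ is returned in a given attempt with probability $\frac{1}{m^{\ell}}\cdot\Theta(1/\sqrt m)=\Theta(1/m^{k/2})$, the same for every copy, which is what the lemma asserts (and what is needed downstream, since the quoted cost $m^{k/2}/\cn o_k$ is just the expected number of attempts until the first success).

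The technical heart is step~(3): producing the ``odd-one-out'' vertex $v_0$ with a probability independent of its degree, using the estimate $\hat m$ only to fix a heavy/light threshold at $\Theta(\sqrt{\hat m})$. When $v_0$ is heavy, sample a uniform edge and keep one of its endpoints as $v_0$; this hits each heavy $v_0$ with probability $\Theta(d(v_0)/m)\ge\Theta(1/\sqrt m)$, which a rejection step rescales to the common target. When $v_0$ is light, instead sample a uniform neighbor of the already-fixed endpoint $v_1$ (symmetrically $v_{2\ell}$), which hits a light $v_0$ with probability $1/d(v_1)$; if $v_1$ is also light this is $\ge 1/\sqrt m$ and is again rescaled to the target, and the one remaining degenerate combination (both endpoints heavy but $v_0$ light) is dispatched by an extra uniform-edge sample. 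One combines these regimes with complementary probabilities so that the aggregate probability of any fixed valid $v_0$ equals the common target; since $k=O(1)$ the case tree has constant size, so the whole attempt uses $O(k^2)=O(1)$ edge samples, neighbor queries, and pair queries.

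I expect the main obstacle to be precisely this uniformization of the last-vertex step: one must guarantee that every completion $v_0$ is produced with exactly the same probability (so that the output is uniform over copies of $C_k$, not merely approximately), while keeping the per-attempt work $O(1)$ --- in particular resisting the temptation to repeat a too-weak sampling step $\sqrt m$ times, which would inflate the cost to $m^{(k+1)/2}/\cn o_k$. Everything else is routine bookkeeping: the canonical labeling in step~(1) ensures each copy is counted once, and the only degree dependence in the procedure is confined to step~(3) and is cancelled there by rejection. Given a correct step~(3), the lemma follows.
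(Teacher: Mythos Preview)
The paper does not prove this lemma; it is imported from \cite{Peng} as a black box, so there is no in-paper proof to compare against. Your overall plan --- sample $\ell=(k-1)/2$ alternating edges to pin down $v_1,\dots,v_{2\ell}$, then produce the remaining vertex $v_0$ with a fixed probability $\Theta(1/\sqrt m)$ --- is the right shape, and the two easy regimes you name (heavy $v_0$ via an edge endpoint; light $v_0$ via a uniform neighbor of a light $v_1$ or $v_{2\ell}$) are correct.

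The gap is precisely the branch you call ``degenerate'': $v_0$ light while both cycle-neighbors $v_1,v_{2\ell}$ are heavy. Your remedy, ``an extra uniform-edge sample,'' cannot achieve the target here: an edge sample reaches a light $v_0$ only with probability $d(v_0)/m<1/\sqrt m$, neighbor-sampling from $v_1$ or $v_{2\ell}$ gives $1/d(v_1)<1/\sqrt m$ and $1/d(v_{2\ell})<1/\sqrt m$, and no $O(1)$-size convex combination of these reaches $\Theta(1/\sqrt m)$. Because step~(1) commits to a single canonical labeling, any copy whose canonical $v_0$ falls into this configuration is systematically undersampled and the output is not uniform (take $v_0$ of degree~$2$ with $v_1,v_{2\ell}$ of degree $\Theta(n)$ in a graph with $m=\Theta(n)$). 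The standard repair is to drop the fixed root: let the procedure range over all $2k$ rooted-oriented versions of the cycle, simply fail in step~(3) whenever the current root is bad, and then --- since every odd cycle has at least one good root (any heavy cycle-vertex is a good root via the edge-endpoint method; if all cycle vertices are light then every root is good via the neighbor method) --- after a successful sample use the now-known degrees of all $k$ vertices to count the good roots of that copy and reject accordingly, equalizing the per-copy probability. This keeps the per-attempt cost at $O(k^2)=O(1)$ and restores exact uniformity; the rest of your outline then goes through.
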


It follows that by repeatedly invoking the procedure above until an odd cycle is returned we can get an odd cycle sampling algorithm.
\begin{corollary}\label{cor:cycle-sampler}
	There exists a procedure, \SoC,  that, given an estimate $\hat m\in [m,2m]$,  returns a uniformly distributed copy of an odd cycle of length $k$. The expected query complexity is  $O\left(\min\left\{m\log n,n+m,\frac{m^{k/2}}{\cn o_k}\right\}\right)$, where $\cn o_k$ denotes the number of odd cycles of length $k$ in $G$.
\end{corollary}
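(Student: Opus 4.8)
The plan is to turn the one-shot sampler of Lemma~\ref{lem:peng-samp-cyc} into \SoC by a standard rejection-sampling (retry) loop, and then to cap its cost by the trivial ``read the whole graph'' strategy so as to obtain the $\min$ with $n+m$ and $m\log n$.

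First I would argue uniformity. Fix one invocation of the procedure of Lemma~\ref{lem:peng-samp-cyc} with the given estimate $\hat m\in[m,2m]$. By that lemma, each of the $\cn o_k$ specific copies of a length-$k$ odd cycle is returned with probability exactly $1/m^{k/2}$, so the invocation returns \emph{some} copy with probability $\cn o_k/m^{k/2}$, and conditioned on that event the returned copy is uniform over the $\cn o_k$ copies (all $\cn o_k$ outcomes being equiprobable). Running invocations with independent randomness until one succeeds, and outputting the copy of that first successful invocation, therefore returns a uniformly distributed copy. (The case $\cn o_k=0$ is handled by the fallback below.)

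Next, the number of invocations until the first success is geometric with success probability $\cn o_k/m^{k/2}$, hence has expectation $m^{k/2}/\cn o_k$; since each invocation of Lemma~\ref{lem:peng-samp-cyc}'s procedure costs $O(1)$ queries for constant $k$, the expected query complexity of the retry loop is $O(m^{k/2}/\cn o_k)$. To get the $\min$ with $n+m$, I would interleave the retry loop, step for step, with a deterministic full scan of $G$ ($deg(v)$ for every $v$, then $Nbr(v,i)$ for every $v$ and every $i\le d(v)$), which reveals all of $G$ after $O(n+m)$ queries; whichever of the two processes halts first determines the output, and if the scan halts first we simply enumerate all length-$k$ odd cycles in $G$ and output a uniformly random one (or report that there are none). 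The total cost is $O(\min\{T_{\mathrm{loop}},\,n+m\})$ with $\E[T_{\mathrm{loop}}]=O(m^{k/2}/\cn o_k)$, and $\E[\min\{X,c\}]\le\min\{\E[X],c\}$ gives the claimed $O(\min\{m^{k/2}/\cn o_k,\,n+m\})$ in expectation; the $m\log n$ term follows identically, interleaving instead $m\log n$ uniform edge queries (which reveal all of $G$ with probability $1-1/\poly(n)$, the residual failure probability being subsumed by the paper's query-complexity convention). The output is still exactly uniform: conditioned on the retry loop finishing first it is uniform by the symmetry argument above, and conditioned on a scan finishing first it is uniform by construction, so the unconditional distribution is uniform.

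The only point requiring care — rather than a genuine obstacle — is ensuring that the notion of ``specific copy'' in Lemma~\ref{lem:peng-samp-cyc} coincides with the objects counted by $\cn o_k$, so that a successful invocation is genuinely uniform over exactly the $\cn o_k$ cycles we intend to sample, and that interleaving the loop with the read-the-graph fallback preserves exact uniformity — which it does precisely because both branches are individually uniform, independently of which one finishes first.
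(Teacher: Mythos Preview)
Your proof is correct and follows the same approach as the paper, which justifies the corollary in a single sentence (``repeatedly invoking the procedure above until an odd cycle is returned'') and relies on its standing convention (the footnote in the introduction) that every stated complexity is implicitly capped by $\min\{n+m,\,m\log n\}$. Your interleaving argument makes that cap explicit, and your independence argument (that the identity of the returned cycle is independent of the geometric stopping time, so conditioning on which process halts first preserves uniformity) is a nice point the paper leaves unstated.
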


We also use the following algorithm from~\cite{GR08} to obtain an estimate of $m$.
\begin{theorem}[~\cite{GR08}, Theorem 1, restated]\label{thm:gr08}
	There exists an algorithm that, given query access to a graph $G$, the number of vertices $n$, and a parameter $\eps$, returns a value $\tilde m$, such that with probability at least $2/3$, $\tilde m\in [m,(1+\eps)m]$.
	The expected query complexity and running time of the algorithm is $O(n/\sqrt m)\cdot (\loglog n/\eps^2)$.
\end{theorem}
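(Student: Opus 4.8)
This is the Goldreich--Ron average-degree estimator~\cite{GR08}, so the plan is to recover their algorithm and its analysis. Since $n$ is given, $(1+\eps)$-estimating $m$ is equivalent to $(1+\eps)$-estimating $\davg = m/n$, and the target $O(n/\sqrt m)\cdot\poly(\log\log n,1/\eps)$ equals $O(\sqrt{n/\davg})\cdot\poly(\log\log n,1/\eps)$. The naive estimator (sample $s$ vertices uniformly, query their degrees, output $n$ times the empirical mean) is unbiased but has variance $\Theta(n\davg)$ per sample --- a dense piece on $\Theta(\sqrt m)$ vertices can carry a constant fraction of all edges --- so it would need $s=\Omega(n/\davg)$; and it is known that degree/neighbor-list queries restricted to sampled vertices cannot beat a factor of $2$. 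The extra power comes from also issuing, for each sampled vertex $v$, a uniform random-neighbor query $u\sim v$, so that the abundant low-degree vertices act as ``sensors'' exposing their high-degree neighbors.

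Concretely I would: (i) bucket vertices by degree into $t=O(\log n/\eps)$ classes $B_i=\{v:(1+\beta)^{i-1}\le d(v)<(1+\beta)^i\}$ with $\beta=\Theta(\eps)$, so that $m=\sum_i\sum_{v\in B_i}d(v)=(1\pm\beta)\sum_i(1+\beta)^i|B_i|$, hence it suffices to estimate $|B_i|$ for the classes carrying an $\Omega(\eps/t)$-share of $m$ and to argue the remaining classes contribute an $O(\eps)$-fraction; (ii) estimate every ``heavy'' class ($|B_i|\ge\theta n$, for a threshold $\theta$ chosen below) directly from $\Ot{1/\theta}$ uniform vertex samples; (iii) handle the ``light'' classes through the neighbor queries. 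For step (iii), orient each edge toward its higher-degree endpoint (ties by id), so $m=2\sum_v d^{+}(v)$ with $d^{+}(v)\le d(v)$ and every out-edge of $v$ landing in a class of index at least that of $v$; for each heavy $B_i$ also estimate, from a sampled $v\in B_i$ and a random neighbor, the fraction of $B_i$'s out-edges landing in each higher class $B_j$, which counts the edges from a heavy class to a light high-degree class ``from the heavy side''. The only edges then unaccounted for are those between two light classes; split degrees at $\sqrt m$: there are at most $m/\sqrt m=\sqrt m$ vertices of degree $\ge\sqrt m$ in total and at most $t\theta n$ light-degree vertices, so such edges number $O(t\theta n\sqrt m)$, and a lone super-high-degree vertex of degree $\Theta(m)$ is in fact caught with constant probability by a single vertex+neighbor sample, because a random neighbor hits it with probability $\approx d(v)/m$.

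Choosing $\theta=\Theta(\eps\sqrt m/(tn))=\Theta(\eps\sqrt{\davg/n}/t)$ makes the untracked contribution $O(\eps m)$ while each heavy-class estimate costs $\Ot{1/\theta}=\Ot{t\sqrt{n/\davg}/\eps}$ queries. Since $\davg$ is unknown, run the estimator with a geometrically increasing sample budget and stop once the output is consistent with the budget used; the total cost is geometric, hence dominated by the final, correct run. Taking $O(1/\eps^2)$ samples per bucket for the $(1\pm\eps)$ accuracy, and boosting each of the $t=O(\log n)$ bucket estimates to failure probability $O(1/t)$ by a median of $O(\log t)=O(\log\log n)$ independent copies followed by a union bound, gives overall $O(\sqrt{n/\davg})\cdot(\log\log n/\eps^2)=O(n/\sqrt m)\cdot(\log\log n/\eps^2)$.

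The main obstacle is step (iii): pushing the query cost down to $\Ot{\sqrt{n/\davg}}$ rather than to something that scales with $\dmax$ requires exactly this two-sided accounting --- heavy classes estimated by vertex sampling, light high-degree classes detected through the neighbor queries of their numerous low-degree neighbors, and a single super-high-degree vertex found directly as a neighbor --- with the thresholds tuned so that precisely the edges that are expensive to observe directly are the ones whose total mass is negligible. The bucketing, the Chebyshev/median concentration, and the geometric search over $\davg$ are all routine once step (iii) is in place.
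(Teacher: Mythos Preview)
The paper does not prove this statement at all: Theorem~\ref{thm:gr08} is quoted verbatim from~\cite{GR08} (as the label ``Theorem~1, restated'' indicates) and is used purely as a black box in the analysis of \SH. So there is no proof in the paper to compare your proposal against.

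That said, your sketch is a faithful high-level outline of the Goldreich--Ron algorithm: the degree bucketing, the heavy/light split with threshold tied to $\sqrt{m}$, the use of a random-neighbor query to expose high-degree endpoints from the low-degree side, and the geometric search over the unknown $\davg$ are exactly the ingredients of~\cite{GR08}. For the purposes of this paper, though, none of this is needed---you can simply cite the result.
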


Our motif sampling algorithm  invokes the star-sampler and odd-cycles-sampler for each of the star and odd-cycles components in $\mD(H)$, respectively. Once actual copies of all the components are sampled, it checks whether they form a copy of $H$ in $G$, using $O(|H|^2)=O(1)$ additional pair queries.

\alg{
    {\bf \SH$\;(H,n)$} \label{alg:count-H}
    \smallskip
    \begin{compactenum}
    	\item Compute a 2-factor estimate $\hat m$ of $m$ by invoking the algorithm of~\cite{GR08} with $\eps=1/2$ for $10\log n$ times, and letting $\hat m$ be the median of the returned values.
    	\item Compute an optimal decomposition of $H$,  $\mD(H)=\decomp$. \label{step:decomp}
    	\item For every $S_{p_i}$ in $D$, invoke algorithm \textsf{Moment-Estimator}  with $\eps = 1/2$ and $r=p_i$  for $t=10\log(n\cdot \ell)$ times		
    	to get $t$ estimates of $\cn s_{p_i}$.\label{repeat-t-SH}
    	Let $\chat s_{p_i}$  be the median value among the $t$ received estimates of each $S_{p_i}$. \label{repeat-t}
 
		\item While \textbf{True:}
		\begin{compactenum}
         \item For every $i\in [q]$ do:
          \begin{compactenum}
                \item Invoke \SoC($k_i, \hat m$), and let $\bar O_i$ be the returned odd cycle.  \label{step:samp-cycle}
            \end{compactenum}
            \item For every $i\in [\ell]$ do: 
            \begin{compactenum}
				\item Invoke \SaS($p_i,n,\chat s_{p_i}$), and let $\bar S_j$ be the returned $s_j$-star. \label{step:samp-star} 
            \end{compactenum}
			\item Perform $O(|H|^2)$ pair queries to verify whether the set of components $\{\bar O_1, \ldots, \bar O_q, \bar S_1, \ldots, \bar S_{\ell}\}$ can be extended to a copy of $H$ in $G$.\label{step:verify-copy}
			\item If a copy of $H$ is discovered, then \textbf{return} it.\label{step:return-H} 
			\item If the number of queries performed exceeds $n+\hat m$, then query all edges of the graph\footnote{by either performing $n$ degree queries and $2m$ neighbor queries, or $10m\log n$ uniform edge queries} and output a uniformly distributed copy of $H$.\label{step:query_all}
    \end{compactenum} 
   \end{compactenum}
}


We are now ready to prove our main upper bound theorem, which we recall here.
\ubMain*


\begin{proof}

By Theorem~\ref{thm:gr08}, when invoked with a value $\eps=1/2$, the edge estimation algorithm of ~\cite{GR08} returns a value $\tilde m$ such that, with probability at least $2/3$, $\tilde m\in [m,1.5m]$. Hence, with  probability at least $1-1/3n^2$, the median value $\hat m$ of the $10\log n$ invocations is such  that $\hat m\in [m,1.5m]$. We henceforth condition on this event.

We next prove  that with probability at least $1-1/3n^2$, all the computed $\chat s_{p_i}$ values are good estimates of $\cn s_{p_i}$.
By Lemma~\ref{lem:star-est-ERS}, for a fixed $p_i$, with probability at least $2/3$, the value returned from \textsf{Moment-Estimator} is in $[\chat s_{p_i}, 1.5\cdot  \chat s_{p_i}]$. 
Therefore, the probability that the median value of the $t=10\log(n\ell)$ invocations in Step~\ref{repeat-t-SH} is outside this range is at most $1/(3\ell n^2)$.
Hence, taking a union bound over all $i\in[\ell]$, with probability at least $1-1/3n^2$, for every $i\in[\ell]$, $\chat s_{p_i}\in  [\cn s_p, 1.5 \cdot \cn s_p]$. We henceforth condition on this event as well.

Fix a copy $H'$ of $H$ in $G$, and let $ O'_1, \mydots,  O'_q,   S'_1, \mydots,  S'_\ell$ be its cycles and stars,  corresponding  to those of $\mD(H)$. 
By Corollary~\ref{cor:cycle-sampler}, for each $O'_i$, its probability of being returned in Step~\ref{step:samp-cycle} is $1/\cn O_{k_i}$. Similarly, by Lemma~\ref{lem:star-sampler}, for each $S'_i$, its probability of being returned in Step~\ref{step:samp-star} is $1/\cn S_{p_i}$.	
Therefore, in the case that the number of queries does not exceed $\hat m$, in every iteration of the loop, each specific copy of $H$ is returned with equal probability $\frac{1}{\Pi_{i=1}^q \cn O_{k_i}\cdot \Pi_{i=1}^\ell \cn S_{p_i}}$.~\footnote{To avoid multiplicity issues, if some components are repeated in the decomposition more than once, then we can assign ids to small components and verify they are sampled in ascending id order.} Hence, once a copy of $H$ is returned, it is uniformly distributed in $G$. In the case that the number of queries exceeds $\hat m$, the algorithm either performs $n+2m$ queries to query all the neighbors of all vertices, or $10m\log n$ queries, in order to discover all edges with high probability. In the former case, the entire graph $G$ is known. In the latter case, by the coupon collector analysis, the probability that all edges are known at the end of the process is at least $1-1/3n^2$. Hence, with probability at least $1-1/3n^2$, at the end of this process, a uniformly distributed copy of $H$ is returned.

It remains to bound the query complexity. 
By Lemma~\ref{lem:star-est-ERS}, Step~\ref{repeat-t-SH} takes $\sum_{p_i}t\cdot \min\left\{\frac{m\cdot n^{p_i-1}}{\cn s_{p_i}} , \frac{m}{\cn S_{p_i}^{1/p_i}} \right\}\cdot \log n\loglog n$ queries in expectation. 
By the above discussion, it holds that the expected number of invocations of the while loop is $\frac{\Pi_{i=1}^q \cn O_{k_i}\cdot \Pi_{i=1}^\ell \cn S_{p_i}}{\HH}$.
Furthermore, by Lemma~\ref{lem:star-sampler}, the expected  query complexity of sampling each  $S_{p_i}$ is  $\min\left\{\frac{m\cdot n^{p_i-1}}{\cn s_{p_i}} , \frac{m}{\cn S_{p_i}^{1/p_i}} \right\}$. 
By Lemma~\ref{cor:cycle-sampler}, the expected running time of each invocation of the $k_i$-cycle sampler is $O\left(\frac{m^{k_i/2}}{\cn O_{k_i}}\right)$. 
The complexity of Step~\ref{step:verify-copy} is $O(|H|^2)=O(1)$ queries, and is subsumed by the complexity of the other steps.
Hence, the expected cost of each invocation of the while loop is 
\[\max_{i \in [q]}\left\{\frac{m^{k_i/2}}{\cn O_{k_i} } \right\} + \max_{i\in [\ell]}\left\{\min\left\{\frac{m}{\cn S_{p_i}^{1/p_i}}, \frac{m\cdot n^{p_i-1}}{\cn S_{p_i}}\right\} \right\} =
 \max_{i \in [q]}\left\{\frac{m^{k_i/2}}{\cn O_{k_i} } \right\} + \min\left\{\frac{m}{\cn S_{p}^{1/p}}, \frac{m\cdot n^{p-1}}{\cn S_{p}} \right\},\]
 where the equality holds since the maximum of the second term is always achieved by the largest star in the decomposition, $S_p.$
 Also, due to Step~\ref{step:query_all} and the assumption on $\hat{m}$, the query complexity of algorithm is always bounded by  $O(\min\{m\log n, n+m\})$. 
Therefore, the overall expected query complexity is the minimum between $O(\min\{m\log n, n+m\})$ and 
\begin{align*}
&O\left( \left(\max_{i\in[q]}\left\{\frac{m^{k_i/2}}{\cn O_{k_i}}\right\} + \min\left\{ \frac{m\cdot n^{p-1}}{\Sp}, \frac{m}{\Sp^{1/p}}\right\}\cdot \log n \loglog n   \right)\cdot \frac{\prod_{i\in[r]}\cn c_i}{\HH} \right)
\\& =O\left( \min\left\{\max_{i\in[r]}\left\{cost(C_i) \right\}  \cdot \frac{\prod \cn c_{i} }{\HH},m\right\} \cdot \log n\loglog n\right)
\\& =O\left(\min\left\{\dc(G,H,\mD(H)),m,n \right\}\cdot \log n\loglog n\right),
\end{align*}
as claimed.
\end{proof}

\ifnum\fullversion=1 
\subsection{From sampling to estimating}
\label{sec:from_sampling_to_estimating}
%
%
%
%

\begin{theorem}\label{thm:samp_to_est}\sloppy
There exists an algorithm $\mA$ that returns an estimate $\ohest$ of $\oh$ such that with probability at least $2/3$, $\ohest\in (1\pm\eps)\oh$ and the query complexity of $\mA'$ is 
$O\left(\min\left\{decomp-cost(G,H,\mD(H)),m \right\}\cdot \frac{\log n\loglog n}{\eps^2}\right)$.
\end{theorem}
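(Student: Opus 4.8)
The plan is to convert the uniform sampler \SH\ into an estimator via the standard ``estimate a small Bernoulli bias from repeated trials'' technique, arranging things so that the per-trial expected cost and the expected number of trials multiply to exactly $\dc(G,H,\mD(H))$. First I would run the preprocessing of \SH: compute a $2$-factor estimate $\hat m$ of $m$ and the optimal decomposition $\mD(H)=\decomp$, and in addition, using \textsf{Moment-Estimator} (Lemma~\ref{lem:star-est-ERS}) with approximation parameter $\eps$ for each star component and the odd-cycle approximate counting algorithm of~\cite{AKK19,Peng} with parameter $\eps$ for each cycle component, compute values $\chat c_i$ with $\chat c_i\in(1\pm\eps)\cn c_i$ for every basic component $C_i\in\mD(H)$ (where $\cn c_i$ is the count of $C_i$ in $G$). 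Standard median amplification makes all these estimates simultaneously good with probability $1-1/\poly(n)$, and the cost is $O(\dc(G,H,\mD(H))\cdot\log n\loglog n)$ in expectation, within the claimed bound. Then define a \emph{trial}: invoke \SoC\ once for each cycle component and \SaS\ once for each star component of $\mD(H)$ (each call returns an \emph{actual} uniformly distributed copy of the requested component, by Corollary~\ref{cor:cycle-sampler} and Theorem~\ref{lem:star-sampler}), and perform the $O(|H|^2)$ pair queries of Step~\ref{step:verify-copy} to test whether the sampled components extend to a copy of $H$ in $G$. Exactly as in the proof of Theorem~\ref{thm:intro_ub}, using the ascending-id conventions to avoid multiplicity issues, a trial succeeds with probability exactly
\[ p \;=\; \frac{a_H\cdot\cn h}{\prod_{i\in[r]}\cn c_i}, \]
where $a_H$ is a combinatorial constant depending only on $H$ and computable from $H$.

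Since $p$ is not known a priori, I would run independent trials until $\Theta(\eps^{-2}\log n)$ of them have succeeded, letting $\hat p$ be the number of successes divided by the number of trials performed. A Chernoff bound applied to this negative-binomial stopping rule gives, with probability $1-1/\poly(n)$, both $\hat p\in(1\pm\eps)p$ and that the number of trials is $\Theta(1/(p\eps^2))$. Output
\[ \ohest \;=\; \frac{\hat p}{a_H}\cdot\prod_{i\in[r]}\chat c_i. \]
Since $p\cdot\prod_i\cn c_i/a_H=\cn h$ exactly and each $\chat c_i\in(1\pm\eps)\cn c_i$, this yields $\ohest\in(1\pm O(\eps))\cn h$; rescaling $\eps$ by a constant gives the claimed $(1\pm\eps)$-approximation, which holds with probability at least $2/3$ after a union bound over the $O(1)$ bad events (or a constant number of independent repetitions and a median).

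For the query complexity, a single trial costs $O\!\left(\sum_i\cost(C_i)\right)=O(\max_i\cost(C_i))$ queries in expectation (the additive $O(|H|^2)$ is subsumed since $r=O(1)$), so by Wald's identity the expected cost of the trial phase is $O(\max_i\cost(C_i))\cdot\Theta(1/(p\eps^2))=O\!\left(\max_i\cost(C_i)\cdot\frac{\prod_i\cn c_i}{\cn h}\cdot\eps^{-2}\right)=O(\dc(G,H,\mD(H))/\eps^2)$. As in \SH, I would also maintain a running query budget and, if it ever exceeds $n+\hat m$, query the whole graph and compute $\cn h$ exactly; this caps the complexity at $O(\min\{n+m,m\log n\})$ and simultaneously handles the degenerate regime where $\cn h$ is so small that $\prod_i\cn c_i/\cn h>m$. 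Together with the preprocessing this gives overall expected query complexity $O(\min\{\dc(G,H,\mD(H)),m\}\cdot\log n\loglog n/\eps^2)$, and a Markov bound plus a constant number of restarts converts the expectation bound into an $O(1)$-probability bound of the same order.

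The main obstacle here is not conceptual depth but careful bookkeeping: coping with the unknown bias $p$ through the ``run until enough successes'' rule and analyzing a random number of random-cost trials (Wald's identity together with the negative-binomial concentration); composing the multiplicative errors of $\hat p$ and of the $\chat c_i$ without blow-up; and — the genuinely delicate point — pinning down the automorphism constant $a_H$ so that the trial success probability is \emph{exactly} proportional to $\cn h$, together with the small-$\cn h$ edge case that is responsible for the $\min\{\cdot,m\}$ term in the statement.
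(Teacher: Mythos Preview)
Your proposal is correct and follows essentially the same approach as the paper: estimate each basic-component count to a $(1\pm\Theta(\eps))$ factor, treat one invocation of the inner loop of \SH\ as a Bernoulli trial with bias $p=\Theta(\cn h/\prod_i\cn c_i)$, estimate $p$ from $O(1/(p\eps^2))$ trials via Chernoff, and solve for $\cn h$, with the $\min\{\cdot,m\}$ cap enforced by the usual query-budget cutoff. Your write-up is in fact more careful than the paper's brief paragraph (you make explicit the negative-binomial stopping rule for unknown $p$, use Wald's identity for the random-number-of-random-cost-trials accounting, and flag the automorphism constant $a_H$), but the underlying argument is the same.
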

\begin{proof}
    We prove the claim by describing the algorithm. First $\mA$ computes an optimal  decomposition of $H$, $\mD(H)=\decomp$. It then computes  estimates of all odd-cycles and stars in  $\mD(H)$ with $\eps'=\eps/3|H|$ and $\delta=1/6(q+\ell)$, using the algorithms of~\cite{AKK19} and~\cite{counting_stars_edge_sampling}, respectively. 
    Consider a single invocation of the inner loop of algorithm \SH. 
    By the analysis of algorithm \SH, it holds that the success probability of a single such invocation is $\frac{\HH}{\cn O_{k_1}\cdots \cn O_{k_q} \cdot \cn S_{p_1} \cdots  \cn S_{p_\ell}}$. Denote this probability by $p$.
    We can think of the above as tossing a coin with bias $p$.
    By the multiplicative Chernoff  bound, with probability at least $5/6$, the bias $p$ can be approximated up to a $(1\pm\eps/3)$-multiplicative error in $O(\frac{1}{\eps^2 \cdot p})$ tosses. Since  the individual counts $\cn O_{k_1}\ldots \cn O_{k_q}, \cn S_{p_1} \ldots  \cn S_{p_\ell}$ are also known up to a $(1\pm\eps/3|H|)$-multiplicative error, given the estimate of $p$, the algorithm can extract a $(1\pm\eps)$-estimate of $\HH$.
    Estimating the individual counts of the basic components of $\mD(H)$ takes $O\left(\max_{i\in[r]}\left\{cost(C_i) \right\}\cdot \log n\loglog n\right)$ time  in expectation.
    Each invocation of the inner loop of \SH\  takes $O\left(\max_{i\in[r]}\left\{cost(C_i) \right\}\right)$ time in expectation. 
    Finally, we can track the number of queries performed by the algorithm and in case it exceeds $m$ we can query all the edges of the graph, and simply return the count of $H$. 
    Hence, the expected query complexity of the algorithm is 
	$O\left(\min\left\{m,\comp\right\}\cdot \frac{\log n \loglog n}{\eps^2}\right)$. 
\end{proof}	

\fi
	
	\section{Lower Bounds}\label{sec:lb}

In this section we prove our main lower bounds statements, Theorem~\ref{thm:lb_dc} and Theorem~\ref{thm:main_lb}.
We defer the proof   that the former follows from the latter to Section~\ref{sec:dc_main}, 
and start with proving Theorem~\ref{thm:main_lb}, stated here again for the sake of convenience. 

\mainLB*

We next formalize the definition of \emph{good counts}.

\subsection{Good counts}
\label{sec:good_cnts}

\begin{definition}[Good counts]\label{def:good_cnts}
	We say that a set of counts $n,m$ and $\cn o_{k_1}, \compldots, \cn o_{k_q}, \cn s_{p_1}, \compldots , \cn s_{p_\ell}, \cn h$ is \emph{good} if the following hold.
	\begin{enumerate}
		\item \label{const:valid}
		The counts are realizable; that is, there exist a graph $G$ and a motif $H_D$ with optimal decomposition $D$ that realize these counts.
		\item \label{const:max_cost}
		The max component cost is due to an odd cycle component. That is, $\textrm{argmax}_{i\in [r]}\{cost(C_i)\}=O_{k_i}$ for some odd cycle component $O_{k_i}\in D$.  Assume without loss of generality that $O_{k_1}$ is the odd cycle that maximizes $\max_{i\in [r]}\{cost(C_i)\}$. 
		\item \label{const:cycles}
		$\forall k_j>k_i$, if $\cn o_{k_i}\leq \sqrt m^{k_i-1}$, then , $\left( \cn O_{k_j}\right)^{1/(k_j-1)} \geq\left(\cn o_{k_i}\right)^{1/(k_i-1)}$.
		
		Otherwise, if $\cn o_{k_i}> \sqrt m^{k_i-1}$,  $\left(  \cn O_{k_j}\right)^{1/k_j}\geq \left( \cn O_{k_i}\right)^{1/k_i}$.
		\item  \label{const:stars_lb}
		For every $j\in[\ell]$, $\cn S_{p_j}\geq \sqrt m^{p_j+1}$.
		\item At least one of the followings hold.\label{const:short_or_star}
		\begin{enumerate}
	        \item 
	        Let $k_*$ be the index of the $O_k$ that maximizes $\cn o_k^{1/k}$.
	        There exists at least one star $S_p$ in $D$ with  $\cn s_p=\omega\left(m\cdot (\cn o_{k_*})^{(p+1)/k_*}\right)$.
	        Observe that it always holds that $\cn o_{k_*}^{1/k_*}\leq \sqrt m$, so if  $\cn s_p=\omega(\sqrt m^{p+3})$ then this constraint holds.
	        \label{const:one_star}
			\item \label{const:short_cycles}
			For every $k_i\leq k_1$, $\cn k_i\leq \sqrt m^{k_i-1}$.
		\end{enumerate}
		\item At least one of the followings hold.\label{const:stars}
		\begin{enumerate}
			\item \label{const:stars_a}
			For at least  one of the cycles $O_k$, it holds that  $\cn o_k \leq  \sqrt m^{k-1}$, and for every $p$, $\cn s_p\geq n^p$.   
			\item \label{const:stars_b}
			The $\cn s_{p_i}$ counts are such there exists a set $A$ of $\sqrt m$ integers $a_1, \ldots, a_{\sqrt m}$ so that $\forall i,\; a_i\leq n$, $\sum_{i} a_i\leq m$, and $\sum_{i} a_i^{p_i}=\cn s_{p_i}$.
		\end{enumerate}
		
	\end{enumerate}
\end{definition}

As discussed in the introduction, some of the above constraints are unavoidable, and some arise due to the way we construct the graphs $G$ in the hard family $\mG$. Details follow.

\begin{enumerate}
  \item Constraint~\ref{const:valid} simply states that the given counts can be realized by some graph and is therefore  unavoidable. 
  \item Constraint~\ref{const:max_cost} implies that our upper bound is tight only in the case that the max cost is due to an odd cycle and not due to a star component.
We leave it as an open question  whether  for the case that the max component cost is due to a star, a new lower bound can be designed or an improved algorithm can be devised.
\end{enumerate}
The rest of the constraints arise from the way we construct  the basic structure of the graphs in the ``hard'' family of graphs in the proof of the lower bound. 
\vspace{-.2cm}
\begin{enumerate}
  \setcounter{enumi}{2}
\item Constraint~\ref{const:cycles}: for each cycle $O_{k_i}$ such that $\cn o_{k_i\geq \sqrt m^{k_i-1}}$, we ``pack'' the $\Theta(\cn o_{k_i})$ ${k_i}$ length odd cycles in a  $k_i$-partite subgraph. This inadvertently results in the creation of $\Theta((\cn o_{k_i})^{k_j/k_i})$ odd-cycles for any $k_j\geq k_i$ length odd cycle component.
\item Constraint~\ref{const:stars_lb}: Recall that in order to prove the lower bound we ``hide'' as set of $t$  crucial edges which create  $\Theta(\cn h)$ of the copies of $H_D$. To hide the edges,  we use a subgraph with density $\Theta(\sqrt m)$, which again inadvertently induces $\Theta(\sqrt m^{p+1})$ $p$-stars for every $p\in [\sqrt m]$.
\item Constraint~\ref{const:short_or_star}: 
Let $k'$ denote the min length odd cycle component in $D$. If for example $\cn o_{k'}=\sqrt m^{k'}$, then our gadget for creating  $\cn o_k$ odd cycles also maximizes (up to constant factors) the counts of all odd cycles for every $k_i$, and therefore might induce too many copies of $H_D$.
To avoid such a scenario, we require that either there exists  at least one star in $D$ with counts strictly greater than what could be created by a cycle gadget (in~\ref{const:one_star}); or that the number of short cycles, i.e., cycles of length $k_i\leq k_1$, does not exceed $\sqrt m^{k_i-1}$ (in~\ref{const:short_cycles}). In the latter case the corresponding gadget can have a single vertex which is incident to all cycles, and therefore, no two vertex-disjoint odd cycles can be formed, so that no copies of $H_D$ are formed solely by this gadget.
\item Constraint~\ref{const:stars} arises from the way we connect the odd cycles and stars in the graphs of $\mG$. The first item, ~\ref{const:stars_a}, simply states that the count of one of the cycles which is not the max cost cycle is not maximized. 
In such a case the corresponding cycle gadget will have one part with a single vertex, which will allow us to connect it to a set of $n$ vertices that induce the $\cn s_p$ counts in the corresponding star gadget.
The second item, item~\ref{const:stars_b},  states that there exists a set $A$ of $|A|\leq \sqrt m$ (rather than $n$) integers (that will later determine the degrees of $|A|$ vertices), so that for every $p$, $\sum_{a_i\in A} a_i^p=\cn s_p$.~\footnote{Note that indeed there exists many \emph{valid} counts (ones which can be realized by some graph) that satisfy this constraint. Consider first a  bipartite graph $G_0=A\cup B$ with $|A|=\sqrt m$, $|B|=n$, where each vertex in $A$ has degree $\Theta(\sqrt m)$, and each vertex in $B$ has degree $O(\sqrt m)$. Then in this graph, all star counts are exactly $\cn s_p=\sqrt m^{p+1}$ as required by the second constraint. To get higher values of the counts $\cn  s_p$, we can simply move edges around, one edge at a time, as to  skew  the set of degrees of the vertices of $A$. 
	Let $G_t$ denote the graph resulting from the above process at time $t$. This process ends after $r$ steps, with a graph  $G_r=A'\cup B'$ as follows.  $A'$ has $\davg$ vertices with degree $n$, and $\sqrt m-\davg$ vertices of degree $0$, and $B$ has $n$ vertices with degree $\davg$. This graph maximizes the $\cn s_p$ counts, $\cn s_p=\davg\cdot n^p$ for any $p$. At each time step $t$, the set of   counts $\cn s_{p_1}, \ldots, \cn s_{p_\ell}$ of the $p_i$-stars in $G_t$ satisfies constraint~\ref{const:stars_b}.}
\end{enumerate}

We note that while there are indeed many constraints required by our construction, 
these constraints are satisfiable by many sets of possible counts. Indeed in order prove that  Theorem~\ref{thm:lb_dc} follows from Theorem~\ref{thm:main_lb} (see proof of Lemma~\ref{lem:main_to_dc}), we show that for \emph{every} realizable value of $\dc(G,H,\mD(H))$, 
\emph{there exists} 
a set a set of good  counts $\{\cn c_i\}_{i\in [r]}$, 
which satisfies all of the constraints of Definition~\ref{def:good_cnts}.

We continue to describe the different ingredients required for our proof.
We make use of the framework  for proving graph estimation  lower bounds via communication complexity reductions given  in~\cite{Eden2018-approx}.
The framework makes use of the following communication problem.
\begin{theorem}
	\sloppy
	In the $\tSD$ variant of the $\SD$ problem, Alice and Bob are given $\{0,1\}$-matrices $\vec x, \vec y \in \{0,1\}^N \times \{0,1\}^N $, respectively.
	Under the promise that either there exists $t$ pairs of indices such that $x_{i,j}=y_{i,j}=1$, or that there exists $0$ such indices.
	The goal of Alice and Bob is then to distinguish between these two cases.
	We will denote the set of intersections by $\vec z$, where $\vec z_{i,j} = \vec x_{i,j} \wedge \vec y_{i,j}$.
\end{theorem}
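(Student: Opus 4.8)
I expect the statement (whose quantitative conclusion is elided in the excerpt) to assert that the bounded‑error randomized, and in fact distributional, communication complexity of $\tSD$ on $N\times N$ matrices is $\Omega(N^2/t)$ — i.e.\ it scales like the total number of coordinates divided by the gap parameter $t$. The plan is to obtain this by a direct, communication‑free reduction from the \emph{unique‑intersection} version of $\SD$ on a universe of size $M=N^2/t$, whose randomized communication complexity is known to be $\Omega(M)$ by the classical results of Kalyanasundaram–Schnitger and Razborov (with the unique‑intersection promise handled by Razborov's argument, or the information‑complexity proof of Bar‑Yossef et al.). Since the reduction will add no bits of communication, the $\Omega(M)=\Omega(N^2/t)$ bound transfers verbatim to $\tSD$.

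\textbf{The reduction.} Identify the $N^2$ coordinates of the matrices with $[t]\times[M]$, where $M=N^2/t$, and call $\{\ell\}\times[M]$ the $\ell$‑th \emph{block}. Given a $\SD$ instance $(\vec a,\vec b)\in\{0,1\}^M\times\{0,1\}^M$, Alice sets $\vec x$ so that its restriction to every block equals $\vec a$, and Bob sets $\vec y$ so that its restriction to every block equals $\vec b$; this is purely local and costs no communication. Then $\vec z=\vec x\wedge\vec y$ restricted to each block equals $\vec a\wedge\vec b$, so: if $(\vec a,\vec b)$ are disjoint then $\vec z=\vec 0$ ($0$ intersections), and if $(\vec a,\vec b)$ have a \emph{unique} intersection then $\vec z$ has exactly one $1$ per block, hence exactly $t$ ones. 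Thus $(\vec x,\vec y)$ always satisfies the $\tSD$ promise, and any protocol solving $\tSD$ with error at most $1/3$ solves unique‑$\SD$ on $[M]$ with the same error and communication.

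\textbf{Conclusion and the main obstacle.} Combining with the $\Omega(M)$ lower bound for unique‑$\SD$ gives that every bounded‑error protocol for $\tSD$ communicates $\Omega(M)=\Omega(N^2/t)$ bits; running the same argument against the hard input distribution for unique‑$\SD$ yields the distributional version that the reduction framework of~\cite{Eden2018-approx} actually consumes. The one genuinely delicate point is guaranteeing the $\tSD$ promise ("$0$ or exactly $t$ intersections") \emph{exactly}: this is precisely why the reduction must start from the unique‑intersection version of $\SD$, since replicating a NO instance with $k>1$ intersecting coordinates would produce $kt$ ones in $\vec z$ and break the promise. Everything else — that the reduction is local, communication‑free and error‑preserving, and that unique‑$\SD$ on $[M]$ is $\Omega(M)$‑hard — is standard; alternatively one could cite the gap/$t$‑fold communication lower bound directly (e.g.\ via a direct‑sum / information‑complexity argument), but the replication reduction above is the shortest self‑contained route.
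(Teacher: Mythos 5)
The environment you were handed is labeled \emph{Theorem}, but it contains no claim: it is merely the paper's definition of the communication problem $\tSD$, and so there is nothing to prove in it. The quantitative statement you inferred and set out to prove, that the bounded-error communication complexity of $\tSD$ is $\Omega(N^2/t)$, is the \emph{next} theorem in the paper, which the authors do not prove at all — they cite it as Corollary~2.7 of~\cite{Eden2018-approx}. So there is no proof in the paper to compare yours against.

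That said, your replication reduction from unique-intersection $\SD$ on a universe of size $M=N^2/t$ is correct and is essentially the standard way to establish the cited bound. In particular, you correctly identify the one delicate point: you must reduce from the \emph{unique}-intersection promise version (whose $\Omega(M)$ randomized lower bound is due to Kalyanasundaram–Schnitger/Razborov, or via information complexity à la Bar-Yossef et al.), because replicating each input $t$ times across blocks then yields exactly $0$ or exactly $t$ ones in $\vec z$, matching the $\tSD$ promise; starting from plain $\SD$ would break it. Your argument would serve as a valid self-contained replacement for the citation, but it is not a reconstruction of anything the paper itself argues.
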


The idea  is to construct an embedding of the $\tSD$ communication problem to a graph $G_{\vec z}$, such that the following holds. First, every query performed on $G_{\vec z}$ can be answered by exchanging $B$ bits of communication for a constant $B$. Second, one can solve the given $\tSD$ instance by sampling uniformly distributed copies of $H_D$ in $G_{\vec z}$. The parameter $t$ in the $\tSD$ problem is set according to $m,\cn h$ and the counts of the basic components of $D$, to ensure that  the lower bound on the communication complexity problem implies the desired lower bound specified in Theorem~\ref{thm:main_lb}.

\begin{theorem}[Corollary 2.7 in~\cite{Eden2018-approx}]
The communication complexity of $\tSD$ is $\Omega(N^2/t)$.
\end{theorem}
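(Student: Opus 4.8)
The plan is to prove the bound by a padding reduction from the classical \emph{unique disjointness} problem on a universe of size roughly $N^2/t$, showing that every $\tSD$ protocol on the $N\times N$ universe yields a unique-disjointness protocol of the same cost. Recall $\mathsf{UDISJ}_M$: Alice and Bob hold $a,b\in\{0,1\}^M$ under the promise that $|\{u: a_u=b_u=1\}|\in\{0,1\}$, and must decide which case holds. The classical result of Kalyanasundaram--Schnitger and of Razborov is that the bounded-error randomized communication complexity of $\mathsf{UDISJ}_M$ is $\Omega(M)$; this is the ``hard core'' we will transport into $\tSD$.

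First I would set up the embedding. Let $M=\lfloor N^2/t\rfloor$, and view the $\tSD$ universe $[N]\times[N]$ as containing $t$ disjoint ``blocks'' $B_1,\dots,B_t$, each of size $M$ (the at most $t-1$ leftover coordinates are never used and are fixed to $0$ in both players' inputs). Given a $\mathsf{UDISJ}_M$ instance $(a,b)$, Alice forms $\vec x$ by writing a copy of $a$ into every block, i.e.\ the restriction of $\vec x$ to $B_\ell$ equals $a$ for each $\ell\in[t]$, and symmetrically Bob forms $\vec y$ from $b$. This construction is performed locally, with no communication, so any $\tSD$ protocol run on $(\vec x,\vec y)$ costs exactly what it costs, with no overhead.

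Next I would check correctness, which is where the choice of base problem matters. The intersection set of $(\vec x,\vec y)$ is $\vec z=\bigcup_{\ell=1}^{t}\{u\in B_\ell: a_u=b_u=1\}$, a union of $t$ identical copies of $S:=\{u: a_u=b_u=1\}$, so $|\vec z|=t\cdot|S|$. The $\mathsf{UDISJ}$ promise $|S|\in\{0,1\}$ therefore yields $|\vec z|\in\{0,t\}$ --- precisely the $\tSD$ promise --- and $|\vec z|=0$ iff $|S|=0$. Hence a protocol deciding $\tSD$ on this instance decides $\mathsf{UDISJ}_M$, and the randomized communication complexity of $\tSD$ on the $N\times N$ universe is at least that of $\mathsf{UDISJ}_{\lfloor N^2/t\rfloor}$, which is $\Omega(N^2/t)$.

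The one subtle point --- and the main thing to get right --- is precisely that the replication must land \emph{inside} the $\tSD$ promise. Starting from ordinary disjointness (intersection $0$ vs.\ $\ge 1$) would not work, since replicating an instance with several disjoint coordinate-pairs could produce an intersection of size strictly between $1$ and $t$, outside the promise set $\{0,t\}$; it is the \emph{uniqueness} of the intersection in $\mathsf{UDISJ}$ that forces $|\vec z|\in\{0,t\}$ after the $t$-fold blow-up. Everything else --- fixing a bijection between $[N]\times[N]$ and the $t$ size-$M$ blocks, the divisibility bookkeeping for $t\nmid N^2$, and the observation that the simulation is communication-free --- is routine. (One could alternatively run Razborov's corruption/measure argument directly against the natural $t$-sparse product distribution on $\{0,1\}^{N^2}\times\{0,1\}^{N^2}$, but the reduction above is the shortest route.)
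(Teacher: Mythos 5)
Your proof is correct. Note that the paper does not prove this statement at all --- it is imported verbatim as Corollary~2.7 of the cited reference \cite{Eden2018-approx} --- and your padding reduction (tiling the $N\times N$ universe with $t$ blocks, each carrying a copy of a unique-disjointness instance on $\lfloor N^2/t\rfloor$ coordinates, so that the intersection size is forced into $\{0,t\}$) is exactly the standard argument used to establish it, including the correct observation that the uniqueness promise of $\mathsf{UDISJ}$ is what keeps the blown-up instance inside the $\tSD$ promise.
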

We shall prove that the problem of $\tSD$ can be reduced to the problem of estimating the number of copies of $H$ in a graph $G_{\vec z}$,
such that each query in $G_{\vec z}$ can be answered in constant time.
Namely, we prove that for a given $\HH$, the graph $G_{\vec z}$ consists of several gadgets, that are independent of the instance $(\vec x, \vec y)$,
and a \CC{} gadget that embeds the instance $(\vec x, \vec y)$ to the graph $G_{\vec z}$ as follows.
If $(\vec x,\vec y)$ intersect, then at least a constant factor of the copies of $H_D$ in $G_{\vec z}$ are contributed by this gadget, and otherwise this gadget contributes no copies.
The family of graphs $\mG$ is then defined to be the collection of graphs $\{G_{\vec z}\}$ for all possible $\vec z$ that are the intersection  of an \tSD\ instance. 
Thus by uniformly sampling copies of $H_D$,
one can distinguish between the case that $\vec x,\vec y$ are disjoint to the case where they intersect (by sampling a constant number of copies and checking if some are contributed by the \CC).
It follows that for every $N$ and $t$, $\Omega(N^2/t)$ queries are required in order to sample uniform copies of $H$. 

Our lower bound theorem is very generic as it works for any decomposition that contains at least one cycle, and for a variety of plausible basic component counts (those that meet the constraints specified in Definition~\ref{def:good_cnts}). Hence, we shall start with a (sketched) proof for a specific easy basic case. 
The ideas in proving the general case will be the same, however due to the generality of the statement, many technical difficulties arise in satisfying all counts simultaneously. Hence, we defer that analysis of the general case to Subsection~\ref{sec:lb_general}.

\subsection{Warm up: a lower bound for a decomposition $D=\{O_3, S_p\}$}\label{sec:warm_up}

In this section we prove the first term in our lower bound for a specific decomposition,  $D=\{O_3, S_p\}$ and for the case that lower bound is sublinear in $m$, and the max cost in the bound is due to the $O_3$ component. 
\begin{theorem}
	\label{thm:warm_up}
	Let $D=\{O_3, S_p\}$ be a decomposition and assume that we are given the counts $n,m,\cn o_3, \cn s_p$ and $\cn h$. Further assume that the counts are such that $\cn s_p\geq \sqrt m^{p+1}$, $\max\{cost(O_3), cost(S_p)\}=cost(O_3)$ and $\cn h\geq \sqrt m\cdot \cn s_p$.
	Then there exist a motif $H_{D}$ with decomposition $D$, and a family of graphs $\mG$ such that for every 
	 $G\in \mG$ the   counts  are as above (up to constant factors), and such that  sampling a uniformly distributed copy of $H_{D}$ in a uniformly chosen $G\in \mG$ requires 
	\[
	\Omega\left(\max_{i}(cost(C_i))\cdot \frac{\cn o_3\cdot \cn s_p}{\cn h}\right)=  \Omega\left(\frac{m^{3/2}\cdot \cn s_p}{\cn h}\right)
	\]
	queries in expectation.
\end{theorem}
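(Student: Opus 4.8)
The plan is to instantiate the communication-complexity framework of~\cite{Eden2018-approx} for the concrete decomposition $D=\{O_3,S_p\}$, by exhibiting a motif $H_D$ whose optimal decomposition is $\{O_3,S_p\}$ together with a family $\mG=\{G_{\vec z}\}$ of graphs encoding $\tSD$ instances. First I would take $H_D$ to be a triangle with one of its vertices identified with the center of a $p$-star (so that the non-decomposition edges are empty and $\rho(H_D)=3/2+p$; one should briefly check that this decomposition is indeed optimal, i.e. that no cheaper fractional edge cover exists — this is the standard check that the triangle forces weight $3/2$ and the star forces weight $p$ on its petals). Each graph $G_{\vec z}\in\mG$ is built from two gadgets glued together: a \SSS{} realizing $\Theta(\cn s_p)$ copies of $S_p$ (a bipartite-like construction with one side of size about $\sqrt m$ carrying the high degrees and the other side of size $n$, so that $\cn s_p\ge\sqrt m^{p+1}$ is exactly what makes this realizable — this is where the hypothesis $\cn s_p\ge\sqrt m^{p+1}$ is used), and a \CC{} for the $O_3$ component which is a dense $\Theta(\sqrt m)$-degree tripartite-ish structure holding the $\Theta(\cn o_3)$ triangles and, crucially, a set of $t$ ``crucial'' edges whose presence (governed by the intersection vector $\vec z$) creates $\Theta(\cn h)$ copies of $H_D$ and whose absence creates none (beyond the $o(\cn h)$ copies already present in the instance-independent part). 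One then wires the \CC{} to the \SSS{} so that each crucial edge, once realized, together with one triangle in the \CC{} and one $p$-star in the \SSS{}, completes a copy of $H_D$; since each crucial edge participates in $\sqrt m$ triangles and $\cn s_p$ stars, realizing a single crucial edge yields $\Theta(\sqrt m\cdot\cn s_p)$ copies of $H_D$, and the assumption $\cn h\ge\sqrt m\cdot\cn s_p$ guarantees $t=\Theta(\cn h/(\sqrt m\cdot\cn s_p))\ge 1$ crucial edges suffice (and do not overshoot $\cn h$).

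Next I would verify the three requirements of the framework. (i) Query simulation: every degree/neighbor/pair/uniform-edge query on $G_{\vec z}$ can be answered by Alice and Bob with $O(1)$ bits of communication, since the only instance-dependent part is the $t$ potential crucial edges sitting inside the \CC{}, and whether a given potential crucial edge $(i,j)$ is present is exactly $\vec z_{i,j}=\vec x_{i,j}\wedge\vec y_{i,j}$, computable with one bit each from Alice and Bob (uniform edge queries need a little care because the total edge count shifts by at most $t\ll m$, which is absorbed by rejection sampling on the instance-independent edge set). (ii) Count bookkeeping: in both the YES and NO cases the instance-independent gadgets give $\Theta(\cn o_3)$ triangles, $\Theta(\cn s_p)$ $p$-stars, $\Theta(n)$ vertices, $\Theta(m)$ edges, and at most $o(\cn h)$ copies of $H_D$; in the YES case the $t$ crucial edges add $\Theta(\cn h)$ further copies of $H_D$. (iii) Reduction: a sampler that outputs a uniformly random copy of $H_D$ in $G_{\vec z}$ returns, in the YES case, a copy touching a crucial edge with constant probability, which lets Alice and Bob distinguish YES from NO with $O(1)$ independent samples; hence a sampler making $Q$ queries yields a $\tSD$ protocol with $O(Q)$ communication. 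Combining with the $\Omega(N^2/t)$ lower bound for $\tSD$ (Corollary~2.7 of~\cite{Eden2018-approx}), and noting $N^2=\Theta(m)$ in our embedding while $t=\Theta(\cn h/(\sqrt m\cdot\cn s_p))$, gives the claimed
\[
\Omega\!\left(\frac{m}{t}\right)=\Omega\!\left(\frac{m\cdot\sqrt m\cdot\cn s_p}{\cn h}\right)=\Omega\!\left(\frac{m^{3/2}\cdot\cn s_p}{\cn h}\right)=\Omega\!\left(\max_i cost(C_i)\cdot\frac{\cn o_3\cdot\cn s_p}{\cn h}\right),
\]
where the last equality uses $\max_i cost(C_i)=cost(O_3)=m^{3/2}/\cn o_3$ by hypothesis.

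The main obstacle I expect is the simultaneous realizability of all the prescribed counts: the \CC{} gadget has to pack $\Theta(\cn o_3)$ triangles while having density only $\Theta(\sqrt m)$ so that it does not blow up the edge count past $m$, and at the same time the $\sqrt m$-density subgraph used to hide the crucial edges inevitably manufactures $\Theta(\sqrt m^{p+1})$ extra $p$-stars, which must not exceed the target $\cn s_p$ — this is precisely why the hypothesis $\cn s_p\ge\sqrt m^{p+1}$ appears, and reconciling these constraints (together with checking that no spurious copies of $H_D$ sneak in through cross-gadget edges) is the delicate part of the argument. A secondary point needing care is ensuring that $cost(O_3)\ge cost(S_p)$ — i.e. the hypothesis $\max\{cost(O_3),cost(S_p)\}=cost(O_3)$ — is consistent with the chosen gadget sizes, so that the $\Omega(m/t)$ bound indeed matches $\max_i cost(C_i)\cdot\cn o_3\cn s_p/\cn h$ rather than being dominated by the star cost; this is a routine but necessary sanity check. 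Once the warm-up construction is in place, the general case of Theorem~\ref{thm:main_lb} follows the same blueprint, with the constraints of Definition~\ref{def:good_cnts} being exactly the conditions that make the analogous multi-gadget construction realizable.
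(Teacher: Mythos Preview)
Your overall approach---reducing from $\tSD$ via the framework of~\cite{Eden2018-approx}, hiding $t=\Theta(\cn h/(\sqrt m\cdot\cn s_p))$ crucial edges in a dense \CC{} and wiring it to a \SSS{}---is exactly the paper's, and your arithmetic for $t$ matches. There is, however, a genuine error in your choice of $H_D$. Identifying a triangle vertex with the star center makes $O_3$ and $S_p$ non--vertex-disjoint, so $\{O_3,S_p\}$ is not a valid decomposition of this motif (Definition~\ref{def:decomposition} requires the components to be vertex-disjoint). In fact your motif has optimal decomposition $\{S_1,S_p\}$: put weight $1$ on the triangle edge not incident to the shared vertex and weight $1$ on each petal, for total value $p+1<p+3/2$, so $D$ is not even \emph{an} optimal decomposition of it. The paper instead takes $H_D$ to be a triangle connected by a single non-decomposition edge to the center of a \emph{disjoint} $p$-star, and realizes this edge in $G_{\vec z}$ by a complete bipartite graph between the $R_1$ parts of the two gadgets.

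A second gap is in your count bookkeeping. You assert that the instance-independent gadgets already contain $\Theta(\cn o_3)$ triangles, but the \CC{} as you (and the paper) describe it produces only $t\sqrt m=\cn o_3/\alpha$ triangles in the YES case (where $\alpha=\cn o_3\cn s_p/\cn h\ge 1$) and none in the NO case, so the triangle count is not met by the gadgets alone. The paper closes this by appending, as a disconnected component, an arbitrary graph $G'$ realizing all the prescribed counts (such a $G'$ exists by the validity assumption on the counts). This $G'$ brings every count up to target in both the YES and NO cases; being disconnected from the gadgets, its copies of $H_D$ are irrelevant to the reduction, and Alice and Bob solve $\tSD$ by testing whether any of $O(1)$ sampled copies lies in $G_{\vec z}\setminus G'$.
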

\begin{proof}[Proof Sketch.]
	By the above it  holds that $\cn h\leq \prod_{i} \cn c_i$, we let $\alpha= \prod_{i}\cn c_i /\cn h$ so that $\alpha>1$.
	 We shall rearrange the lower bound:
 \[ 	\frac{m^{3/2}\cdot \cn s_p}{\cn h}=\frac{m^{3/2}}{\cn o_3} \cdot \frac{\cn o_3\cdot  \cn s_p}{\cn h}=\frac{m^{3/2}}{\cn o_3} \cdot \alpha =\frac{m^{3/2}}{\cn o_3/\alpha}.\]
	The family $\mG$ is the set of graphs  $\{G_{\vec z}\}$ for all possible vectors $\vec z=\vec x\cdot \vec y$ where   $(\vec x, \vec y)$ are instances of the \tSD\ problem, for a value $t$ that will be set shortly.
	Fix an instance $(\vec x, \vec y)$ of \tSD\, and let $\vec z=\vec x\cdot \vec y$.
	We shall describe an embedding from $\vec z$ to $G_{\vec z}$ so that sampling a uniformly distributed copy of $H_D$ in $G_{\vec z}$ solves $\tSD$ on $(\vec x, \vec y)$.
	We set $t=\left\lfloor |T|/\sqrt m \right\rfloor=\left\lfloor\cn o_3/(\sqrt m\cdot \alpha)\right\rfloor$ so that $\frac{m^{3/2}}{o_3/\alpha}=\frac{m}{t}$ and we consider the case that $N=\sqrt m$, so that $\Omega(N^2/t)=\Omega(m/t)=\Omega(m^{3/2}/(o_3/\alpha))$.
	Observe that this setting is valid since, by the assumption that the complexity is sublinear in $m$, it holds that $\frac{m^{3/2}\cdot \cn s_p}{\cn h}\leq m$, implying that $h\geq \sqrt m \cdot \cn s_p $. Therefore, $\frac{\cn o_3\cdot \cn s_p}{\alpha}\geq \sqrt m \cdot \cn s_p$, and it follows that $ o_3/(\sqrt m\cdot \alpha)\geq 1$ so that $t\geq 1$.
	
	We let $H_{D}$ be the motif of a triangle connected by a single edge to a star $S_p$.
	To describe the graph $G_{\vec z}$, we describe a corresponding gadget to each of the components $O_3$ and $S_p$ in $D$.
	The gadget corresponding to the star is a bipartite graph over two sets $R_1, R_2$ such that $|R_1|=1$ and $|R_2|=\cn s_p^{1/p}$ (if $\cn s_p > n^p$, then we can modify $R_1$ to be of size $\lfloor\cn s_p/n^\rfloor$ and $R_2$ to be of size $n$). There is a complete bipartite graph between $R_1$ and $R_2$. 
	
	The gadget used to create the $|T|$ odd cycles of length $3$ has $3+2=5$ sets $R_1,R_2, R_3, R_1', R_2'$, each of size $\sqrt m$. There is a complete bipartite graph between the sets $R_1$ and $R_3$ and $R_2$ and $R_3$.  The edges between the sets $R_1, R_2, R_1', R_2'$ are determined according to the $\tSD$ instance $\vec x, \vec y$ as follows. 
	For every pair of indices $i,j\in \sqrt m$,  if $\vec x_{ij}=\vec y_{ij}=1$ then  we add the  edge  $(r_1^i,r_2^{j})$ and let 		as the $(j-i)\th$ edge of $r_1^i$ and $r_2^{j}$, 
	and the edge  $(r_1^{j},r_2^i)$ as the $(j-i)\th$ edge of $r_1^{j}$ and $r_2^i$.  
	We also add the edges $((r')_1^i, (r')_2^j)$ and $((r')_1^j, (r')_2^i)$ and label them as the $(j-i)\th$ edge of their endpoints.
	Otherwise,  we add the edges  $(r_1^i,(r'_1)^{j})$, $(r_1^j,(r_1')^i)$, $(r_2^i,(r'_2)^{j})$ and $(r_2^j,(r_2')^i)$ to the gadget, and label them as the $(j-i)\th$ edge of their endpoints.
	Hence, if $(\vec x, \vec y)$ is a \YES instance we get that the \CC\ has $t\cdot  {\sqrt m}^{k-2}$ odd cycles, and if it is a \NO instance then the gadget induces no cycles. See Figure~2(b) for an illustration.
	Furthermore, in both cases,  the degrees of all vertices in the gadget are exactly $2\cdot \cn o_k ^{1/k}$, and the ``gadget edges'' of the vertices in $R_1, R_2$ are their first $\sqrt m$ edges (in terms of edge labels).
	We furthermore add a complete bipartite graph between the two $R_1$ sets of the two gadgets.
	Observe that at this point, the count $\cn o_3$ is not satisfied as $G$ only contains $|T|< \cn o_3$ triangles. 
	As the set of counts is valid, there exists a graph $G'$ for which they are all satisfied.
    To finalize the construction, we add the graph  $G'$ to $G$ as a subgraph as a disconnected component. 
	
	\begin{figure}[t]\label{fig:warm_up}
		\centering	
		\begin{subfigure}{.33\textwidth}\label{fig:warm_up_H}
			\centering
			\includegraphics[width=.8\textwidth]{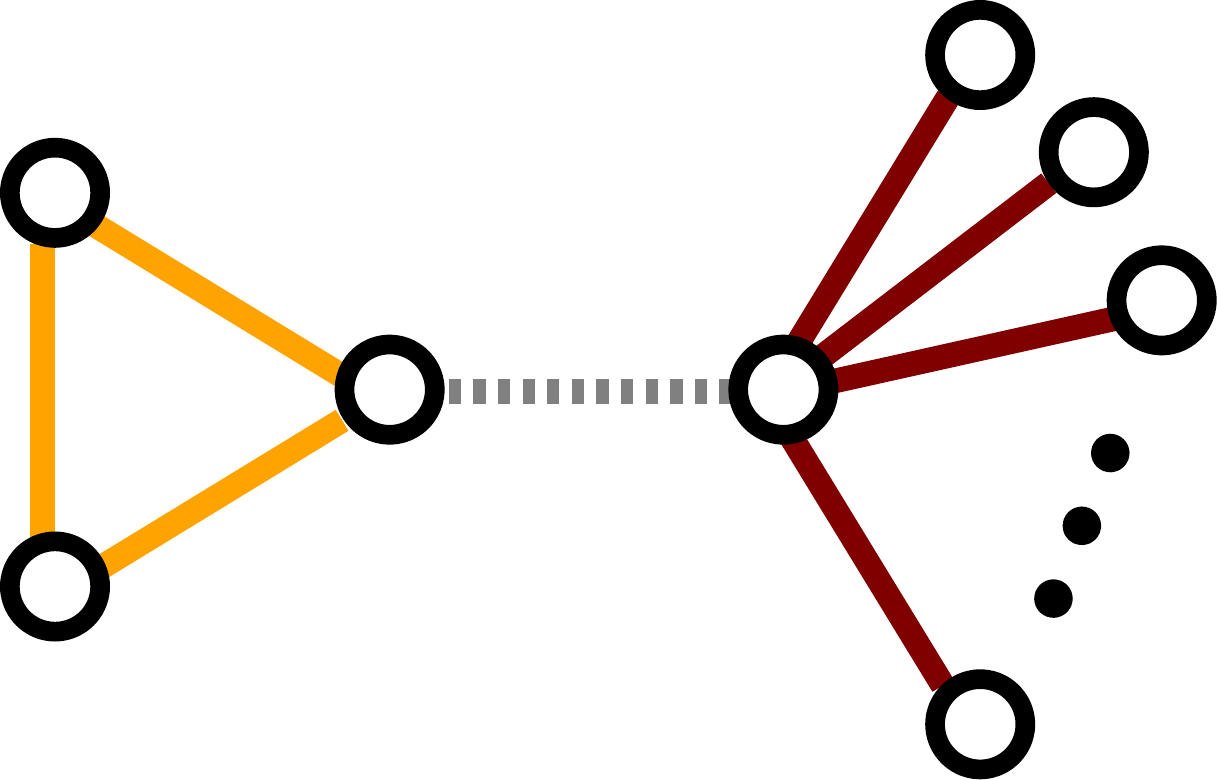}
			\label{fig:example_D}
		\end{subfigure}%
		\begin{subfigure}{\textwidth}
			\begin{minipage}{.66\textwidth}\label{fig:warm_up_G}
				\centering
				\includegraphics[width=\textwidth]{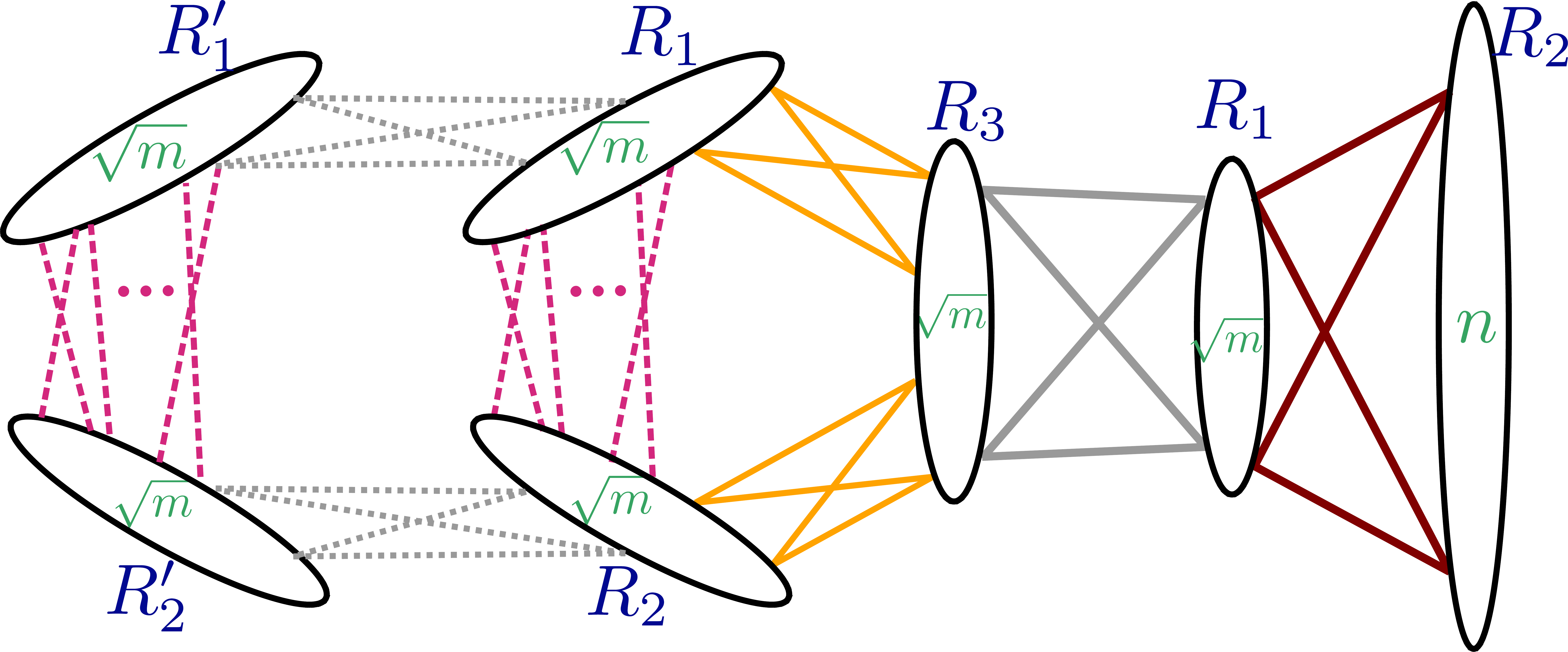}
				\label{fig:example_H_D_lb}
			\end{minipage}
		\end{subfigure}%
		\caption{
			(a) The motif $H_{D}$ for $D = \{O_3, S_p\}$ (b) The graph $G\setminus G'$.
			Orange/red crossed lines indicate a complete bipartite graph of intra-gadget edges,
			gray crossed lines indicate a complete bipartite graph of inter-gadget edges,
			and pink dotted lines indicate ``potential" edges -- i.e., ones whose existence depends on the $\tSD$ instance $\vec x,\vec y$.}
	\end{figure}

	By the construction of the gadgets, there are $\Theta(\cn s_p + \sqrt m^{p+1}) = \Theta(\cn s_p)$ copies of $S_p$ in the graph, as well as $\cn o_3$ triangles,  $\Theta(n)$ vertices and $\Theta(m)$ edges. Hence, the basic counts are satisfied (up to constant factors).

	By construction of the $O_3$ gadget, we have  that if $\vec x\cdot \vec y=0$, then the graph $G\setminus G'$ is bipartite, and otherwise it contains $t\sqrt m \cdot \cn s_p = (\cn o_3/\alpha)\cdot \cn s_p=\cn h$ many copies of $H_{D}$.
	Hence, given an algorithm $\mA$ that samples uniformly distributed copies of $H_{D}$,
	to solve the given $\tSD$ instance Alice and Bob proceed as follows. First they implicitly construct the graph $G_{\vec z}$ as described. Then, Alice and Bob both invoke $\mA$ using their shared randomness as the randomness of $\mA$ (so that  $\mA$ is now deterministic and Alice and Bob see the same queries during $\mA$'s run). Whenever $\mA$ queries $G_{\vec z}$, they either answer the query themselves (in case it does not depend on the input instance) or communicate $O(B)$ bits  to answer it.  They repeat this process for $10$ times. 
	Once all invocations of $\mA$ conclude, if all the returned copies of $H_{D}$ are from $G'$ then Alice and Bob respond that the input matrices are disjoint, and otherwise, they respond that the matrices intersect.  
	In case the matrices intersect, $1/2$ of the copies of $H_{D}$ are in $G\setminus G'$, and therefore, Alice and Bob respond incorrectly with probability $1/2^{10}$. If however the sets do not intersect, Alice and Bob respond correctly with probability $1$.
	
	Assume that each query can be answered by Alice and Bob  exchanging $O(B)$ bits of communication. Then the number of expected number of queries $Q$ performed by $\mA$ is lower bounded by $Q\cdot B=\Omega(m/t\cdot B)$, and for $B=O(1)$ we get $Q=\Omega\left(\frac{m^{3/2}\cdot \cn s_p}{B \cdot \cn h}\right)$.
	
	It remains to bound $B$. Here we only sketch the proof, as the full proof  is identical for this case and the general one, and it is given in Lemma~\ref{lem:comm_bits}. First observe that the degrees of all  vertices are determined independently of the input instance to $\tSD$. Indeed all vertices in  the cycle gadget have degrees $2\sqrt m$ and the structure of the star gadget does not depend on $\vec x, \vec y$.
	For a pair query $(u,v)$, unless both vertices belong to $R_1\cup R_2\cup R_1'\cup R_2'$ the answer is independent to the input instance. Otherwise, assume for example that $u=r^1_i\in R_1$ and $v=r^2_j\in R_2$. Then to answer the query, Alice and Bob send each other the bits $x_{ij}, y_{ij}$, and if they intersect they answer that the pair is an edge, and otherwise it is not. Other pair queries within these sets can be answered similarly, and so does neighbor queries on vertices in these sets.
	Hence, each query can be answered by exchanging  $O(B)=O(1)$ bits of communication, and we get 
	$Q=\Omega\left(\frac{m^{3/2}\cdot \cn s_p}{ \cn h}\right)$, as required.
\end{proof}

\subsection{Proof of Theorem~\ref{thm:main_lb}}\label{sec:lb_general}


Let $D=\decomp$. 
To prove the lower bound of Theorem~\ref{thm:main_lb}, we first construct a graph $H_D$ with optimal decomposition is  $D$. We then construct a family of graphs $\mG$ such that each $G\in\mG$ satisfies all the counts and constraints of the theorem, and so that sampling a uniformly distributed copy of $H_{D}$ in a uniformly chosen $G\in \mG$ requires 
$\Omega\left( \min\left\{cost(O_k)\cdot \frac{\prod_{i} \cn c_{i}}{\cn h},m\right\}\right)$ samples. 

\textbf{Constructing the motif $H_{D}$.}
Given a decomposition $D$ we construct the graph $H_{D}$ as follows.
Recall that $O_{k_1}$ denotes the odd cycle with maximum cost, and denote its vertices by $v^{k_1}_1, \ldots v^{k}_{k_1}$.
If there exists a star $S_p$ in $D$ with count $\cn s_p>|H|\cdot \sqrt m^{p+1}$, then we connect its star center to one of the vertices of $O_{k_1}$. 
If for at least one of the cycles in $D$, $\cn o_{k}\leq  \sqrt m^{k-1}$, then we connect to it all the stars of $D$, except for the one that is connected to $O_{k_1}$.
We connect the rest of the components of $D$ with a single edge to $O_{k_1}$ , where stars are connected through their star center, and odd cycles are connected through arbitrary vertices in each of the cycles.

\textbf{Constructing the graph family of graphs $\mG$.} 
The basic structure of all graphs $G$ in the family $\mG$ will be the same, except for a small set of edges which will be determined according to the \tSD\ instance $(\vec x, \vec y)$, or more specifically, according to $\vec z=\vec x \cdot \vec y$.
To construct the family of graphs $\{G_{\vec z}\}$, we first define gadgets that correspond to the stars and odd cycles of $D$. 

We differentiate between \emph{short} odd cycles of length $k_i$ for $k_i\leq k_1$  (if such exist in $D$), and those with higher lengths than $k_1$. The reason is that we want the gadgets corresponding to short odd cycles to create $\cn O_k$ odd cycles, while not creating ``too many'' $k_1$ odd cycles. (This is also the reason behind constraint~\ref{const:short_cycles}.) 

\begin{itemize}
	\item \MM: Given $O_k$ and $\cn O_{k}$ such that $\cn o_k>\sqrt m^{k-1}$, this gadget is  a complete $k$-partite graph,
	comprising of sets of vertices $R_1, R_2,\cdots, R_{k}$, each of size $\Theta(\cn O_{k}^{1/k})$.
	Each adjacent pair $R_i, R_{i+1 (\mod k) }$ induces a complete bipartite graph.	
	(Observe that for every graph $\cn o_{k}\leq m^{k/2}$ and therefore for every $i\in [k]$, $|R_i|\leq \sqrt m$.)	
	
	\item \SCG: 
	 Given $O_{k}$ and $\cn o_k$ such that $\cn o_k\leq \sqrt m^{k-1},$
    this gadget has a set $R_1$ consisting of a single vertex $v_1$ and $k_1-1$ sets $R_i$ for $i\in[2,k-1]$, each of size $\cn o_{k_1}^{1/(k_1-1)}$. The sets form a $k_1$-tripartite motif. 
	\item
	\texttt{\SSS}:
	Recall that we assume that the counts $\cn s_{p_i}$ are either such that 
	there exists a cycle $O_{k}$ with length $\cn o_{k}\leq  \sqrt m^{k-1}$, or
	that each count $\cn s_p$  can be satisfied by a set $A$ of $\sqrt m$ numbers, $a_1, \ldots, a_{\sqrt m}$. That is, $\cn s_p=\sum_{i\in A} (a_i)^p$.
	
	In the former case, the star gadget is a bipartite motif $R_1\cup R_2$, where $|R_1|=|R_2|=n$ and the degrees of the vertices in $R_1$ are such that $\sum_{v\in R_1}d(v)^p=\cn s_p$. 
		Due to constraint~\ref{const:valid}, such a setting of degrees exists.
	The edges going from $R_1$ to $R_2$ are spread evenly among the vertices of $R_2$, so that $\forall r^2_i\in R_2,; d(r_2)\leq \davg$.
		
		In the latter case, the star gadget is a bipartite motif $R_1\cup R_2$, where $|R_1|=\sqrt m$ and $\forall r^1_i\in R_1,\; d(r^1_i)=a_i$. The set $R_2$ is of size $n$, and the edges from $R_1$ are distributed evenly among the vertices of $R_2$. 
\end{itemize}

To embed the $\tSD$ instance to $G_{\vec z}$, we use the following \CC\  that corresponds to $O_{k_1}$  which is (one of) the  maximum cost odd cycle in $D$. Since this gadget is used to distinguish the two families of graphs, it  appears in two forms,
corresponding to the \YES and \NO instance of the  problem.
\begin{itemize}
	
	\item
	\CC:
	This gadget will correspond to the odd cycle of length $k_1$ in  $H_{D}$ (a maximum cost odd cycle).
	The gadget contains $k_1$ sets $R_1, \mydots, R_{k_1}$ and  two additional sets $R_1', R_2'$, all of size $\sqrt m$. 
	Between every pair of sets $R_i,R_{i+1 (\; mod \sqrt m)}$, except between the pair $R_1,R_2$,   there is a  complete bipartite set. 
	The edges between the sets $R_1, R_2, R'1_, R'_2$ are determined according to the instance $(\vec x, \vec y)$ as follows.
	
	For every pair of indices $i,j\in \sqrt m$,  if $\vec x_{ij}=\vec y_{ij}=1$ then  we add the  edge  $(r_1^i,r_2^{j})$ 		as the $(j-i)\th$ edge of $r_1^i$, 
	and the edge  $(r_1^{j},r_2^i)$ as the $(j-i)\th$ edge of $r_1^{j}$ and $r_2^i$.  
	We also add the edges $((r')_1^i, (r')_2^j)$ and $((r')_1^j, (r')_2^i)$ and label them as the $(j-i)\th$ edge of their endpoints.
	Otherwise,  $\vec x_{ij}=\vec y_{ij}=0$, and we add the edges  $(r_1^i,(r'_1)^{j})$, $(r_1^j,(r_1')^i)$, $(r_2^i,(r'_2)^{j})$ and $(r_2^j,(r_2')^i)$ to the gadget, and label them as the $(j-i)\th$ edge of their endpoints.
	Hence, if $(\vec x, \vec y)$ is a \YES instance we get that 
	there are $t$ edges between $R_1$ and $R_2$, and so 
	the \CC\ has $t\cdot \sqrt m^{k-2}$ many  $k_1$ cliques. Otherwise, there are no edges between $R_1$ and $R_2$, and so the gadget is bipartite and induces no odd cycles.
\end{itemize}

See Figure~3(b) for an illustration of the different gadgets corresponding to the basic components of $H_D$.

Fix an input instance $\vec x, \vec y$ and let $\vec z=\vec x\cdot \vec y$.
The graph $G_{\vec z}$ contains one \CC{} that corresponds to the  $O_{k_1}$ component.  For any other $O_{k}, k\leq k_1$, if $k\leq k_1$ or  $\cn o_k \leq \sqrt m^{k-1}$, the graph contains a corresponding \SCG, and otherwise, 
the graph contains a \MM.
For all stars $S_{p_j}$ we add a \SSS. 
To connect the different gadgets, for each edge between two odd cycles, or between an odd cycle ant a star  in $H_{D}$, we add a complete bipartite graph between the two sets  $R_1$ of the corresponding gadgets. The way that the components of $D$ are connected, and the construction of  the gadgets of $G_{\vec z}$, ensure that this can be performed without exceeding $\Theta(m)$ edges between any two sets in $G_{\vec z}$. (Since all sets of odd cycle gadgets are of size $\sqrt m$, and since  $R_1$ sets of star gadgets with $|R_1|=n$ are only connected to sets $R_1$ of odd cycles for which $|R_1|=1$.)
Finally, we add to $G_{\vec z}$   a graph $G'$ for which all of the given counts are satisfied (recall there exists such a graph as we assume that the counts are valid). See Figure~3 for an illustration of a graph $G_{\vec z}$ for some $|\vec z|=t$ and motif $H_D$.

\begin{figure}[t]\label{fig:complete}
	\centering	
	\begin{subfigure}{.4\textwidth}
		\centering
		\includegraphics[width=.9\textwidth]{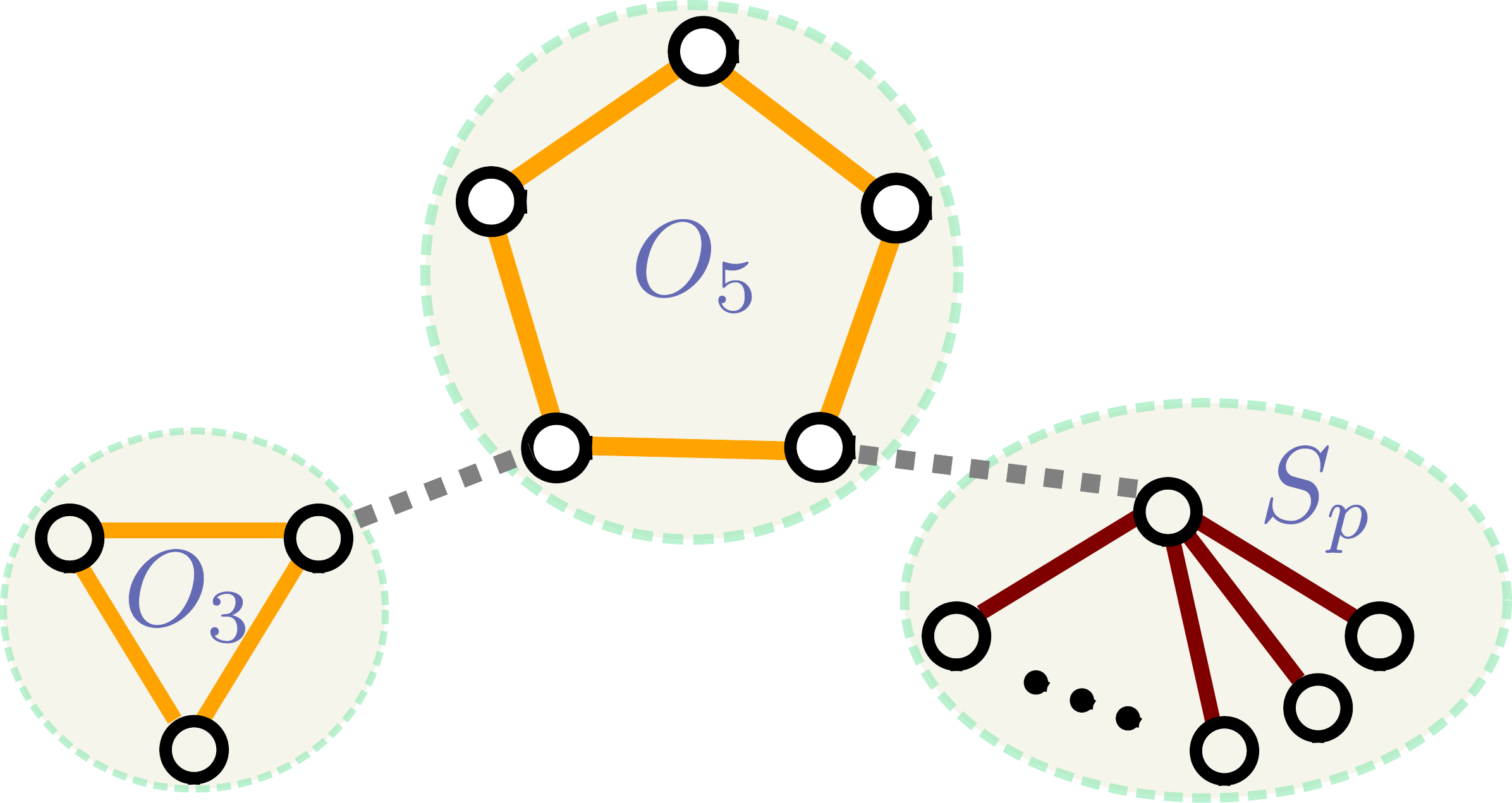}
		\vspace{1.72cm}
		\caption{motif $H_{D}$ for $D = \{O_3, O_5, S_p\}$}
		\label{fig:example_D}
	\end{subfigure}%
	\begin{subfigure}{\textwidth}
		\begin{minipage}{.66\textwidth}
			\centering
			\includegraphics[width=\textwidth]{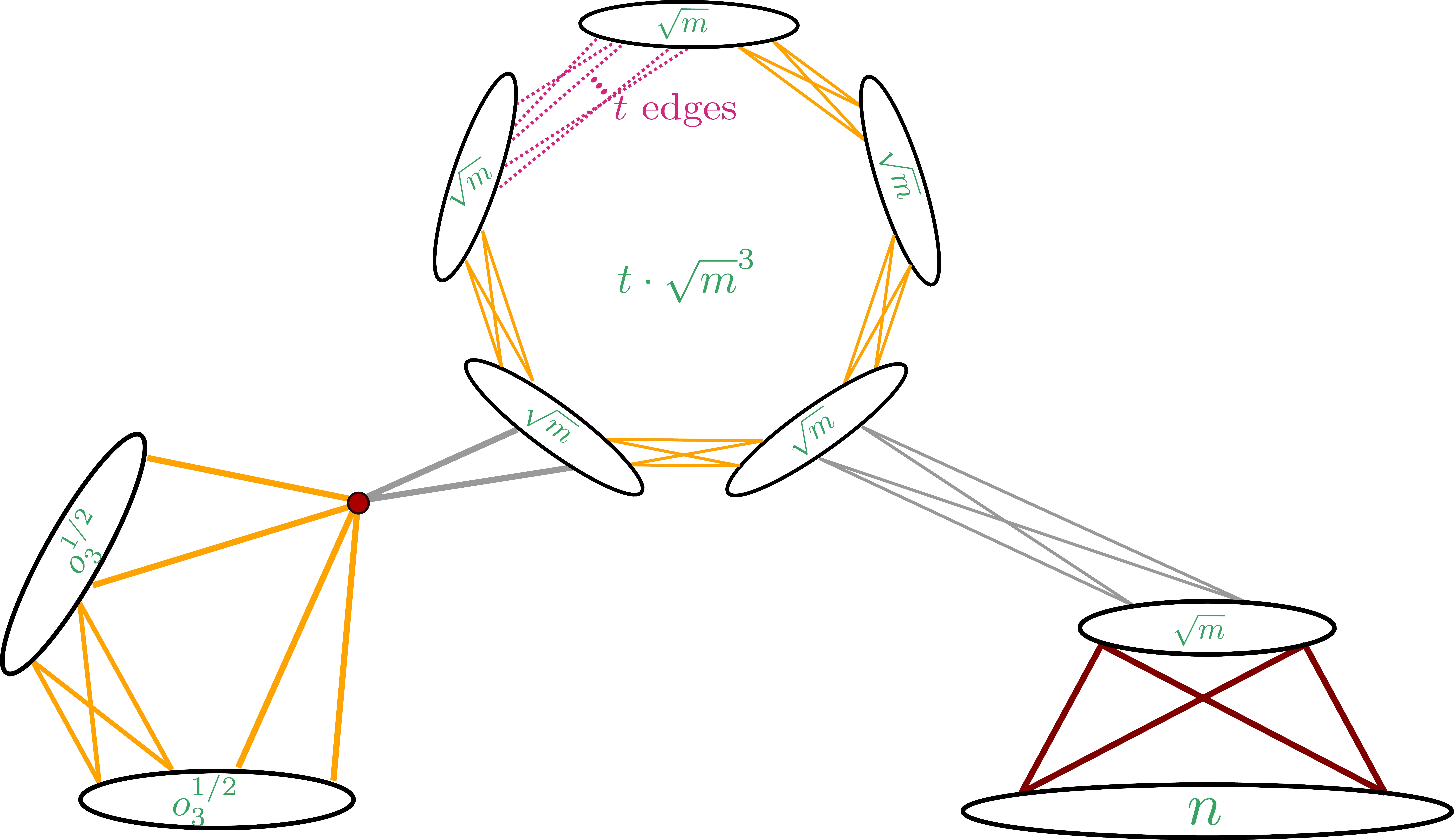}
			\caption{Complete lower bound construction of $G\setminus G'$ using all three types of gadgets:
				clockwise from top, we have \CC, \SSS, and \SCG.}
			\label{fig:example_H_D_lb}
		\end{minipage}
	\end{subfigure}%
	\caption{
		Orange/red crossed lines indicate a complete bipartite graph of intra-gadget edges,
		gray crossed lines indicate a complete bipartite graph of inter-gadget edges,
		and pink dotted lines indicate ``potential" edges -- i.e., ones whose existence depends on the $\tSD$ instance $\vec x,\vec y$.}
\end{figure}

\textbf{Proving the lower bound.}
We first consider the case that the $\comp\leq m$.
 As in the warm up case, we shall prove the lower bound by ``hiding'' $\cn h$ copies  of $H_D$ using a hidden set $T$  of $|T|$ $k_1$-odd cycles. That is, these $|T|$ odd cycles  will be added to the graph if and only if the matrices $\vec x$ and $\vec y$ intersect, and in turn they will create a constant number of copies of $H_D$ to $G_{\vec z}$.

We start by  rearranging  the lower bound terms and determining the values of $|T|$ and $t$. 
Let $\alpha =\prod_i c_i/\cn h$ so that $\alpha\geq 1$.
By the assumption that the lower bound is sublinear in $m$, we have that 
 $cost(O_{k_1})\cdot \alpha\leq m$, implying $\frac{m^{k_1/2}}{\cn o_{k_1}}\cdot \alpha\leq m\Leftrightarrow \cn o_{k_1}\geq \alpha m^{(k_1-2)/2}$.
Let \[|T|=\cn o_{k_1}/\alpha \text {\;\;\;\;\;\;and\;\;\;\;\;\;} t=\left\lfloor |T|/\sqrt m^{(k_1-2)/2} \right\rfloor= \left\lfloor \cn o_{k_1}/(\alpha{\sqrt m}^{(k_1-2)/2}) \right\rfloor\]
 so that $|T|\geq m^{(k-2)/2}$ and $t\geq 1$.
The lower bound we aim for is then 
\[
\max_{i\in[r]}\{cost(C_i)\}\cdot \frac{\prod_{i} \cn c_{i}}{\cn h} =
cost(O_{k_1})\cdot \alpha = \frac{m^{k_1/2}}{\cn o_{k_1}/\alpha}=\frac{m^{k_1/2}}{|T|}=\frac{m}{t}.
\]


In order to prove the lower bound, we first prove that all the given motif counts of the basic components  are indeed as specified. 
That is, we prove the following lemma.
\begin{lemma}
	\label{lem:unique_min_lb_consistency}
	Let $G_{\vec z}$ be  as above. Then for any $\vec z$, 
	$G_{\vec z}$ contains $\Theta(n)$ vertices, $\Theta(m)$ edges,
	and $\Theta(\cn C_i)$ copies of each component $C_i$ in $D$.
\end{lemma}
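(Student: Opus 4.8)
The plan is to verify each of the three quantities—number of vertices, number of edges, and number of copies of each basic component $C_i$—by a separate accounting over the gadgets comprising $G_{\vec z}$, and then to argue that the inter-gadget bipartite connections and the disjoint copy of $G'$ do not change any of these counts by more than a constant factor. Since $G'$ is, by Constraint~\ref{const:valid}, a graph with exactly the prescribed counts $n$, $m$, $\cn o_{k_i}$, $\cn s_{p_j}$, $\cn h$, it already contributes $\Theta$ of each target quantity; the content of the lemma is that the gadget construction adds at most a constant factor more, and (for the odd-cycle counts tied to the \CC{}) at most a constant factor more even in the \YES case. So I would phrase the whole proof as a sequence of upper bounds on what each gadget contributes.

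First I would bound vertices and edges. Each odd-cycle gadget (\MM{}, \SCG{}, or \CC{}) consists of $O(1)$ vertex sets each of size $O(\sqrt m)$, hence $O(\sqrt m) = O(n)$ vertices and $O(m)$ edges (the complete bipartite pieces between two sets of size $\sqrt m$ have $O(m)$ edges). Each star gadget \SSS{} has $|R_1| + |R_2| = O(n)$ vertices and, by construction, $\sum_{v\in R_1} d(v) = \Theta(\cn s_p^{1/p}\cdot\text{stuff})$—more carefully, the number of edges equals $\sum_{v\in R_1} d(v)$, which is at most $\sqrt m \cdot \sqrt m = m$ in the ``set $A$'' case (since $\sum_i a_i \le m$ by Constraint~\ref{const:stars_b}) and is $O(m)$ in the other case since degrees are spread so that each vertex in $R_2$ has degree $\le\davg$ over $n$ vertices. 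Since $|H|$ is constant there are $O(1)$ gadgets, and the inter-gadget connections each add $O(m)$ edges between two $R_1$ sets (the construction explicitly ensures no two sets get more than $\Theta(m)$ edges between them), so in total $G_{\vec z}\setminus G'$ has $O(n)$ vertices and $O(m)$ edges; adding $G'$ keeps both at $\Theta(n)$ and $\Theta(m)$.

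Next, the component counts. For a star component $S_{p_j}$: the corresponding \SSS{} gadget was built precisely so that $\sum_{v\in R_1} \binom{d(v)}{p_j} = \Theta(\cn s_{p_j})$ (in the ``set $A$'' case this is $\sum_i \binom{a_i}{p_j} = \Theta(\sum_i a_i^{p_j}) = \Theta(\cn s_{p_j})$ by Constraint~\ref{const:stars_b}, and in the other case it is $\sum_{v\in R_1} d(v)^{p_j}/\Theta(p_j!) = \Theta(\cn s_{p_j})$ directly); every other gadget has all degrees $O(\sqrt m)$, and there are $O(m^{1/2})$ such vertices, contributing $O(\sqrt m^{p_j+1})$ copies of $S_{p_j}$, which is $O(\cn s_{p_j})$ by Constraint~\ref{const:stars_lb}. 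So the $p_j$-star count is $\Theta(\cn s_{p_j})$. For an odd-cycle component $O_{k_i}$ with $\cn o_{k_i} > \sqrt m^{k_i-1}$, its \MM{} gadget (a complete $k_i$-partite graph on sets of size $\Theta(\cn o_{k_i}^{1/k_i})$) contains $\Theta(\cn o_{k_i})$ copies of $O_{k_i}$; for shorter/sparser cycles the \SCG{} is built with a single apex vertex so that it creates $\Theta(\cn o_{k_i})$ copies of $O_{k_i}$ while (by the single shared vertex) forcing all its $k_1$-cycles through that vertex, hence $O(\sqrt m^{k_1-1})$ of them. The \CC{} contributes $t\cdot\sqrt m^{k_1-2} = \Theta(|T|) = \Theta(\cn o_{k_1}/\alpha) = O(\cn o_{k_1})$ copies of $O_{k_1}$ in the \YES case and $0$ in the \NO case. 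One then checks cross-gadget cycles: by Constraint~\ref{const:cycles} and Constraint~\ref{const:short_or_star}, a \MM{} gadget for $O_{k_i}$ does not induce too many $O_{k_j}$ for $k_j\ne k_i$, and no two vertex-disjoint odd cycles arise purely from the \SCG{} gadgets or from inter-gadget bipartite pieces (which are bipartite, hence odd-cycle-free). Summing, the count of each $O_{k_i}$ is $O(\cn o_{k_i})$ from the gadgets, and $\Theta(\cn o_{k_i})$ once $G'$ is added.

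I expect the main obstacle to be the cross-gadget cycle bookkeeping in the last paragraph: verifying that no combination of edges from several different gadgets plus the inter-gadget bipartite connectors creates more than $\Theta(\cn o_{k_i})$ copies of any $O_{k_i}$. This is exactly where the various constraints in Definition~\ref{def:good_cnts} (especially Constraints~\ref{const:cycles}, \ref{const:short_or_star}, and \ref{const:stars}) are used, and the argument has to be split into sub-cases according to which gadget a hypothetical cycle passes through and in what order; the inter-gadget connectors being bipartite is the key simplification that rules out most mixed cycles. Everything else (vertices, edges, star counts) is a direct, if tedious, summation. I would therefore devote most of the written proof to a clean case analysis of the odd-cycle counts and dispatch the rest in a sentence or two each.
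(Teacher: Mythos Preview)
Your approach is essentially the same as the paper's: $G'$ supplies the lower bounds, and one upper-bounds each gadget's contribution to each count, using Constraint~\ref{const:stars_lb} for the stars and Constraint~\ref{const:cycles} for the odd cycles. Two small points where the paper is cleaner or where you drift slightly: (i) for the odd-cycle accounting the paper avoids a general cross-gadget case analysis by observing that a $k_j$-partite gadget with $k_j>k_i$ contains \emph{no} odd $k_i$-cycles, so only gadgets for $O_{k_j}$ with $k_j\le k_i$ matter, and Constraint~\ref{const:cycles} then directly gives $(\cn o_{k_j})^{(k_i-1)/(k_j-1)}\le \cn o_{k_i}$ or $(\cn o_{k_j})^{k_i/k_j}\le \cn o_{k_i}$ in the two sub-cases; (ii) Constraint~\ref{const:short_or_star} is not used in this lemma at all---it is invoked only later (in Lemma~\ref{lem:h_counts}) to bound the number of $H_D$ copies, not the individual component counts, so you should drop that reference here.
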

\begin{proof}
	The graph $G'$ ensure that all counts are at least as specified. It remains to prove that the counts are not exceeded.
	
	Fix an odd cycle $O_{k_i}$. We shall verify that is count as is required.
	\begin{enumerate}
	    \item By construction, the gadget corresponding to $O_{k_i}$ contributes $\Theta(\cn o_{k_i})$ copies of $O_{k_i}$.
	    \item Now consider contributions from gadgets $O_{k_j}$ for $j\neq i$.
	    \begin{itemize}
	    \item If  $k_j>k_i$ then such gadgets do not contribute to $\cn o_{k_i}$, as a $k_j$-partite graph induces no $k_i$ odd cycles for $k_i<k_j.$
	    \item If $k_j\leq k_i$ and $\cn o_{k_j}\leq \sqrt m^{k_j-1}$, then by the construction of the \SCG\ and by constraint~\ref{const:cycles}, it contributes $(\cn o_{k_j})^{(k_i-1)/(k_j-1)}\leq \cn o_{k_i}$ odd cycles of length $k_i$. 
	    \item Finally, if $k_j\leq k_i$ and $\cn o_{k_j}> \sqrt m^{k_j-1}$, then by the construction of the \MM\ and by constraint~\ref{const:cycles}, it contributes $(\cn o_{k_j})^{(k_i)/(k_j)}\leq \cn o_{k_i}$ odd cycles of length $k_i$. 
	    \end{itemize}
	    \item If $z=\vec 0$, the \CC{} does not contribute any odd cycles, as it is bipartite.
	    Otherwise, when $z\neq \vec 0$, and $k_1>k_i$, the \CC{} also contributes $0$ odd cycles of length $k_i<k_1$.
	    If $k_i\geq k_1$, 
	    then the gadget contributes 
	     $\Theta(t\cdot {\sqrt m}^{k_j-2})$ odd cycles of length $k_i$.
    	Since the $O_{k_1}$ component is the odd cycle component with maximum cost, we have that 
	\[
	m^{k_1/2}/\cn o_{k_1} \geq  m^{k_i/2}/\cn o_{k_i} \;\;\;\Leftrightarrow\;\;\; 
	\cn o_{k_i}\geq \frac{\cn o_{k_1}}{m^{k_1/2}} \cdot m^{k_i/2} \;\;\;\Leftrightarrow\;\;\; \cn o_{k_i} \geq t\cdot m^{k_i/2+1}
	\]
	where the last inequality is  by the setting of $t=\cn o_{k_1}/{\sqrt m}^{k_1-2}$. 
	\end{enumerate}
	Hence, summing over all contributions from all the components, we get that the number of copies of $O_{k_i}$ is $\Theta(\cn o_{k_i})$.
	
	Now fix a star component $S_p$. The vertices of the odd cycle gadgets  contributes at most $\sqrt m^{p+1}\leq \cn s_p$ to the number of copies of $S_p$ in $G_{\vec z}$.  All star gadgets  contribute $\Theta(\cn s_p)$ contribute $\Theta(\cn s_p)$ copies of $S_p$.
	Hence, the number of $S_p$ stars in $G_{\vec z}$ is $\Theta(\cn s_p)$. 
\end{proof}



\begin{lemma}\label{lem:h_counts}
    Let $\mG$ be the family of all graphs $G_{\vec z}$ such that $\vec z=\vec x\cdot \vec y$ for $\vec x, \vec y$ that are instances of \tSD.
    Let $B$ be an upper bound on the number of bits it takes Alice and Bob to communicate in order to answer queries on  any graph $G_{\vec z}\in \mG$.
	Then for any $\cn h$, and any algorithm that with high success  probability samples a uniformly distributed copy of  $H_{D}$ from a uniformly chosen $G_{\vec z}\in \mG$, the number of required queries is 
	\[\Omega\left(\frac{m^{k_1/2}\cdot \prod_{i>1}\cn c_i}{B\cdot \cn h}, m/B\right) \] 
	 in expectation,
	where $\cn h$ denotes the number of copies of $H_{D}$ in $G_{\vec z}$.
\end{lemma}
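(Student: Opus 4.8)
The plan is to prove Lemma~\ref{lem:h_counts} by a reduction from the $\tSD$ communication problem, following the framework of~\cite{Eden2018-approx} and generalizing the argument of the warm-up, Theorem~\ref{thm:warm_up}, to the general motif $H_D$. First I would fix the parameters of the reduction. Writing $\alpha=\prod_i\cn c_i/\cn h\ge 1$, take $|T|=\lfloor\cn o_{k_1}/\alpha\rfloor$, $t=\lfloor|T|/\sqrt m^{k_1-2}\rfloor$, and instantiate $\tSD$ with $N=\sqrt m$. Throughout this part of the section we are in the regime $\comp\le m$, which (as in the warm-up) forces $t\ge 1$; together with $\cn o_{k_1}\le m^{k_1/2}$ this makes $t$ a legal parameter, $1\le t\le m=N^2$, and it also guarantees $\cn h/\prod_{i>1}\cn c_i\ge 1$ so that $|T|\cdot\prod_{i>1}\cn c_i=\Theta(\cn h)$. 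The identity driving the calculation is
\[
\frac{m}{t}\;=\;\frac{m^{k_1/2}}{|T|}\;=\;\frac{m^{k_1/2}\,\alpha}{\cn o_{k_1}}\;=\;cost(O_{k_1})\cdot\alpha\;=\;\frac{m^{k_1/2}\cdot\prod_{i>1}\cn c_i}{\cn h},
\]
so the bound we are after is exactly $\Omega(m/(tB))$, which --- since $t\ge 1$ --- is also at most $m/B$, matching the statement.

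The technical core, and the step I expect to be the main obstacle, is a counting claim for $H_D$ in the same spirit as Lemma~\ref{lem:unique_min_lb_consistency} for the basic components: for every $G_{\vec z}\in\mG$ the number of copies of $H_D$ is $\Theta(\cn h)$, and when $\vec z\ne\vec 0$ a constant fraction of them use an edge of the \CC{} between $R_1$ and $R_2$ (a ``crucial'' edge), whereas when $\vec z=\vec 0$ none do. I would handle the lower direction as follows: each of the $|T|$ hidden $k_1$-cycles of the \CC{} passes through exactly one vertex of $R_1$; that vertex is joined, by a complete bipartite graph, to the $R_1$ set of every gadget attached to $O_{k_1}$ in $H_D$; and each of the $\Theta(\cn c_i)$ copies of a component $C_i$ that live inside their own gadget has the connecting vertex in that gadget's $R_1$ set (for a \MM{} a $k$-partite cycle meets each part exactly once, and for a \SCG{} or a \SSS{} the set $R_1$ is a single vertex, respectively the set of all star centers). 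Hence each hidden cycle extends in $\Theta(\prod_{i>1}\cn c_i)$ ways to a copy of $H_D$, yielding $\Theta(|T|\prod_{i>1}\cn c_i)=\Theta(\cn h)$ crucial copies when $\vec z\ne\vec 0$ and none when $\vec z=\vec 0$ (the \CC{} is then bipartite). The matching upper bound on the total count is the delicate part: one must show that copies confined to the disconnected graph $G'$ number $O(\cn h)$, and that every copy spanning several gadgets is forced to route its inter-component edges through the $R_1$ sets and so is again bounded by $O(|T|\prod_{i>1}\cn c_i)=O(\cn h)$ --- in particular that no short-cycle gadget over-produces $k_1$-cycles. This is precisely where the full list of constraints of Definition~\ref{def:good_cnts} (constraints~\ref{const:cycles}, \ref{const:stars_lb}, \ref{const:short_or_star}, \ref{const:stars}) and the exact construction of $H_D$ must be invoked together, just as Lemma~\ref{lem:unique_min_lb_consistency} uses them for the component counts.

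Given the counting claim, the reduction is routine. Let $\mA$ return, with probability at least $9/10$, a uniformly random copy of $H_D$ from a uniformly random $G_{\vec z}\in\mG$ in $Q$ expected queries. On input $(\vec x,\vec y)$ with $\vec z=\vec x\cdot\vec y$, Alice and Bob implicitly build $G_{\vec z}$ --- every part known to both except the adjacencies among the vertices of $R_1\cup R_2\cup R_1'\cup R_2'$ of the \CC{}, which encode $(\vec x,\vec y)$ --- run $\mA$ on their shared randomness so that it becomes deterministic and both simulate the same execution, and answer each query, communicating at most $B$ bits precisely for the queries whose answer depends on $(\vec x,\vec y)$, all of which touch $R_1\cup R_2\cup R_1'\cup R_2'$; that $B=O(1)$ is proved exactly as in the warm-up (this is the planned Lemma~\ref{lem:comm_bits}), since all vertex degrees and all remaining adjacencies are instance-independent. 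Repeating this $O(1)$ times and answering ``intersecting'' iff some returned object is a copy of $H_D$ using a crucial edge solves $\tSD$ correctly with probability bounded away from $1/2$, by the counting claim; this costs $O(QB)$ bits, so the $\Omega(N^2/t)=\Omega(m/t)$ lower bound for $\tSD$ (Corollary~2.7 of~\cite{Eden2018-approx}) gives $Q=\Omega(m/(tB))=\Omega\!\left(m^{k_1/2}\prod_{i>1}\cn c_i/(B\,\cn h)\right)$, which is the claim. (One should also check the degenerate case $D=\{O_{k_1}\}$, where $H_D=O_{k_1}$, $\cn h=\cn o_{k_1}$, and the bound collapses to the known $\Omega(m^{k_1/2}/\cn o_{k_1})$ lower bound for sampling a uniform odd cycle.)
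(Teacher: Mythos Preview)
Your plan follows the paper's own proof closely: the parameter setting $|T|=\cn o_{k_1}/\alpha$, $t=\lfloor |T|/\sqrt m^{k_1-2}\rfloor$, $N=\sqrt m$, the counting claim that in the intersecting case $G_{\vec z}\setminus G'$ contains $\Theta(\cn h)$ copies of $H_D$ routed through the crucial edges while in the disjoint case it contains $o(\cn h)$ (this being where the constraints of Definition~\ref{def:good_cnts} are spent), and the simulation of $\mA$ by Alice and Bob are all exactly as in the paper. Your decision rule (``some returned copy uses a crucial edge'') is a cosmetic variant of the paper's (``some returned copy lies in $G_{\vec z}\setminus G'$''); both rest on the same $o(\cn h)$ bound for non-crucial copies in $G_{\vec z}\setminus G'$, which is needed to keep the total count $\Theta(\cn h)$ in the intersecting case.

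There is one genuine omission. You restrict yourself to the regime $\dc(G,H,\mD(H))\le m$ and then assert that, since $t\ge 1$, the bound $\Omega(m/(tB))$ is ``at most $m/B$, matching the statement.'' But the lemma is a $\min$ of two terms, and in the complementary regime $\dc>m$ the minimum is $m/B$; there your parameter choice gives $t<1$, so the reduction as stated is vacuous. The paper handles this case separately (last paragraph of the proof): it shrinks the sets $R_3,\ldots,R_{k_1}$ of the \CC{} from size $\sqrt m$ to $\sqrt{m}/\beta$ for $\beta=\alpha m^{(k_1-2)/2}/\cn o_{k_1}$, sets $t=1$, and reruns the same argument to obtain $\Omega(m/B)$. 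Your plan should acknowledge this second case and the construction tweak it requires; without it the proof covers only half of the lemma.
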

\begin{proof}
	First assume that the first term achieves the minimum. In that case we have that $h\geq m^{k_1/2-1}\cdot \prod_{i>2} \cn c_i$ and we aim to prove a lower bound of $\Omega\left(\frac{1}{B}\cdot \comp \right)$.
	We let $t=  \left\lfloor \left(m^{(k_1-2)/2}\cdot \prod_{i>1} \cn c_i\right)/\cn h\right\rfloor$. This $t$ is the one  which determines the t-\SD{} communication problem we consider.
	Given a t-\SD{} instance with inputs $\vec x$ and $\vec y$, we construct $G_{\vec z}$ as described above, where recall that $\vec z=\vec x\cdot \vec y$ determines the \CC. 
	
	We first consider the case that  $|\vec z|=t$, and argue that the number of copies of $H_D$ in $G\setminus G'$ is $\Omega(\cn h)$.
	Since $|\vec z|=t$,  the \CC\ corresponding to $O_{k_1}$ contains $t\cdot \sqrt m^{(k_1-2)/2}$ odd cycles of length $k_1$ (since fixing an edge $t$, one can complete it to a $k_1$ length cycle by choosing one vertex (out of the possible $\sqrt m$) in each of the sets $R_i$ for $ i \in [3,k_1]$).
	By choosing one odd cycle or star from every odd cycle and star gadgets in $G$, it holds 
	that 
	 the  number of copies of $H_{D}$ in $G_{\vec z}\setminus G$ is at least  $ t \cdot   m^{(k_1-2)/2} \cdot \prod_{i>1} \cn c_i = \left\lfloor |T|\cdot \prod_{i>1} \cn c_i \right\rfloor$. 
	 Observe that by the construction of $G$, the edges between the odd cycles and stars of different components agree with the non-decomposition edges of $H_D$.
	Hence, the  number of copies of $H_{D}$ in $G_{\vec z}\setminus G'$ is  at least $\Omega(\cn h)$.

	We now turn to the  case that  $\vec z=\vec 0$.
	and argue that the graph $G\setminus G'$ contains less  $o(\cn h)$ copies $H_D$.
	We deal separately the two potential cases due to constraint~\ref{const:short_or_star}, that is, that either there is at least one star with $\cn S_p>|H|\sqrt m^{p+1}$, or that that for all odd cycle components $O_{k_i}$ for  $k_i\leq k_1$, there are a few of them ($\cn o_{k_i}\leq \sqrt m^{k-1}$).
	(Recall that this constraint is to prevent short cycle gadgets from creating too many copies of $H_D$ within themselves.)
	
	Assume first that there exists at least one star $S_p$ in $D$ with $\cn S_p=\omega(m\cdot (\cn o_{k_*})^{(p+1/k_*)})$,
	where recall that $k_*$ is the index of the $O_k$ component that maximizes $\cn o_{k}^{1/k}$.
	Recall that by the construction of the motif $H_D$,  $S_p$ is connected to $O_{k_1}$. Also recall that in that case, the \SCG\ is identical to the \MM, and it holds that a \MM\ can potentially create at most $k_i \cdot (\cn o_{k_i})^{1/k_i}\cdot  (\cn o_{k_i})^{p/k_i}=k_i \cdot  (\cn o_{k_i})^{(p+1)/k_i}$ copies of $S_p$.
	Also, for all other $S_{p_j}\in D$, at most $ \sqrt m^{p+1}$ copies of 
	$S_{p_j}$ are created
	Hence, each \MM\ creates at most 
	\[
	 (\cn o_{k_i})^{(p+1)/k_i} \cdot \prod_{j\in[q]} (\cn o_{k_i})^{k_j/k_i}\cdot \prod_{j\in[\ell], S_{p_j}\neq S_p} 	 \sqrt m^{p_j+1} 	
	\leq 
	 (\cn o_{k_i})^{(p+1)/k_i}  \cdot \prod_{j\in[q]} \cn o_{k_j}\cdot \prod_{j\in[\ell], S_{p_j}\neq S_p} 	\cn  s_{p_j} \;,
	\]
 	where the last equality is due to constraint~\ref{const:stars_lb}.
	Also, since  $\cn o_{k_1}\in [\sqrt m^{k_1-2}, \sqrt m^{k_1}]$, and $t\geq 1$, it holds that $t\cdot \sqrt m^{k-2} \geq \cn o_{k_1}/m$.
	Hence, 
	\[h=t\cdot \sqrt m^{k-2}\cdot \prod_{i>1} \cn o_{k_i}\cdot \prod_{j\in[\ell]} \cn s_{p_j} \geq \frac{1}{m} \prod_{i\in[q]} \cn o_{k_i}\cdot \prod_{j\in[\ell]} \cn s_{p_j} =
	\frac{1}{m}\cdot \cn s_{p}\cdot  \prod_{i\in[q]} \cn o_{k_i}\cdot \prod_{j\in[\ell], S_{p_j}\neq S_p} \cn s_{p_j}\;.
	\]
	Since $\cn s_p =\omega(m\cdot (\cn o_{k_i})^{(p+1)/k_i})$, it holds that 
	the number of copies created by the \MM\ of $O_{k_i}$ is $o(\cn h)$.
	Therefore, in that case the number of copies of $H_D$ in $G\setminus G'$ is $o(\cn h)$. 
	
	In the case that there is no star $S_p$ with sufficiently many copies as above, we have that constraint~\ref{const:short_cycles} holds. In that case, 
	for every $O_{k_i}$, either (1) $k_i\leq k_1$, and so $O_{k_i}$ has a \SCG\ with a part $R_1$ consisting of a single vertex; or (2) $k_i>k_1$ and  $O_{k_i}$ has an \MM.
    In case (1), since the part $R_1$ of the \SCG\ has a single vertex no copies of $H_D$ can be created. In case (2), since $k_i>k_1$, no copies of odd length cycles of length $k_1$ are formed, and again no copies of $H_D$ can be created. 
    Also, no copies of $H_D$ can be created by combining odd cycles of different gadgets, since each \SCG\ can contribute at most one odd cycle, and \MM\ cannot  contribute short cycles, and so at least one odd cycle of length $k_i<k_1$ will be missing. 
    
    Therefore, in both cases of constraint~\ref{const:short_or_star}, the number of copies of $H_D$ in $G\setminus G'$ is $o(\cn h)$, as claimed.

	
	Now let $\mA$ be any algorithm that samples returns a uniformly distributed copy of $H_{D}$. Then Alice and Bob can invoke  $\mA$ on the  (implicit) graph $G_{\vec z}$ and whenever  $\mA$ performs a query, by the assumption of the lemma, Alice and Bob can communicate $B$ bits to answer it. 
	Alice and Bob repeat the above for $10$ times.
	Let $Q$ denote the number of queries each invocation of $\mA$ performs. After $\mA$ concludes all its runs, if $\mA$ returns any copy of $H_{D}$ from $G_{\vec z}\setminus G'$, then Alice declares that $x$ and $y$  intersect, and otherwise she declares they do not. 
	Since the number of copies  of $H_{D}$ from $G_{\vec z}\setminus G'$ is at least $1/2$ of the number of copies in $G_{\vec z}$,
	each invocation of $\mA$ should return a copy of $H_{D}$ from $G_{\vec z}\setminus G'$ with probability at least $2/3$. 
	 Hence, the probability that $\vec z\neq \vec 0$ and no copy from $G_{\vec z}\setminus G'$  is returned is at most $(2/3)^{10}.$
	Therefore,  Alice and Bob can with high probability solve the $\tSD$ instance using $O(Q\cdot B)$ bits of communication. 
	By the $\Omega(m/t)$ expected communication lower bound for $\tSD$, it follows that $Q=\Omega(m/(t\cdot B))$. Since $t=\Theta(\cn h/( m^{k_1/2-1}\cdot \prod_{i>1} \cn c_i))$,
	we get an 
	\[
	\Omega\left( \frac{m}{t\cdot B}\right)= \Omega\left(\frac{m^{k_1}/2\cdot \prod_{i>1}\cn c_i}{B \cdot \cn h} \right) 
	\]
	lower bound, as claimed.

	For the case that the minimum in the lower bound is due to the term $m$, we use the same proof, but with adjusted values of $|T|,t$ and the sizes of the sets in the \CC{} of $O_{k_1}$. All other arguments remain the same. 
	Recall that $\cn h = \prod_{i} \cn c_i/\alpha$, and so in this case we have that $\frac{m^{k_1/2}}{\cn o_k}\cdot \alpha \geq m \Rightarrow \cn o_k \leq \alpha \cdot m^{(k_1-2)/2}$.
	Let $\beta>1$ be  $\beta=\alpha m^{(k_1-2)/2}/\cn o_{k_1} \Rightarrow 
	 \cn o_k = \alpha \cdot (m/\beta)^{(k_1-2)/2}$.  
	We change the \CC{} that corresponds to $O_{k_1}$ by changing the sizes of its sets $R_3,\mydots, R_{k_1}$ to be of size $\sqrt m/\beta$ instead of $\sqrt m$. We now let
	 \[|T|=\cn o_{k_1}/\alpha=(m/\beta)^{(k_1-2)/2} \text {\;\;\;\;\;\;and\;\;\;\;\;\;} t=\left\lfloor |T|/\sqrt m^{(k_1-2)/2} \right\rfloor=1.\]
	 	By the same arguments as for the previous case, we have that if $\vec z=0$, then all copies of $H_{D}$ are in $G'$, and otherwise, $G_{\vec z} \setminus G'$ has $t\cdot (\sqrt m/\beta)^{k_1-2}\cdot \prod_{i>1} \cn c_i = (\cn o_{k_1}/\alpha)\cdot \prod_{i>1}\cn c_i=\prod_{i}\cn c_i /\alpha = \cn h$ many copies of $H_{D}$.
	 	Therefore, the proof continues as before and we get a lower bound of $\Omega(m/B\cdot t)=\Omega(m/B)$ on the expected query complexity of any algorithm that returns a uniformly distributed copy of $H_{D}$.
\end{proof}

It remains  to prove that queries on $G_{\vec z}$ can be answered by Alice  efficiently.

\begin{lemma}\label{lem:comm_bits}
	Alice  can answer any query to $G_{\vec z}$ using $O(1)$ bits of communication between Alice and Bob. That is $B=O(1)$.
\end{lemma}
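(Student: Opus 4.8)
The plan is to go through each query type in the augmented model (degree, $i$-th neighbor, pair, uniform edge) and argue that answering it requires only a constant number of bits, because the only part of $G_{\vec z}$ that depends on the instance $(\vec x,\vec y)$ is the small set of ``potential'' edges inside the four sets $R_1,R_2,R_1',R_2'$ of the \CC, and each such potential edge is governed by a single bit $\vec z_{ij} = \vec x_{ij}\wedge \vec y_{ij}$. First I would fix a convention: Alice holds $\vec x$, Bob holds $\vec y$, and the entire gadget structure of $G_{\vec z}$ (sizes of all $R$-sets, the complete bipartite connections between adjacent sets, the inter-gadget connections, and the fixed auxiliary graph $G'$) is common knowledge, since it is determined only by $D$, the counts $\{\cn c_i\}$, $n$, $m$, $\cn h$, and $t$ — none of which depend on $(\vec x,\vec y)$. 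Hence Alice can always answer, on her own and with zero communication, any query that does not touch an instance-dependent edge.

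The key steps, in order: (i) \emph{Degree queries.} By construction every vertex of every cycle gadget has a fixed degree ($2\cn o_k^{1/k}$ for a \MM/\SCG, $2\sqrt m$ in the \CC), and the star- and inter-gadget connections are laid out deterministically, so $d(v)$ is independent of $(\vec x,\vec y)$; Alice answers with $0$ bits. This is exactly where the careful ``degree-preserving'' wiring of the \CC{} (adding either the $R_1$–$R_2$ edges in the \YES case \emph{or} the $R_1$–$R_1'$, $R_2$–$R_2'$ edges in the \NO case, each as the $(j-i)\th$ edge of its endpoints) pays off: the degree is $2\sqrt m$ in both cases. (ii) \emph{Pair queries $pair(u,v)$.} If $u,v$ do not both lie in $R_1\cup R_2\cup R_1'\cup R_2'$ of the \CC, the answer is determined by the fixed structure and Alice answers alone. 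If both lie in these sets, say $u = r_1^i$, $v = r_2^j$, then the edge is present iff $\vec z_{ij} = \vec x_{ij}\wedge\vec y_{ij} = 1$: Bob sends the single bit $\vec y_{ij}$ to Alice (one bit), Alice computes $\vec x_{ij}\wedge\vec y_{ij}$, and the symmetric cases (pairs inside $R_1\times R_1'$, $R_2\times R_2'$, etc.) are handled identically, each costing $O(1)$ bits. (iii) \emph{$i$-th neighbor queries $Nbr(v,i)$.} If $v$ is not in $R_1\cup R_2\cup R_1'\cup R_2'$, the $i$-th neighbor is fixed by the deterministic edge-labelling and Alice answers alone. If $v = r_1^i$ and $i \le \sqrt m$, the label convention says the requested neighbor is the one indexed by the offset $(j-i)$, i.e.\ it is $r_2^{j}$ precisely when $\vec z_{ij}=1$ and $r_1'^{\,j}$ (resp.\ the corresponding \NO-case neighbor) when $\vec z_{ij}=0$; since a single offset picks out a single candidate pair $(i,j)$, Bob sends $\vec y_{ij}$ (one bit) and Alice determines and returns the neighbor. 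For index $i > \sqrt m$ the neighbor is again a fixed inter-gadget or $G'$ vertex. (iv) \emph{Uniform edge queries.} Here I would have Alice sample the edge herself: since $G'$ and the deterministic part of $G_{\vec z}$ contain $\Theta(m)$ edges and the instance-dependent part contributes at most $t\cdot\mathrm{poly}$ edges which is a lower-order fraction, Alice can sample from the fixed edge set; alternatively, to get an \emph{exactly} uniform edge, observe that the total edge count $m$ is the same for every $\vec z$ with $|\vec z|\in\{0,t\}$ (the \YES case replaces the $R_1$–$R_1'$/$R_2$–$R_2'$ potential edges by $R_1$–$R_2$ edges one-for-one), so Alice picks a uniform slot; if the slot lands on one of the (at most $2t$) potential-edge positions $(i,j)$, Bob sends $\vec y_{ij}$ and Alice outputs the corresponding edge (either the \YES-version or the \NO-version, both of which exist with the right multiplicity). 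Either way the cost is $O(1)$ bits.

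Putting these together, every single query to $G_{\vec z}$ is answered with at most a constant number of bits exchanged between Alice and Bob, so $B = O(1)$, which is what we needed. The main obstacle, and the only place that requires genuine care rather than bookkeeping, is step (iii)/(iv): one must make sure the edge-\emph{labelling} is set up so that a single neighbor index (or a single uniform-edge slot) maps to a single pair $(i,j)$ and hence to a single bit $\vec z_{ij}$ — otherwise a neighbor query could in principle require scanning many potential edges and cost $\omega(1)$ bits. This is precisely guaranteed by the construction's convention of placing the potential edge between $r_1^i$ and $r_2^j$ (and its mirror) as ``the $(j-i)\th$ edge'' of each endpoint, so that the offset $j-i$ is a bijection between a neighbor-slot of $r_1^i$ and a column index $j$; I would spell this bijection out explicitly and then the bound $B=O(1)$ follows immediately. (One should also note that the \YES/\NO wiring is arranged so the label-set $\{1,\dots,\sqrt m\}$ of ``gadget edges'' at each vertex is fully used in both cases, so that the same index $i$ is a valid neighbor query regardless of $\vec z$, and Alice never needs to know $\vec z$ just to know \emph{whether} $Nbr(v,i)$ is instance-dependent.)
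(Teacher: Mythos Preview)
Your proposal is correct and follows essentially the same case analysis as the paper's proof: degrees are instance-independent (so degree and uniform-edge queries reduce to picking a fixed slot and, if needed, a single bit $\vec y_{ij}$), pair queries inside $R_1\cup R_2\cup R_1'\cup R_2'$ cost one bit, and neighbor queries exploit the $(j-i)$th-slot labelling to pin down a unique index pair. One small correction: in your option (a) for uniform edge sampling, the number of instance-dependent edge \emph{slots} is $\Theta(m)$, not $t\cdot\mathrm{poly}$ (every one of the first $\sqrt m$ labels at each vertex of $R_1,R_2,R_1',R_2'$ is instance-dependent, regardless of whether it realizes a \YES or \NO edge), so (a) as stated does not work---but your option (b) is exactly right and is what the paper uses implicitly.
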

\begin{proof}
	We consider each of the possible queries.
	
	\textbf{Answering degree queries and uniform edge sample queries.}
	Observe that all the vertices' degrees in the graph  are set regardless of the $(x,y)$ instance. Therefore, Alice knows the degree sequence and can produce a uniform edge sample and answer a degree query with zero communication.
	
	\textbf{Pair queries.}
	Pair queries that include at most one vertex from the sets $R_1, R_2, R'_1, R'_2$,  of the \CC\  can be answered with zero communication. 
	Pair queries $(u,v)$ where say $u=r_1^i$ and $v=r_2^j$, are answered as follows.
	Bob sends to Alice the bit $y_{i,j}$. If the two bits intersect then the answer to the pair query is positive and otherwise, it is negative. Queries on other pairs with both endpoints in $R_1, R_2, R'_1, R'_2$ are answered similarly.
	
	\textbf{Answering $i\th$ neighbor queries.}
	First, any neighbor queries for vertices outside \CC{} can be answered with zero communication. Let $(v, j)$ be an $j\th$ neighbor query for some $v$ in the \CC. If $v\notin R_1\cup R_2$ or $j> \sqrt m$ then again the query can be answered with no communication. Therefore, assume without loss of generality that $v=r_1^i$ for some  $r_1^i\in R_1$ and that  $j\leq \sqrt m$.
	In this case Bob will send the bit $y_{j+i}$ to Alice (recall that both Alice and Bob invoke the same algorithm using their shared randomness, so that the queries are known to both without communication).
	If $x_{i,i+k}\cdot  y_{i,i+k}=1$, then Alice answers $r_2^{i+j \mod \sqrt m }$. Otherwise, Alice answers $(r')_1^j$.
	Neighbor queries on vertices in $R_2, R'_1$ and $R'_2$ are answered similarly.
\end{proof}

Theorem~\ref{thm:main_lb} follow  from Lemma~\ref{lem:h_counts} and Lemma~\ref{lem:comm_bits}.

	\subsection{From Theorem~\ref{thm:main_lb} to Theorem~\ref{thm:lb_dc}}\label{sec:dc_main}

\begin{lemma}\label{lem:main_to_dc}
	Theorem~\ref{thm:lb_dc} follows from Theorem~\ref{thm:main_lb}.
\end{lemma}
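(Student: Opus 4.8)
Supplying the lemma boils down to a purely combinatorial construction of \emph{good} counts: given a decomposition $D=\decomp=\{C_i\}_{i\in[r]}$ with at least one odd cycle, values $n,m$, and a realizable value $\textproc{dc}$ of $\dc$ (we may assume $1\le\textproc{dc}\le m$; for $\textproc{dc}=O(1)$ the claimed bound is vacuous, and for larger $\textproc{dc}$ the query complexity is anyway $O(\min\{m\log n,n+m\})$), I would produce counts $\{\cn c_i\}_{i\in[r]}$ and a value $\cn h$ that are good in the sense of Definition~\ref{def:good_cnts} and satisfy $\comp=\Theta(\textproc{dc})$. Then Theorem~\ref{thm:main_lb} immediately hands over a motif $H_D$ with optimal decomposition $D$ and a family $\mG$ of graphs on $n$ vertices and $m$ edges such that every $G\in\mG$ realizes the prescribed component counts and number of copies of $H_D$; consequently $\dc(G,H_D,D)=\Theta(\comp)=\Theta(\textproc{dc})$ for every $G\in\mG$, and sampling (whp) a uniform copy of $H_D$ in a uniformly chosen $G\in\mG$ costs $\Omega(\min\{\comp,m\})=\Omega(\textproc{dc})$ queries, which is exactly Theorem~\ref{thm:lb_dc}. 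The rounding implicit in the count construction is absorbed by the $\Theta$.

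I would base the construction on a longest odd cycle $O_{k_1}$ of $D$ — which will be the designated maximum-cost component — and a single density parameter $\gamma$. For the regime $\textproc{dc}\ge\sqrt m$, take $\gamma\in[1,\sqrt m]$ and set $\cn o_{k_i}=\lceil\gamma^{\,k_i-1}\rceil$ for every odd cycle $O_{k_i}\in D$ and $\cn s_{p_j}=\lceil\max\{n^{p_j},\sqrt m^{\,p_j+1}\}\rceil$ for every star $S_{p_j}\in D$. Then $cost(O_{k_i})=\Theta(m^{k_i/2}/\gamma^{\,k_i-1})$ is non-decreasing in $k_i$ since $\gamma\le\sqrt m$, so $cost(O_{k_1})$ is the largest cycle cost and in fact $cost(O_{k_1})\ge\sqrt m$, whereas $\cn s_{p_j}\ge\sqrt m^{\,p_j+1}$ forces $cost(S_{p_j})<\sqrt m$, so the overall maximum-cost component is $O_{k_1}$. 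As $\gamma$ ranges over $[1,\sqrt m]$, $cost(O_{k_1})$ ranges over $[\Theta(\sqrt m),\Theta(m^{k_1/2})]$, so I can pick $\gamma$ with $cost(O_{k_1})=\Theta(\min\{\textproc{dc},m^{k_1/2}\})$ and then set $\cn h=\big\lceil cost(O_{k_1})\cdot\prod_i\cn c_i/\textproc{dc}\big\rceil$, which one checks satisfies $1\le\cn h\le\prod_i\cn c_i$. This gives $\comp=cost(O_{k_1})\cdot\prod_i\cn c_i/\cn h=\Theta(\textproc{dc})$.

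The bulk of the work — and where I expect the main difficulty — is verifying that these counts satisfy all six interlocking clauses of Definition~\ref{def:good_cnts}. Constraint~\ref{const:max_cost} was arranged above. Constraint~\ref{const:cycles} holds with equality: $\cn o_{k_i}^{1/(k_i-1)}=\Theta(\gamma)$ for every cycle and $\cn o_{k_i}\le\sqrt m^{\,k_i-1}$ (as $\gamma\le\sqrt m$), which is exactly the hypothesis of the relevant branch. The same bound $\cn o_{k_i}\le\sqrt m^{\,k_i-1}$ for all $k_i\le k_1$ gives option~\ref{const:short_cycles}, hence constraint~\ref{const:short_or_star}; together with $\cn s_{p_j}\ge n^{p_j}$ it gives option~\ref{const:stars_a}, hence constraint~\ref{const:stars}; and $\cn s_{p_j}\ge\sqrt m^{\,p_j+1}$ is constraint~\ref{const:stars_lb}. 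The only genuinely non-formal clause is realizability, constraint~\ref{const:valid}; I would discharge it by exhibiting a witness graph as a disjoint union of one blown-up odd cycle of length $k_i$ (an \SCG-style gadget with parts of size $\Theta(\gamma)$) per $O_{k_i}$, one bipartite gadget per $S_{p_j}$ whose one side carries the prescribed $p_j$-th degree moment, and one additional small component planting exactly $\cn h$ copies of $H_D$; checking that the union has $\Theta(n)$ vertices, $\Theta(m)$ edges, $\Theta(\cn c_i)$ copies of each $C_i$ and $\Theta(\cn h)$ copies of $H_D$ is a finite computation in the style of Lemma~\ref{lem:unique_min_lb_consistency}.

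It remains to cover realizable $\textproc{dc}<\sqrt m$. Here one necessarily has $cost(O_{k_1})<\sqrt m$, i.e.\ $\cn o_{k_1}>\sqrt m^{\,k_1-1}$ (the ``many cycles'' regime for the maximum-cost cycle), so the clean argument above breaks: option~\ref{const:short_cycles} of constraint~\ref{const:short_or_star} and option~\ref{const:stars_a} of constraint~\ref{const:stars} no longer apply, and one must instead satisfy constraint~\ref{const:short_or_star} through its star option~\ref{const:one_star} by inflating some star count to $\cn s_p=\omega(\sqrt m^{\,p+3})$, and constraint~\ref{const:stars} through option~\ref{const:stars_b}; this is possible precisely when $D$ admits a star whose count can be made that large (when the only stars are $S_1$'s, $\cn s_1=m$ is forced, and one argues separately that $\textproc{dc}\ge\sqrt m$ for every realizing graph, so the sub-case is vacuous). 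I expect this complementary regime — reconciling a large $\cn o_{k_1}$, an inflated star count, the $\cn o_k^{1/k}$-monotonicity branch of constraint~\ref{const:cycles}, and realizability all at once — to be the most delicate part of the argument; everything else is bookkeeping.
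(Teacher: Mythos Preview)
Your proposal is correct and follows essentially the same approach as the paper: both specify an explicit set of good counts (in the sense of Definition~\ref{def:good_cnts}) realizing the target $\textproc{dc}$, verify the clauses of that definition, and invoke Theorem~\ref{thm:main_lb}. The concrete parameter choices differ slightly---the paper sets $\cn h=\prod_i\cn c_i$ and tunes the count of a single designated cycle (taking the \emph{minimum}-length odd cycle as the max-cost component, with a case split at $\textproc{dc}=m$ rather than your $\sqrt m$)---but these are inessential variations on the same construction, and your discussion of the small-$\textproc{dc}$ regime is, if anything, more careful than the paper's.
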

\begin{proof}
	
 Assume that Theorem~\ref{thm:main_lb} holds. 
 Fix  $D$ to be a decomposition that contains at least one odd cycle component and a unique minimum odd length cycle, and fix   $n,m$ and a realizable value of $\textproc{dc}$. We would like to argue that there exists a motif $H_D$ with optimal decomposition $D$, and  a hard family of graphs $\mG$ over $n$ vertices, $m$ edges and with decomposition cost $\textproc{dc}$, such that sampling a uniformly distributed copy of $H$ in graphs uniformly chosen in $\mG$ takes $\Omega(\min\{\textproc{dc},m\})$. 
 In order to do so we shall specify a set of good counts. 
We set the counts depending on the value of $\textproc{dc}$.
If $\textproc{dc}\geq  m$, then 
we set the odd cycle counts as follows:
for every $O_{k_i}\in D$,
\[
\begin{cases}
\cn o_{k}=\lceil m^{k/2}/\textproc{dc}\rceil & \text{ if $k_i=k$}
o_{k_i}=\sqrt m^{k_i-1}, \text{ if $k_i<k$}\\
o_{k_i}=\sqrt m^{k_i}, \text{ if $k_i>k$    } \;.
\end{cases}
\]
If $\textproc{dc}< m$, then 
we set the odd cycle counts as follows.
Let $O_{k'}$ be the minimum length odd cycle in $D$. for every $O_{k_i}\in D$,
\[
\begin{cases}
\cn o_{k_i}=\lceil m^{k'/2}/\textproc{dc}\rceil  \text{ if $k_i=k'$ }\\
o_{k_i}=\sqrt m^{k_i-1}& \text{ if $k_i>k'$ }\;.
\end{cases}
\]
Observe that by the assumption that there is only one odd cycle component of minimum length, indeed for every $k_i$, either $k_i=k'$ or $k_i>k'$.
In both cases we also set $\cn s_p=\davg \cdot n^p$
We also set $\cn h=\prod_{i\in [r]} \cn c_i$.
 
 In order to be able to invoke Theorem~\ref{thm:main_lb}, we argue that these counts are good, as defined in Definition~\ref{def:good_cnts}.
 First, to see that the counts are realizable, consider a graph 
$G$ which has a \SCG\ for every  $O_{k_i} \in D$ such that $k_i\leq  k$, and a \MM\ for  every  $O_{k_i} \in D$ such that $k_i> k$. For every $S_p\in D$ we have a \SSS. We let $H_D$ be the components of $D$ that are connected is some tree like manner, and we connect the  gadgets of $G$ by a complete bipartite graph between any two gadgets whose corresponding components in $H_D$ are connected.
It holds that the number of copies of $H_D$ in $G$ is $\cn h=\pi{\cn c_i}$.
One can  verify that in both cases of possible values of $\textproc{dc}$, the rest of the constraints of Definition~\ref{def:good_cnts} also hold. 

Finally, in case that that $\textproc{dc}>m$, $\max_{i\in [r]}{cost(C_i)}= m^{k/2}/\cn o_k=\Theta(\textproc{dc})$, and otherwise $\max_{i\in [r]}{cost(C_i)}=m^{k'/2}/\cn o_{k'}=\Theta(\textproc{dc})$.
Hence, we get that in both cases, 
\[
\dc(G,H_D, D)=\max_{i\in [r]}{cost(C_i)}\cdot \frac{\prod_{i\in[r]}\cn c_i}{\cn h}= \Theta(\textproc{dc}).
\]
Therefore, 
 we can invoke Theorem~\ref{thm:lb_dc}, and the theorem follows.
\end{proof}

	
	
	
	\section{Acknowledgments}
	Talya Eden is thankful to Dana Ron and Oded Goldreich for their valuable  suggestions regarding the presentation of the lower bound results. The authors are thankful for the anonymous reviewers for their useful comments and observations. 
	
	\bibliography{ref}
	\appendix
	\section{Related Work}\label{sec:related-work}
We note that some of the works were mentioned before, but we repeat them here for the sake of completeness.
Over the past decade, there has been a growing body of work investigating the questions of approximately counting and sampling motifs in sublinear time.
These questions were considered for various motifs $H$, classes of $G$, and query models.

The study of sublinear time estimation of motif counts was initiated by the works of Feige~\cite{feige2006sums} and of Goldreich and Ron~\cite{GR08} on approximating the average degree in general graphs. 
Feige \cite{feige2006sums} investigated the problem of estimating the average degree of a graph, denoted $\davg$, when given query access to the degrees of the vertices. By performing a careful variance analysis, Feige proved that $O\left(\sqrt {n/\davg}/\epsilon\right)$ queries are sufficient  in order to obtain a $(\frac{1}{2}-\epsilon)$-approximation of $\davg$.
He also proved that a better approximation ratio cannot be achieved in sublinear time using only degree queries. The same problem was then considered by Goldreich and Ron~\cite{GR08}. Goldreich and Ron proved that an $(1+\epsilon)$-approximation can
be achieved with $O\left(\sqrt{n/d_{avg}}\right) \cdot \poly(1/\epsilon,\log n)$ queries, if neighbor queries are also allowed.
Building on these ideas, Gonen et al.~\cite{GRS11} considered the problem of approximating the number of 
$s$-stars in a graph. Their algorithm only assumed neighbor and degree queries. 
In ~\cite{counting_stars_edge_sampling}, Aliakbarpour, Biswas, Gouleakis, Peebles, and Rubinfeld and Yodpinyanee considered the same problem of estimating the number of $s$-stars in the  augmented edqu queries model, which allowed them to circumvent the lower bounds of~\cite{GRS11} for this problem. In~\cite{ERS19-sidma}, Eden, Ron and Seshadhari again considered this problem, and presented improved bound for the case where the graph $G$ has bounded arboricity.
In~\cite{ELRS, ERS18, ERS20-soda}, Eden, Ron and Seshadhri considered the problems of estimating the number of $k$-cliques in general and in bounded arboricity graphs, in the general graph query model, and gave matching upper and lower bounds. 
In~\cite{TetekTriangles}, T\v{e}tek considers both the general and the augmented query models for approximately counting triangles in the super-linear regime.
 In~\cite{Eden2018-approx}, Eden and Rosenbaum presented a framework for proving motif counting lower bounds using reduction from communication complexity, which allowed them to reprove the lower bounds for all of the variants listed above.

In~\cite{ER18,ERR19}, Eden and Rosenbaum and Ron has initiated the study of sampling motifs (almost) uniformly at random. They considered the general graph query model, and  presented upper and matching lower bounds up to $\poly(\log n/1/\eps)$ factors,  for the task of sampling edges almost uniformly at random, both for general graphs and bounded arboricity graphs.
Recently, T\v{e}tek and Thorup~\cite{tetek} presented an improved analysis which reduced the dependency in $\eps$ to $\log(1/\eps)$. This result implies that for all practical applications, the edge sampler is essentially as good as a truly uniform sampler.
They also proved that given access to what they refer to as hash-based neighbor  queries, there exists an algorithm that samples from the exact uniform distribution.
The authors of \cite{ERR19} also raised the question of approximating vs. sampling complexity, and gave preliminary results that there exists motifs $H$ (triangles) and classes of graphs $G$ (bounded arboricity graphs)  in which approximating the number of $H$'s is strictly easier than sampling an almost uniformly distributed copy of $H$. This question was very recently resolved by them, proving a separation for the tasks of counting and uniformly sampling cliques in bounded arboricity graphs~\cite{eden2020almost}.

A significant result was achieved recently, when Assadi, Kapralov and Khanna gave an algorithm for approximately counting the number of copies of any given general $H$, in the edge queries augmented query model. They also gave a matching lower bound for the case that $H$ is an odd cycle. 
Fichtenberger, Gao and Peng presented a cleaner algorithm with a mich simplified analysis for the same problem, that also returns a uniformly distributed copy of $H$.


Another query model  was suggested recently by Beame et al.~\cite{beame2017edge},
which assumes access to only \emph{independent set} (IS) queries or \emph{bipartite independent set} (BIS) queries . Inspired by group testing, IS queries allow to ask whether a given set $A$ is an independent set, and BIS queries allow to ask whether two sets $A$ and $B$ have at least one edge between them. 
In this model they considered the problem of estimating the average degree and gave an $O(n^{2/3})\cdot \poly(\log n)$
algorithm using IS queries, and $\poly(\log n)$ algorithm using BIS queries. Chen, Levi and Waingarten~\cite{Chen-IS-edges} later improved the first bound to $O(n/\sqrt m)\cdot \poly(\log n)$ and also proved it to be optimal.

\end{document}